\newtheorem{theorem}{Theorem}[section]
\newtheorem{definition}[theorem]{Definition}
\newtheorem{lemma}[theorem]{Lemma}
\newtheorem{proposition}[theorem]{Proposition}
\newtheorem{corollary}[theorem]{Corollary}
\newtheorem{claim}[theorem]{Claim}
\newtheorem{remk}[theorem]{Remark}
\newcommand{\grant}[1]{}
\newcommand{\aaron}[1]{}
\newcommand{\paolo}[1]{}
\newcommand{\GI}[0]{{\textsc{GraphIsomorphism}}\xspace}
\newcommand{\F}[0]{{\mathbb{F}}}
\newcommand{\CFI}[0]{{Cai-F\"urer-Immerman }}
\title{Graph Isomorphism and the Lasserre Hierarchy}
\author{Paolo Codenotti}
\affil{Google Inc.}
\author{Grant Schoenebeck}
\affil{University of Michigan}
\author{Aaron Snook}
\affil{University of Michigan}
\begin{document}
\maketitle

\begin{abstract}
In this paper we show lower bounds for a certain large class of algorithms solving the Graph Isomorphism problem, even on expander graph instances.  Spielman~\cite{Spielman96} shows an algorithm for isomorphism of strongly regular expander graphs that runs in time $\exp\{\tilde{O}(n^\frac{1}{3})\}$ (this bound was recently improved to $\exp\{\tilde{O}(n^{\frac{1}{5}})\}$~\cite{BabaiCSSW13}).  It has since been an open question to remove the requirement that the graph be strongly regular.  Recent algorithmic results show that for many problems the Lasserre hierarchy works surprisingly well when the underlying graph has expansion properties.  Moreover, recent work of Atserias and Maneva~\cite{AtseriasM12} shows that $k$ rounds of the Lasserre hierarchy is a generalization of the $k$-dimensional Weisfeiler-Lehman algorithm for Graph Isomorphism.  These two facts combined make the Lasserre hierarchy a good candidate for solving graph isomorphism on expander graphs.  Our main result rules out this promising direction by showing that even $\Omega(n)$ rounds of the Lasserre semidefinite program hierarchy fail to solve the Graph Isomorphism problem even on expander graphs.
\end{abstract}

\newpage

\setcounter{page}{1}
\section{Introduction}
We analyze the Lasserre relaxations of the quadratic program for the \GI problem. In particular we show that a linear number of levels of the Lasserre hiearchy are required to distinguish certain classes of pairs of non-isomorphic graphs. This implies that a large class of semidefinite programs fails to provide a sub-exponential time algorithm for \GI .   Our construction uses the construction of Cai, F\"urer, and Immerman~\cite{CaiFI89}, which was used to show that \GI cannot be solved in polynomial time by the $k$-dimensional Weisfeler-Lehman process ($k$-WL from now on, see Appendix~\ref{sec:Partition-Refinement} for more details), an algorithm subsuming a wide class of combinatorial algorithms. Our result therefore implies that, in the worst case, the Lasserre hierarchy does not out-perform these combinatorial algorithms.

The main motivation to study the \GI problem is its unique complexity-theoretic status. \GI is in NP $\cap$ coAM (\cite{GolreichMW91} cf.~\cite{BabaiM88}), and hence if \GI were NP-Complete, then the polynomial hierarchy would collapse to the second level (\cite{BoppanaHZ87}, cf.~\cite{BabaiM88}). On the other hand, the best known running-time for an algorithm for \GI is $\exp(O(\sqrt{n\log(n)}))$~\cite{BabaiL83}. \GI is the only natural problem with this status.  As such, we hope that studying the Lasserre of \GI  can also give unique insights into the power of semidefinite programs.

Recent algorithmic results show that when the underlying graph has expansion properties the Lasserre hierarchy works surprisingly well for problems including {\textsc{UniqueGames}}\cite{AroraKKSTV08,BarakRS11,GuruswamiS11},
 {\textsc{GraphColoring}}~\cite{AroraG11}, {\textsc{SparsestCut}}~\cite{GuruswamiS11,GuruswamiS13}, {\textsc{MinBisection}}, {\textsc{EdgeExpansion}}, and {\textsc{SmallSetExpansion}}~\cite{GuruswamiS11}.  The special case of expanders is particularly interesting for the \GI problem since expansion properties have been leveraged to obtain better bounds for the special class of strongly regular graphs.\footnote{A graph is \emph{strongly regular} with parameters $(k, \lambda, \mu)$ if it is regular of degree $k$, and every pair of vertices $(u, v)$ has $\lambda$ common neighbors if $u$ and $v$ are adjacent, and $\mu$ common neighbors otherwise.} This is a class of highly structured graphs, which had been thought to be (and might still be) a major bottleneck to the \GI problem.
Using the $k$-WL algorithm and bounds on the expansion of the graphs, Spielman was able to obtain an algorithm with running time $\exp(O(n^{1/3}\text{poly}\log(n)))$~\cite{Spielman96}. More recently this has been improved to $\exp(O(n^{1/5}\text{poly}\log(n)))$~\cite{BabaiCSSW13}. Both algorithms critically make use of expansion properties of certain classes of strongly regular graphs, raising the question of whether there are faster algorithms for isomorphism of expander graphs. We show that the graphs in our construction are expanders, and hence our result implies that the Lasserre hierarchy cannot solve \GI in sub-exponential time even in this special case.

Our result is an extension of the work of Atserias and Maneva, who showed that $k$-WL is equivalent to $k\pm 1$ rounds of the Sherali-Adams hierarchy~\cite{AtseriasM12} (a large class of linear programs).  This extension is not trivial as for both {\textsc{Maxcut}} and {\textsc{VertexCover}} linear program hierarchies have been shown to fail on expanding graphs where very basic semidefinite programs succeed~\cite{STT-07b,STT-2007a,CharikarMM-07}.

The main technical ingredients of the proof consist in writing the isomorphism problem as a set of linear constraints over $\mathbb{F}_2^n$, and then relating the cut-width of the graphs to the width of resolution proofs for the constraints. Finally we construct vectors which are solutions to the Lasserre relaxations using Schonebeck's construction for partial assignments to formulas with no small width resolution proof~\cite{S08}.

Please see Appendix~\ref{appendix:background} for more background on  \GI.

\paragraph{Independent Results}
Similar results were shown independently by O'Donnell, Wright, Wu, and Zhou~\cite{ODonnellWWZ2014}.  Through a construction also based on \CFI~graphs, they also show that $\Omega(n)$ rounds of Lasserre fail to solve \GI. Their reduction has the additional property which allows them to extend the results to the ``Robust Graph Isomorphism" problem, where they show that Lasserre also fails to distinguish graphs that are ``far" from being isomorphic -- any permutation violates many edges.

%



\section{Background and Notation}
\paragraph{Functions}
We use the notation $[k]$ to denote the set $\{1, \ldots, k\}$.  Let $\mathcal{P}(S)$ denote the powerset of a set $S$.
Given a partial function $f: S \rightarrow T$, we define $dom(f)$ to be $\{x \in S \mid f(x) \text{ is defined }\}$
For a function $f: S \rightarrow T$, we denote the range of $f$ on $S' \subseteq S$ by $f(S) \subseteq T$. We denote the preimage of f
on a set $T' \subseteq T$ as $f^{-1}(T')$. (The preimage of a set $T' \subseteq T$ is the set $\{ x \in S \mid f(x) \in T'\}$).
Given a function $f : S \rightarrow \{0, 1\}$, with $S$ a finite set, we define the \emph{parity} of $f$ to be the parity of $|\{ x \in S \mid f(x) = 1\}|$.
We will denote a function as ${\bf 0}$ (the zero function) if it maps all its inputs to 0.  A function $f: \{0, 1\}^n \rightarrow \{0, 1\}$ is a \emph{$k$-junta} if it only depends on $k$ variables.  That is there exist $k$ integers $i_1, \ldots, i_k$ and a function $g: \{0, 1\}^k \rightarrow \{0, 1\}$ such that for all $x \in \{0, 1\}^n$ we have $f(x_1, x_2, \ldots, x_n) = g(x_{i_1}, x_{i_2}, \ldots, x_{i_k})$.  \begin{definition} \label{def:h} For a partial function $\alpha: [n] \rightarrow \{0, 1\}$, we define the indicator function $h_\alpha: \{0, 1\}^n \rightarrow \{0, 1\}$ as \[
h_\alpha(x) =
\begin{cases}
1 & \text{if for all } i \in [n] \text{ such that } \alpha(i) \text{ is defined, } x_i = \alpha(i)\\
0 & \text{otherwise}
\end{cases}\]  Note that $h_\alpha$ is a $|dom(\alpha)|$-junta.
\end{definition}

\paragraph{Graphs}
An undirected \emph{graph} $G$ is a tuple $(V, E)$ such that $V$ is a finite set of \emph{vertices}, and $E \subseteq {V \choose 2}$. We
refer to the elements of $E$ as \emph{edges}.
Given a graph $G$, we call $V(G)$ the set of vertices of $G$, and $E(G)$ the set of edges of $G$. We will represent an edge of $E$ as a tuple of its two vertices. Given a graph $G$ and $I, J \subseteq V(G)$, let $E(I, J) \subseteq E(G) = \{(i, j) \mid i \in I, j \in J\}$.
Also, we define 
the set of \emph{neighbors} of $v$,  as $\Gamma(v) = \{u \mid (v, u) \in E(G)\}$.
Given a set $S \subseteq G$, let the \emph{induced subgraph of G on S} be the graph $(S, E')$ where $E' = \{(u,v) \in E(G) \mid u, v \in S\}$.
An undirected \emph{colored graph} $G$ is a tuple $(V, E, c)$, where $V$ and $E$ are the same as in uncolored graphs, and $c : V(G) \rightarrow C$ is a function
mapping vertices of G to a finite set of \emph{colors}. Also, given a colored graph $G$ and a vertex i, we define $C(i)$ as the set of vertices with the same color as i.
The \emph{expansion} of a set $S \subseteq V(G)$ is equal to $Ex(S) = \frac{E(S, V(G) \setminus S)}{\min\{|S|, |V(G) \setminus S|\}}$.
The expansion of a graph is $Ex(G) = \min_{S \subseteq V(G)} Ex(S)$.  \begin{theorem}\label{thm:three-reg-exp}\cite{Ellis}  A random 3-regular graph has expansion greater than $0.54$ with probability $1 - o(1)$.\end{theorem}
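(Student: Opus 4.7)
The plan is to use the Bollobás configuration model: generate a random 3-regular multigraph by placing $3$ half-edges at each of the $n$ vertices and matching up the resulting $3n$ half-edges uniformly at random. Since this random multigraph is simple (no loops or parallel edges) with probability bounded below by a positive constant independent of $n$, any property that holds with probability $1 - o(1)$ in the configuration model also holds with probability $1 - o(1)$ in the uniform random simple 3-regular graph (conditioning on simplicity inflates failure probabilities by at most a constant factor).

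Next, for a fixed subset $S \subseteq V$ of size $s \leq n/2$, I would let $X_S$ be the number of edges in the cut $(S, V \setminus S)$ and compute its distribution directly from the pairing model. Specifically, the probability that exactly $k$ of the $3s$ half-edges incident to $S$ are matched across the cut is
$$P[X_S = k] = \frac{\binom{3s}{k}\binom{3(n-s)}{k}\, k!\,(3s-k-1)!!\,(3(n-s)-k-1)!!}{(3n-1)!!},$$
where $k$ must have the same parity as $3s$. The goal is then to apply a union bound over all $\binom{n}{s}$ subsets of size $s$ and all values of $k$ satisfying $k \leq 0.54\, s$, and to show that
$$\sum_{s=1}^{n/2} \binom{n}{s}\sum_{k \leq 0.54 s} P[X_S = k] = o(1).$$

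The main obstacle is the quantitative estimate in the linear regime $s = \Theta(n)$, where $\binom{n}{s}$ contributes a factor of order $e^{n H(s/n)}$ with $H$ the binary entropy. One must bound the double factorials by Stirling's approximation, combine the resulting exponents, and verify that the constant $0.54$ is small enough to ensure the combined exponent is strictly negative for every $s/n \in (0, 1/2]$. The small-$s$ regime (say $s = O(\log n)$) and the balanced regime ($s \approx n/2$) should be handled separately, since the entropy factor degenerates at the boundaries and slightly different estimates are needed; in particular, for small $s$ one uses $\binom{n}{s} \leq (en/s)^s$ and the trivial bound that a cut of size at most $0.54 s < 2s$ forces at least $0.46\cdot 3s/2$ of the $3s$ half-edges to be paired within $S$, an event of probability at most $(O(s/n))^{\Omega(s)}$.

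Finally, I would collect the three regimes, verify that each contributes $o(1)$ to the failure probability, and translate from the configuration model back to the uniform distribution on simple 3-regular graphs using the positive simplicity probability noted at the start. This gives the claim that $Ex(G) > 0.54$ with probability $1 - o(1)$. I expect the calibration of the constant $0.54$ against the entropy function to be the single most delicate computational step; it is essentially the reason this particular numerical threshold appears in the statement.
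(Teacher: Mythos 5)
The paper does not prove this theorem; it is cited (to Ellis), so there is no in-paper argument to compare against. Your plan---the configuration model, the exact pairing formula for the cut-size distribution of a fixed set, and a first-moment union bound over set sizes, with the translation back to uniform simple graphs via the bounded-away-from-zero simplicity probability---is the standard route and is almost certainly what the cited source does. The structural pieces in your sketch are all correct. The problem is the step you yourself flag as the ``single most delicate computational step'': it is not merely delicate, it fails, and by an exponential margin, in the balanced regime $s \approx n/2$. There each side carries $m = 3n/2$ half-edges, and working out your double-factorial formula via Stirling shows that the cut size is, up to polynomial factors, distributed like $\mathrm{Binomial}(3n/2, \tfrac12)$. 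Thus $\mathbb{P}[\text{cut} \le \alpha \cdot \tfrac{3n}{2}]$ decays like $\exp\{-\tfrac{3n}{2}\,D(\alpha\,\|\,\tfrac12)\}$, while $\binom{n}{n/2}$ contributes $\exp\{n\ln 2\}$. The union bound closes only when $D(\alpha\,\|\,\tfrac12) > \tfrac{2}{3}\ln 2$, which forces $\alpha$ just above $0.06$ and hence a cut of at most about $0.09n$, i.e.\ expansion about $0.18$, not $0.54$. Demanding expansion $0.54$ at a bisection means a cut of $0.27n$, and then $\binom{n}{n/2}\cdot \mathbb{P}[\text{cut} < 0.27n]$ grows like $e^{0.36 n}$.

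This is not fixable by sharpening the estimates. For the definition used in this paper, $Ex(G)=\min_S |E(S,\bar S)|/\min\{|S|,|\bar S|\}$, the constant $0.54$ cannot be attained by \emph{any} large $3$-regular graph: every $3$-regular graph on $n$ vertices admits a bisection of width at most $n/4 + o(n)$ (Kostochka and Melnikov; Monien and Preis later pushed this to roughly $n/6$), capping the edge expansion at $1/2 + o(1)$, and more plausibly $1/3+o(1)$. So either the cited source uses a different normalization (e.g.\ vertex expansion, or expansion restricted to sublinear sets), or the constant has been garbled in transcription. What your method actually yields is a constant near $0.18$. That is still an absolute positive constant, which is all the paper really needs downstream: the main theorem only uses the expansion bound to get cut-width $\Omega(n)$, so the argument survives with adjusted constants, but you should not expect the union bound to deliver $0.54$, and you should not spend effort trying to make it.
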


The \emph{cutwidth} of a graph $G$ is defined to be $$CW(G) = \min_{\pi} \max_i E(\pi([i]), V(G) \setminus \pi([i]))$$ where $\pi: V(G) \rightarrow [n]$ is a permutation that orders the vertices; so that $\pi([i])$ denotes the first $i$ variables in the ordering.  We define the width of a graph $G$ as follows: $$W(G) = \max_{\Omega} \min_{(S_1, S_2) \in \partial\Omega} \max_{i \in \{1, 2\}} E(S_i, V(G) \setminus S_i)$$ where $\Omega$ ranges over monotone sets of $\mathcal{P}(V(G))$ (i.e. $S_1 \subseteq S_2$ and $S_2 \in \Omega \Rightarrow S_1 \in \Omega$) such that $\emptyset \in \Omega$ and $V(G) \not\in \Omega$; and $(S_1, S_2) \in \partial\Omega$ if $S_1 \in \Omega$, $S_2 \not\in \Omega$ and $S_1 = S_2 \cup \{i\}$ for some element $i$.
\begin{theorem} \label{theorem:CW(G)=W(G)} \cite{MontanariS09} For any graph $G$, $CW(G) = W(G)$\end{theorem}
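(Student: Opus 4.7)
The plan is to prove the equality $CW(G) = W(G)$ by two inequalities, noting that the stated definition of $\partial\Omega$ must read $S_2 = S_1 \cup \{i\}$ (rather than $S_1 = S_2\cup\{i\}$) to be consistent with $\Omega$ being downward-closed and the boundary condition $S_1\in\Omega,\ S_2\notin\Omega$.

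For the easy direction $W(G)\le CW(G)$, I would use a chain argument. Fix an ordering $\pi$ achieving $\max_i |E(\pi([i]),V\setminus\pi([i]))|=CW(G)$. For any monotone family $\Omega$ with $\emptyset\in\Omega$ and $V\notin\Omega$, the chain $\emptyset=\pi([0])\subsetneq\pi([1])\subsetneq\cdots\subsetneq\pi([n])=V$ begins inside $\Omega$ and ends outside. Downward-closedness forces the chain to leave $\Omega$ exactly once: there is a unique index $i^*$ with $\pi([i^*-1])\in\Omega$ and $\pi([i^*])\notin\Omega$. The pair $(\pi([i^*-1]),\pi([i^*]))$ lies in $\partial\Omega$, and both members are prefixes of $\pi$, so each has edge boundary at most $CW(G)$. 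Hence $\min_{(S_1,S_2)\in\partial\Omega}\max_{i\in\{1,2\}}|E(S_i,V\setminus S_i)|\le CW(G)$, and taking the supremum over $\Omega$ gives $W(G)\le CW(G)$.

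The substantive direction is $CW(G)\le W(G)$, which I expect to be the main obstacle. My plan is the contrapositive: assuming $CW(G)>k$, build a monotone $\Omega$ witnessing $W(G)>k$. Let $\mathcal G_k$ be the smallest family of subsets of $V$ containing $\emptyset$ and closed under the rule ``$T\in\mathcal G_k$, $v\notin T$, $|E(T\cup\{v\},V\setminus(T\cup\{v\}))|\le k$ $\Rightarrow$ $T\cup\{v\}\in\mathcal G_k$.'' Elements of $\mathcal G_k$ are exactly those sets reachable by a chain of single-vertex additions each keeping the running cut at most $k$, so $CW(G)>k$ is equivalent to $V\notin\mathcal G_k$. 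I would then take $\Omega$ to be the downward closure of $\mathcal G_k$, which is monotone by construction, contains $\emptyset$, and excludes $V$ (since $V\subseteq T\in\mathcal G_k$ would force $T=V$).

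The hard step is then to verify that every boundary pair $(S_1,S_1\cup\{v\})$ of this $\Omega$ has $\max(|\partial S_1|,|\partial(S_1\cup\{v\})|)>k$. My approach is to pick, for each boundary pair, a $\mathcal G_k$-minimal witness $T\supseteq S_1$ and to track the constructing chain for $T$: since $S_1\cup\{v\}\notin\Omega$, no member of $\mathcal G_k$ contains $S_1\cup\{v\}$, so in particular $T\cup\{v\}\notin\mathcal G_k$, forcing $|\partial(T\cup\{v\})|>k$. The technical work is to translate this large cut at $T\cup\{v\}$ into a large cut at $S_1$ or $S_1\cup\{v\}$, which should follow by induction along the minimal chain building $T$, peeling back vertices not belonging to $\{v\}\cup S_1$ and verifying that the cut bound is preserved step by step. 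An alternative, if this direct combinatorial argument gets unwieldy, is a game-theoretic route: view the problem as a finite perfect-information game between an ``orderer'' adding vertices while keeping cuts $\le k$ and an ``adversary'' producing a monotone family, and invoke determinacy, extracting from a non-losing adversary strategy the required $\Omega$ and from a non-losing orderer strategy the required ordering.
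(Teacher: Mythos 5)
Your correction of the definition (the boundary condition should read $S_2 = S_1\cup\{i\}$, as otherwise $\partial\Omega$ would always be empty for a downward-closed $\Omega$) is right, and the paper does indeed carry this typo. Your first direction, $W(G)\le CW(G)$, is correct: for a fixed optimal ordering $\pi$, the chain of prefixes starts in any admissible $\Omega$ and ends outside it, downward-closedness forces exactly one crossing, and that crossing pair consists of two prefixes, hence has both boundaries bounded by $CW(G)$.

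The hard direction $CW(G)\le W(G)$ is not established. You take $\Omega$ to be the downward closure of the family $\mathcal{G}_k$ of sets reachable from $\emptyset$ by single-vertex additions with running cut at most $k$, and you need: for every boundary pair $(S_1,S_1\cup\{v\})$ of $\Omega$, at least one of $S_1$, $S_1\cup\{v\}$ has cut exceeding $k$. Your sketch for this step---pick a minimal witness $T\in\mathcal{G}_k$ with $S_1\subseteq T$, note $|\partial(T\cup\{v\})|>k$, and then ``peel back'' vertices along the chain building $T$---does not go through as described, and this step is precisely the substance of the theorem. Concretely, submodularity of the cut function with $A=T$, $B=S_1\cup\{v\}$ gives
\[
|\partial(T\cup\{v\})|\;\le\;|\partial T|+|\partial(S_1\cup\{v\})|-|\partial S_1|\;\le\; 2k-|\partial S_1|,
\]
which is $\le k$ only when $|\partial S_1|\ge k$. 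Since you have only $|\partial S_1|\le k$, this leaves a genuine gap whenever $|\partial S_1|<k$, and you give no argument that closes it. More fundamentally, it is not even clear that the downward closure of $\mathcal{G}_k$ is the right family: what would make it work is a claim of the flavour ``if $|\partial S|\le k$ and $S$ is contained in some member of $\mathcal{G}_k$, then $S\in\mathcal{G}_k$,'' and proving such a ``no recontamination''/monotonicity statement is exactly the nontrivial content of min--max theorems of this type (cf.\ the monotonicity results of LaPaugh and of Bienstock--Seymour for graph-searching parameters). Your fallback of appealing to determinacy of a finite game also does not by itself produce a monotone family $\Omega$ with the boundary property; one still has to argue that a non-losing adversary strategy can be converted into such an $\Omega$. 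Note that the paper itself does not prove this theorem---it cites it from \cite{MontanariS09}---so the burden on a from-scratch proof is to supply precisely the argument you have left as a sketch.
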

It will be useful to state the results in terms of the cut-width.  However, the proofs will use the graph width.  This theorem tells us they are the same.

\begin{definition}  \label{def:k,t-stretch}
A graph $H$ is said to be a \emph{$(k, t)$-stretching} of $G$ if there exists an injective function $g: V(G) \rightarrow V(H)$  such that:
\begin{itemize}
\item Each vertex $v \in V(H) \setminus g(V(G))$ has degree 2.
\item $(u, v) \in E(G)$ iff there exists a unique path $P(u, v) = (g(u) = p(u, v)_0, p(u, v)_1, \ldots, p(u, v)_{\ell} = g(v))$ from $g(u)$ to $g(v)$ in $H$ such that for $1 \leq i \leq \ell -1$ we have $p(u, v)_i \not\in g(V(G))$.
\item For every simple path $p =  (p_0, p_1, \ldots, p_{\ell})$ in $H$ where $p_0, p_{\ell} \in g(V(G))$ and $p_i \not\in g(V(G))$ for all $1 \leq i \leq \ell -1$,  the length $\ell$ of this path is at most $k$.
\item For every vertex $v \in V(G)$ the number of vertices in $\bigcup_{u \in \Gamma_G(v)} P(v, u)) \setminus g(V(G))$ is at most $t$.
\end{itemize}
We say that $g$ \emph{witnesses} the fact that $H$ is as $(k, t)$ stretching of $G$.
\end{definition}

\paragraph{Isomorphisms and Permutations}
Two (undirected) graphs $G$ and $H$ are said to be \emph{isomorphic} if and only if there exists a bijective mapping $\pi: V(G) \rightarrow V(H)$ such that  $(u, v) \in E(G) \iff (\pi(u), \pi(v)) \in E(H)$. If $G$ and $H$ are colored, we further require that $\forall v \in V(G)$, $c(v) = c(\pi(v))$.
We then say that $\pi$ is an \emph{isomorphism} between G and H.
For (not necessarily isomorphic) $G$ and $H$, we define a \emph{partial isomorphism} $\sigma: V(G) \rightarrow V(H)$ as an injective partial mapping between $V(G)$ and $V(H)$ such that $\sigma$ is an isomorphism between the induced subgraph of $G$ on $dom(\sigma)$ and the induced subgraph of $H$ on $\sigma(dom(\sigma))$.
We denote the partial isomorphism $\emptyset$ as the partial isomorphism with $dom(\sigma) = \emptyset$. We denote by $i \rightarrow i'$  the partial isomorphism that
only maps $i \in V(G)$ to $i' \in V(H)$.

If $\sigma_1$ and $\sigma_2$ are partial permutations, we say $\sigma_1$ and $\sigma_2$ are \emph{consistent} if 1)
$i \in dom(\sigma_1) \cap dom(\sigma_2) \Rightarrow \sigma_1(i) = \sigma_2(i)$ and 2)  $\sigma_1(i) = \sigma_2(j) \Rightarrow i = j$.

Let $PP(G)$ be the set of partial permutations.  Let $P(G) = PP(G) \cup \{\bot\}$.

We define a function $\wedge: P(G) \times P(G) \rightarrow P(G)$ as follows:
$ \sigma_1 \wedge \sigma_2 = \bot$ if 1) $\sigma_1 = \bot$ 2)  $\sigma_2 = \bot$ or 3)  $\sigma_1$ and $\sigma_2$ are not consistent.
Otherwise $$\sigma_1 \wedge \sigma_2(i) = \left\{ \begin{array}{cc} \sigma_1(i)   & i \in dom(\sigma_1) \\
                                                            \sigma_2(i)   & i \in dom(\sigma_2) \setminus  dom(\sigma_1)  \\
                                                            \mbox{undefined}  &  i \not\in dom(\sigma_2) \cup  dom(\sigma_1)
                                                    \end{array}\right.$$

\paragraph{Linear constraints}
A \emph{linear constraint} on $k$ variables over $\mathbb{F}_2$ is a constraint of the form $\left(\bigoplus_{i \in S} x_i = b \right)$ for some set $S \subseteq [n]$ and $b \in \{0, 1\}$. 
%
Given two linear constraints $X_1 = \left(\bigoplus_{i \in I} x_i = b_i\right)$ and $X_2 = \left(\bigoplus_{j \in J} x_i = b_j\right)$, we
abuse notation so that $X_1 \oplus X_2$ denotes the constraint $\left(\bigoplus_{i \in I \triangle J} x_i = b_i \oplus b_j\right)$.  Similarly we say that $X_1 = X_2$ if $I = J$ and $b_i = b_j$.

\paragraph{Fourier Analysis}


Those interested in a broader view of Fourier Analysis are referred to \cite{ODonnell2013}.
Given a function $f: \{0, 1\}^n \rightarrow \{0, 1\}$, we define $\hat{f}: \mathcal{P}([n]) \rightarrow [-1, 1]$
by \[
\hat{f}(I) = \mathbb{E}_x[f(x)\cdot(-1)^{\bigoplus_{i \in I} x_i}]
\]

We will use the following facts:
\begin{itemize}
\item For any functions $f, g : \{0, 1\}^n \rightarrow \{0, 1\}$, we have $\widehat{f+g}(I) = \hat{f}(I) + \hat{g}(I)$
\item For any functions $f, g : \{0, 1\}^n \rightarrow \{0, 1\}$, we have $\widehat{f\cdot g}(I) = \sum_{J \subseteq [n]} \hat{f}(J)\hat{g}(I \triangle J)$
\item Let $f:\{0, 1\}^n \rightarrow \{0, 1\}$ be a $k$-junta, then for any $|I| > k$, $\widehat{f}(I) = 0$
\item For partial function $\alpha: [n] \rightarrow \{0, 1\}$ defined on $I \subseteq [n]$, for $J \subseteq I$, $\hat{h}_{\alpha}(J) = (-1)^{\bigoplus_{i \in J} \alpha(i)} \frac{1}{2^{|I|}}$, and for $J \not\subseteq I$,  $\hat{h}_I(J) = 0$, where $h_\alpha$ is defined as in Definition~\ref{def:h}. \\
\end{itemize}

\paragraph{Probability}

If $\{A_1, A_2...\}$ is a sequence of events under distinct probability spaces, we say that $A_n$ occurs with \emph{high probability}
if $\mathbb{P}(A_n) \rightarrow 1$ as $n \rightarrow \infty$.

%

\subsection{Quadratic Program and Lasserre Relatation of \GI}

The linear algebraic formulation of Graph Isomorphism is: given graphs G and H each with $n$ vertices, represented by adjacency matrices A and B respectively,
does there exist a permutation matrix such that $X^{\top}BX = A$? If we label the vertices of both $G$ and $H$ with the integers from 1 to n, this can be represented by the quadratic program
\begin{equation} \label{eq:las}
\forall i, j \:\:\: \sum_{i', j' \in [1..n]} x_{i\rightarrow i'}x_{j' \rightarrow j}B_{i'j'} = A_{ij}
\end{equation}
\begin{equation*}
\forall i, i'  \:\:\:  x_{i \rightarrow i'}(1 - x_{i \rightarrow i'}) = 0
\end{equation*}
\begin{equation*}
\forall i  \:\:\: \sum_{i' \in [1..n]} x_{i \rightarrow i'} = 1
\end{equation*}
\begin{equation*}
\forall i'  \:\:\:  \sum_{i \in [1..n]} x_{i \rightarrow i'} = 1
\end{equation*}
where $x_{i \rightarrow i'}$ is an indicator variable that is 1 if vertex i in $G$ maps to vertex i' in $H$ and 0 otherwise.
We consider the Lasserre hierarchy of this program, which relaxes the scalar variables to vectors instead. These are denoted by $v_{\sigma}$ where
$\sigma$ is a partial isomorphism from $G$ to $H$; a simple example of such an $\sigma$ is the single mapping $i \rightarrow i'$.

We consider the Lasserre hierarchy of relaxations of this quadratic program.
We define $\Sigma_r$ as the set of all partial isomorphisms from $G$ to $H$ with domain at most r.
The constraints on the $r$th level are as follows:


\begin{subequations}
\begin{equation} \label{eq:l1}
||v_\emptyset|| = 1
\end{equation}
\begin{equation} \label{eq:l2}
\forall i, j \sum_{i', j'} \langle v_{i \rightarrow i'}, v_{j \rightarrow j'} \rangle B_{i'j'}= A_{ij}
\end{equation}
\begin{equation} \label{eq:l3}
\forall \sigma_1, \sigma_2 \in \Sigma_r \text{ s.t. } \sigma_1 \wedge \sigma_2 = \sigma_1' \wedge \sigma_2', \:\:\:  \langle v_{\sigma_1}, v_{\sigma_2} \rangle = \langle v_{\sigma'_1}, v_{\sigma'_2} \rangle
\end{equation}
\begin{equation} \label{eq:l4}
\forall \sigma \in \Sigma_r, \forall i \in V(G),  \:\:\:  v_\sigma = \sum_{i' \in V(H)} v_{\sigma \wedge (i \rightarrow i')}
\end{equation}
\begin{equation} \label{eq:l5}
\forall \sigma \in \Sigma_r, \forall i' \in V(G),  \:\:\:  v_\sigma = \sum_{i \in V(H)} v_{\sigma \wedge (i \rightarrow i')}
\end{equation}
\end{subequations}

It can be verified that if $\pi_1...\pi_k$ are each isomorphisms from G to H, $p_1, \ldots, p_k$ is a probability distribution (meaning that $\forall i\text{, }
p_1 \geq 0$
and $\sum_{i=1}^k p_i = 1$), then the following choice of vectors will satisfy \eqref{eq:l1}-\eqref{eq:l5}: \\
For $\sigma$ a partial isomorphism, \[
v_{\sigma}(i) = \begin{cases}
\sqrt{p_i} & \text{ if for each } j \text{ with } \sigma(j) \text{ defined }, \pi_i(j) = \sigma(j)'\\
0 & \text{ otherwise }
\end{cases}
\]
This shows that indeed, the Lasserre SDP is a relaxation of the quadratic program.

\subsection{The \CFI~Graphs} 


In this section we show how to apply the \CFI~gadget to a 3-regular graph.  Our main result will be based on deciding \GI~for these types of graphs.


The following definition presents the gadgets which will be used to construct the \CFI~graphs.

\begin{definition}   Given a vertex $v$ with degree 3 and neighbors $u_1, u_2, u_3$. \footnote{For the sake of exposition, we fix an ordering of the three neighbors of each vertex.}  We define the colored graph $CFI(v)$ as follows, from \cite{CaiFI89}:
\begin{itemize}
\item $V(CFI(v)) = M(v) \cup E(v)$
\item $M(v)$, or the ``middle vertices'' of $CFI(v)$, is a set of 4 vertices each labeled with one of the 4 different even parity functions from $\Gamma(v) \rightarrow \{0, 1\}$.  We will denote these vertices $v_{b_{u_1}, b_{u_2}, b_{u_3}}$ where  $b_{u_i} \in \{0, 1\}$ for $i = \{1, 2, 3\}$ and $b_{u_1} \oplus b_{u_2} \oplus b_{u_3} = 0$.  The color of these vertices is $c(v)$.
\item $E(v)$, or the ``edge vertices'' of $CFI(v)$, is a set of 3 pairs of vertices: $(v, u_i)_b$ for a different $i \in \{1, 2, 3\}$ and $b \in \{0, 1\}$.  For each $i \in \{1, 2, 3\}$, the pair $(v, u_i)_0$, $(v, u_i)_0$ is colored $(c(v), u_i)$ (a new color, determined by $c(v)$ and $u_i$).
\item $E(CFI(v)) = \{((v, u_i)_b, v_{b_{u_1}, b_{u_2}, b_{u_3}}) \mid i \in \{1, 2, 3\}, b = b_{u_i}\}$; stated differently, $v_{b_{u_1}, b_{u_2}, b_{u_3}}$ is connected to $(v, u_1)_{b_{u_1}}$,
$(v, u_2)_{b_{u_2}}$, and $(v, u_3)_{b_{u_3}}$.
\end{itemize}
\end{definition}

The below figure illustrates $CFI(v)$, the graph gadget, zoomed in at a particular vertex.

\vspace{10mm}

\begin{figure}[h] \label{fig:CFI-basic}
\centering
\begin{tikzpicture}[xscale=0.7, yscale=0.6]

   \def\myellipse{(0, 3.5) ellipse (1.6cm and 0.8cm)};

   \tikzstyle{mid}=[circle,draw=red!75,fill=red!50,minimum size=10mm]
   \tikzstyle{e1}=[circle,draw=red!75,fill=red!35,minimum size=7mm]
   \tikzstyle{e2}=[circle,draw=red!75,fill=red!20,minimum size=7mm]
   \tikzstyle{e3}=[circle,draw=red!75,fill=red!8,minimum size=7mm]
   \tikzstyle{e4}=[circle,draw=black!75,fill=yellow!10,minimum size=7mm]
   [scale=.4]
  \begin{scope}
  \node[circle,draw=black!75,fill=yellow!10,minimum size=10mm] (u) at (-2.1, 3.5) {u};
  \node[circle,draw=red!75,fill=red!10,minimum size=10mm] (v) at (0, 3.5) {v};
  \end{scope}
  \begin{scope}[xshift=.25cm]
  \node [mid] (m12) at (9,2.5)  {$v_{1, 0, 1}$};
  \node [mid] (m13) at (11,4.5)  {$v_{1, 1, 0}$};
  \node [mid] (m23) at (9,4.5)  {$v_{0, 1, 1}$};
  \node [mid] (m) at (11,2.5)  {$v_{0, 0, 0}$};
  \node [e4] (u1) at (3.5,4.75) {$(v, u)_1$};
  \node [e4] (u2) at (3.5,2.25) {$(v, u)_0$};
  \node [e1] (t1) at (6.5,4.75)  {$(u, v)_1$};
  \node [e1] (f1) at (6.5,2.25) {$(u, v)_0$};
  \node [e2] (t2) at (13.5,-0.25)  {$(u,...)_1$};
  \node [e2] (f2) at (13.5,2.25) {$(u,...)_0$};
  \node [e3] (t3) at (13.5,4.75)  {$(u,...)_1$};
  \node [e3] (f3) at (13.5,7.25) {$(u,...)_0$};
  \foreach \from/\to in {m/f1,m/f2,m/f3,m12/t1,m12/t2,m12/f3,m13/t1,m13/f2,m13/t3,m23/f1,m23/t2,m23/t3, u1/t1, u2/f1}
    \draw (\from) -- (\to);
   \end{scope}

  \draw \myellipse ;
  \draw  (u) -- (v);
  \draw (0.5, 4.0) -- (1.2, 4.5);
  \draw (0.5, 3.0) -- (1.2, 2.5);
  \draw [ultra thick, ->] (0.8, 3.5) -- (3.0, 3.5);

\end{tikzpicture}
\end{figure}


\begin{remk}  \label{obs:color restricted}
Note that the colors of $CFI(v)$ ensure that any automorphisms of $CFI(v)$ must map $M(u)$ to $M(u)$ and must map each exterior pair to itself.
\end{remk}
\paragraph{}
The following claim helps to elucidate an important aspect of this gadget:
\begin{claim} \label{eq:unique auto}
If $\Gamma(v) = \{u_1, u_2, u_3\}$, then for each pair of vertices $v_{b_{u_1}, b_{u_2}, b_{u_3}}, v_{b'_{u_1}, b'_{u_2}, b'_{u_3}}
\in M(v)$, there is a unique automorphism $f$ of $CFI(v)$ such that $f(v_{b_{u_1}, b_{u_2}, b_{u_3}}) = v_{b'_{u_1},b'_{u_2}, b'_{u_3}}$.
\end{claim}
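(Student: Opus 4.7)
The plan is to enumerate the automorphisms of $CFI(v)$ explicitly, parameterized by ``swap vectors'' on the three external pairs, and then show that the constraint of respecting parity on the middle vertices pins down exactly one such vector for each desired pair of middle vertices.

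First, I would invoke Remark~\ref{obs:color restricted}: any automorphism $f$ of $CFI(v)$ must preserve each external pair $\{(v, u_i)_0, (v, u_i)_1\}$ setwise (since these two vertices share a color unique to that pair), and must map $M(v)$ into $M(v)$. So $f$ is determined by (i) a vector $s = (s_1, s_2, s_3) \in \{0,1\}^3$ recording whether the pair for $u_i$ is swapped ($s_i = 1$) or fixed ($s_i = 0$), and (ii) a permutation of $M(v)$. Moreover, the action on $E(v)$ already forces the action on $M(v)$: the unique middle neighbor of the triple $((v,u_1)_{c_1},(v,u_2)_{c_2},(v,u_3)_{c_3})$ (when $c_1\oplus c_2\oplus c_3=0$) is $v_{c_1,c_2,c_3}$, so $f(v_{b_{u_1},b_{u_2},b_{u_3}}) = v_{b_{u_1}\oplus s_1,\, b_{u_2}\oplus s_2,\, b_{u_3}\oplus s_3}$.

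Second, I would determine which swap vectors actually give automorphisms. For the induced image above to land in $M(v)$, its labels must have even parity: $(b_{u_1}\oplus s_1)\oplus(b_{u_2}\oplus s_2)\oplus(b_{u_3}\oplus s_3)=0$. Since $b_{u_1}\oplus b_{u_2}\oplus b_{u_3}=0$ by definition of $M(v)$, this reduces to $s_1\oplus s_2\oplus s_3=0$. Conversely, once $s$ has even weight, the candidate map is a bijection on $V(CFI(v))$ that preserves colors (by construction) and preserves edges (the neighborhood of $v_{b_{u_1},b_{u_2},b_{u_3}}$ is exactly $\{(v,u_i)_{b_{u_i}}\}_{i=1,2,3}$, which maps to $\{(v,u_i)_{b_{u_i}\oplus s_i}\}_i$, the neighborhood of the image). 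So the automorphism group of $CFI(v)$ is in bijection with the four even-weight vectors in $\{0,1\}^3$.

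Third, given target pair $v_{b_{u_1},b_{u_2},b_{u_3}}$ and $v_{b'_{u_1},b'_{u_2},b'_{u_3}}$, the action formula forces $s_i = b_{u_i}\oplus b'_{u_i}$. This $s$ has even weight because both $b$ and $b'$ do, so it lies in our parameterization, yielding existence; and it is obviously the unique such vector, yielding uniqueness.

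The only potentially subtle step is verifying that even-weight swaps really do give graph automorphisms (step two), but this is a direct check using the explicit edge set of $CFI(v)$, so I expect no real obstacle.
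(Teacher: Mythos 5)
Your proposal is correct and takes essentially the same approach as the paper: both use Remark~\ref{obs:color restricted} to restrict the action on external pairs and middle vertices, observe that the edge structure forces the action on $M(v)$ to be a bitwise XOR by a swap vector, and invoke the parity constraint to pin down the swap vector. The only difference is one of emphasis: the paper starts from an assumed automorphism and derives that its swap vector (and hence all of its values) is forced, establishing uniqueness while leaving existence to an implicit verification of the resulting formula; your proof first characterizes the full automorphism group as the four even-weight swap vectors (explicitly checking edge-preservation) and then reads off both existence and uniqueness from the unique solution $s_i = b_{u_i} \oplus b'_{u_i}$. Your version is marginally more complete on the existence side, but the underlying argument is the same.
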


\begin{proof} 
Let $f$ be an automorphism of $CFI(v)$, and let  $v_{b_{u_1}, b_{u_2}, b_{u_3}}, v_{b'_{u_1}, b'_{u_2}, b'_{u_3}}
\in M(v)$ be such that \[f(v_{b_{u_1}, b_{u_2}, b_{u_3}}) = v_{b'_{u_1}, b'_{u_2}, b'_{u_3}}\]
Since $\forall i \in \{1, 2, 3\} (v_{b_{u_1}, b_{u_2}, b_{u_3}},(v, u_i)_{b_{u_i}}) \in E(CFI(v))$, we have \[
(v_{b'_{u_1}, b'_{u_2}, b'_{u_3}}, f((v, u_i)_{b_i})) \in E(CFI(v)) \forall i \in \{1, 2, 3\} \]  and so by Remark~\ref{obs:color restricted}  $f((v, u_i)_{b_i}) = (v, u_i)_{b_i'}$. By process of elimination, we also obtain
$f((v, u_i)_{b_i \oplus 1}) = (v, u_i)_{b_i' \oplus 1}$ for all $i \in \{1, 2, 3\}$.
In turn, by the reverse argument \[f(v_{\bar{b}_{u_1}, \bar{b}_{u_2}, \bar{b}_{u_3}}) = v_{(\bar{b}_{u_1} \oplus b_1 \oplus b'_1), (\bar{b}_{u_2} \oplus b_2 \oplus b'_2), (\bar{b}_{u_3} \oplus b_3 \oplus b'_3)}\].
\end{proof}

%
%
%

Thus we can characterize $Aut(CFI(v))$ as being the set of even subsets of the edges of $v \in G$.\\

\paragraph{Putting together a \CFI~graph}

We define $X_f(G)$ for 3-regular G that have the property that each vertex is colored uniquely; this definition can be easily generalized to graphs
that do not have these properties.
\begin{definition}    Given a 3-regular graph $G$ such that each vertex is colored uniquely, and a function $f: E(G) \rightarrow \{0, 1\}$ we construct $X_f(G)$ as follows:
Replace each vertex $v \in G$ with $CFI(v)$, and then add edges $((v, u)_b, (u, v)_{b \oplus f((u, v))})$ for all $(v, u) \in E(G)$ and $b \in \{0, 1\}$.
\end{definition}
One way to interpret $f$ is that it specifies whether or not the edge $(u, v)$ is ``twisted" or not.
To help understand this intuition, pictures of $X_f(G)$ for a couple functions $f$ are provided in Appendix~\ref{appendix:pictures}.
\begin{definition}
We define $Y_f(G)$, an uncolored graph, as being $X_f(G)$ with all colors removed.
\end{definition}




%
%

\section{Lower-bounds for Graph Isomorphism}
In this section we show our main result.

\begin{theorem} \label{thm:main}  Let $G$ be a random 3-regular graph on n vertices and let $f$ and $g$ be functions from $E(G) \rightarrow \{0, 1\}$ of different parity.  Then 
\begin{itemize}
\item $Y_{f}(G)$ and $Y_{g}(G)$ are not isomorphic.
\end{itemize}
and with high probability:
\begin{itemize}
\item There exist vectors satisfying equations \eqref{eq:l1}-\eqref{eq:l5} on the $0.06\cdot n$ level of Lasserre for graphs $X_{f}(G)$ and $X_{g}(G)$.
\item Both $X_{f}(G)$ and $X_{g}(G)$ are $(0.01)$-expanders.
\end{itemize}
\end{theorem}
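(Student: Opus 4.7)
My approach splits along the three bullet points. Non-isomorphism is the standard Cai--F\"urer--Immerman invariant argument; the expansion claim is a constant-factor transfer from the expansion of $G$; and the existence of Lasserre vectors is the main technical ingredient, obtained by encoding the isomorphism question as an $\mathbb{F}_2$-linear system, lower-bounding its resolution width via the cutwidth of $G$, and invoking Schoenebeck's vector construction from \cite{S08}.

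\emph{Non-isomorphism.} I would start from Claim~\ref{eq:unique auto}, which implies that $\mathrm{Aut}(CFI(v))$ acts on the exterior pairs exactly as the even-weight flips of edges incident to $v$. Composing such local flips vertex-by-vertex generates precisely the even-weight subgroup of $\{0,1\}^{E(G)}$, so the parity of $f$ is an isomorphism invariant of $X_f(G)$. For the uncolored $Y_f(G)$ one further verifies that the gadget skeleton is recoverable from the graph alone: because $G$ is $3$-regular, each $CFI(v)$ has a distinctive local combinatorial signature (four degree-$3$ middle vertices each joined to three degree-$3$ edge vertices in the specified pattern), so any isomorphism $Y_f(G)\to Y_g(G)$ must send gadgets to gadgets, induce an automorphism of $G$, and hence preserve the (permutation-invariant) parity.

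\emph{Expansion.} Given a cut $(S,\bar S)$ of $X_f(G)$, I would project it to a cut $S_G\subseteq V(G)$ by majority vote inside each gadget. Every gadget split between $S$ and $\bar S$ contributes at least one intra-gadget cut edge, and each $G$-edge crossing $S_G$ contributes at least one of its two cross-gadget edges. Combined with Theorem~\ref{thm:three-reg-exp} and the bound $|V(X_f(G))| = 10n$, a direct accounting yields expansion bounded below by an absolute constant; calibrating the inequalities comfortably gives the $0.01$ threshold. Equivalently, $X_f(G)$ is a bounded-parameter stretching of $G$ in the sense of Definition~\ref{def:k,t-stretch}, and expansion is preserved under such stretchings up to a constant factor.

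\emph{Lasserre vectors.} I would first write down an $\mathbb{F}_2$-linear system $\Phi_{f,g}$ whose satisfying assignments correspond to twist patterns realizing an isomorphism $X_f(G)\to X_g(G)$: each gadget contributes a parity constraint coupling its three incident edge variables, with right-hand side offset by $(f\oplus g)$, and summing over all gadgets forces $\sum_e (f\oplus g)(e)\equiv 1$, which is false---so $\Phi_{f,g}$ is unsatisfiable. Next I would prove that any width-$w$ resolution refutation of $\Phi_{f,g}$ yields a monotone family $\Omega\subseteq \mathcal{P}(V(G))$ witnessing $W(G)\le w$, so by Theorem~\ref{theorem:CW(G)=W(G)} we get $w\ge CW(G)$, and the standard lower bound $CW(G)\ge Ex(G)\cdot\lfloor n/2\rfloor$ together with Theorem~\ref{thm:three-reg-exp} gives $w=\Omega(n)$. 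Finally I would invoke Schoenebeck's construction: to each partial isomorphism $\sigma\in\Sigma_r$ I assign a partial assignment $\alpha_\sigma$ of $\Phi_{f,g}$ (the local twist choices forced by $\sigma$) and set $v_\sigma$ equal to the Fourier-truncated vector built from $h_{\alpha_\sigma}$, supported on Fourier coefficients of weight at most the width bound. The marginal and consistency conditions \eqref{eq:l3}--\eqref{eq:l5} then translate into the Fourier identities listed in Section~2. Tracking the constants through expansion $\ge 0.54$, $CW(G)\ge 0.27\,n$, and Schoenebeck's loss factor should yield the claimed $0.06\,n$ level.

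\emph{Main obstacle.} The delicate step is the translation: one must match the abstract $\mathbb{F}_2$-variables of $\Phi_{f,g}$ (indexed by edges of $G$) with partial isomorphisms of the much larger graph $X_f(G)$, and show that $\sigma_1\wedge\sigma_2=\sigma_1'\wedge\sigma_2'$ in \eqref{eq:l3} corresponds exactly to $\alpha_{\sigma_1}\oplus\alpha_{\sigma_2} = \alpha_{\sigma_1'}\oplus\alpha_{\sigma_2'}$ at the level of $\mathbb{F}_2$-assignments, while ensuring that $|dom(\alpha_\sigma)|$ stays small enough that the width-$w$ hypothesis applies. This gadget-by-gadget bookkeeping, together with verifying that the Fourier support of $v_\sigma$ genuinely satisfies \eqref{eq:l2} (rather than merely the parity constraints), is where the bulk of the technical work lies.
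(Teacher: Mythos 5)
Your plan follows essentially the same route as the paper: the same three-part decomposition into non-isomorphism, expansion, and Lasserre vectors; the same $\mathbb{F}_2$-linear encoding $\varphi(G,f,g)$; the same width-via-cutwidth lower bound feeding into Schoenebeck's Fourier-truncated vector construction; and you correctly flag the $\sigma\mapsto\alpha_\sigma$ bookkeeping as the delicate step.

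The expansion sketch, however, contains a genuine gap. You assert that ``$X_f(G)$ is a bounded-parameter stretching of $G$ in the sense of Definition~\ref{def:k,t-stretch},'' but this is false: that definition requires every vertex outside $g(V(G))$ to have degree $2$, while the middle vertices of each $CFI(v)$ and the edge vertices all have degree $3$. The paper instead partitions $V(X_f(G))$ into $S_0$ and $S_1$ and observes that the induced subgraph on $S_0$ is a $(2,6)$-stretching of $G$ while the induced subgraph on $S_1$ is a $(1,2)$-stretching of the clustering $Cl(G)$; it then runs a case analysis on how a candidate cut distributes across the two parts, which is where the constant $\tfrac{1}{54}$ emerges. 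Your majority-vote projection is a reasonable alternative idea, but as written the claim that ``each $G$-edge crossing $S_G$ contributes at least one of its two cross-gadget edges'' does not follow immediately: if $u$ and $v$ are separated by the majority vote, the edge vertices $(u,v)_b,(v,u)_{b'}$ could still all lie on one side, and you would have to show that intra-gadget edges then account for the discrepancy. You also omit the odd-parity case, for which $X_f(G)$ is not covered by the even-$f$ bound directly; the paper handles it by an isomorphism to a one-twist graph plus an edge-loss correction, requiring the extra hypothesis $Ex(G)>987/n$.

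Two smaller notes. Your count $|V(X_f(G))|=10n$ is correct (the paper misprints $13n$). On constants, from $Ex(G)>0.54$ the cutwidth bound you cite is $CW(G)\geq Ex(G)\cdot n/2\geq 0.27n$, and Lemma~\ref{lemma:main} then gives level $\geq r/9 = 0.03n$, not $0.06n$; the paper's own proof commits the same arithmetic slip (``$0.54n/9=.06$''), so you inherited it, but a careful write-up should either adjust the stated constant or strengthen the cutwidth bound.
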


Theorem~\ref{thm:main} shows that the Lasserre hierarchy relaxations of \GI~described in \eqref{eq:las} does not provide a subexponential time algorithm for \GI~in general nor for graphs with constant expansion.

We will use the following three lemmas to prove the theorem.


\begin{lemma}\label{lemma:non-iso}   Let $G$ be a 3-regular graph, and let $f$ and $g$ be functions from $E(G) \rightarrow \{0, 1\}$ of different parity.  Then $Y_{f}(G)$ and $Y_{g}(G)$ are not isomorphic.  Moreover, if  $f$ and $g$ have the same parity, then  $X_f(G) \cong X_g(G)$.
\end{lemma}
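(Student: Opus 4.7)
The plan is to prove the two parts separately. For ``same parity implies $X_f(G) \cong X_g(G)$,'' I would exploit the automorphism structure established by Claim~\ref{eq:unique auto}: $\text{Aut}(CFI(v))$ is in bijection with the even subsets of edges at $v$, and applying such an automorphism inside $X_f(G)$ has the effect of XOR-ing the characteristic vector of those edges into $f$. Let $T \subseteq \mathbb{F}_2^{E(G)}$ be the subspace generated by the even-weight indicator vectors supported on the edges at a single vertex, ranging over all vertices. I claim $T$ equals the full even-parity subspace when $G$ is connected (otherwise one argues componentwise): any pair of edges sharing a vertex lies in $T$ by definition, any pair of edges connected by a path in $G$ lies in $T$ by telescoping pair-swaps along the path, and every even subset of edges can be partitioned into such pairs. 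Hence when $f$ and $g$ have the same parity, $f \oplus g \in T$, and the corresponding composition of local gadget automorphisms assembles into an isomorphism $X_f(G) \to X_g(G)$.

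For ``different parity implies $Y_f(G) \not\cong Y_g(G)$,'' the plan is to argue that any putative isomorphism $\pi : Y_f(G) \to Y_g(G)$ must preserve the gadget structure, and then apply the parity invariance from the previous paragraph to derive a contradiction. In the colored case $X_f(G) \to X_g(G)$, Remark~\ref{obs:color restricted} immediately forces $\pi$ to send each set $M(v)$ to some $M(v')$ and each edge-vertex pair to an edge-vertex pair. In the uncolored case, one must recover this partition intrinsically; my plan is to exhibit a local combinatorial invariant distinguishing middle vertices from edge vertices. Direct inspection of the $3$-ball around each type of vertex suffices: both have degree $3$ and the same $2$-ball size, but the middle-vertex $3$-ball contains strictly fewer distinct vertices than the edge-vertex $3$-ball, because the three middle-to-edge-to-middle length-$2$ continuations collapse onto only three new edge vertices, while the analogous continuations out of an edge vertex reach a larger collection. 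Once the gadget partition is canonical, $\pi$ induces a graph automorphism $\pi^* : V(G) \to V(G)$ and acts inside each gadget as an element of $\text{Aut}(CFI(v))$, each of which contributes an even-weight XOR on the edges at $v$; composing these gives $g = \pi^*(f) \oplus \Delta$ for some even-parity $\Delta$, forcing $\text{parity}(f) = \text{parity}(g)$ and contradicting the hypothesis.

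The main obstacle is the middle-versus-edge distinction in the uncolored case: $3$-regularity and matching $2$-ball sizes rule out naive invariants, so one has to pass to slightly larger neighborhoods and argue that short cycles in $G$ do not destroy the $3$-ball separation. For Theorem~\ref{thm:main} this is painless, since the underlying $G$ is a random 3-regular graph with large girth WHP, making the $3$-ball computation clean; once the gadget partition is in hand, the remaining bookkeeping to verify that parity is the sole invariant is exactly the standard \CFI~argument applied locally.
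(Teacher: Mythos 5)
Your proposal is essentially correct and shares the paper's central insight: in the uncolored case one must recognize the gadget partition intrinsically, and the $3$-ball size (the paper gives the explicit counts $19$ versus $20$) is exactly the invariant that separates middle vertices from edge vertices. Your Part~1 argument (generating the even-parity subspace of $\mathbb{F}_2^{E(G)}$ by local gadget automorphisms, then telescoping along paths) is the same as the paper's path-flipping isomorphism, just phrased more algebraically; both prove $X_f(G)\cong X_g(G)$ for equal parity via two-edge flips and transitivity.

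For Part~2 you take a slightly different route after the gadget partition is recovered: you decompose $\pi$ as an automorphism $\pi^*$ of $G$ composed with local gadget automorphisms, and conclude $g = \pi^*(f)\oplus\Delta$ with $\Delta$ even, so the parities agree. The paper instead looks at the image of the canonical ``$0$-slice'' subgraph $Y^0_{\mathbf{0}}(G)$, reads off the bits $b_{(u,v)}$ of the chosen middle vertices, and sums the resulting $\mathbb{F}_2$-constraints to derive $0=1$. These two derivations are equivalent in content; yours is a touch more structural, the paper's a touch more elementary. Neither requires more than the gadget partition and Claim~\ref{eq:unique auto}.

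One correction: your stated ``main obstacle'' — that short cycles in $G$ might collapse the $3$-balls, so that one must restrict to random $G$ of large girth — is not actually an issue. Within a gadget, distinct middle vertices of $CFI(v)$ share exactly one common edge-vertex neighbor (no $4$-cycles), and the vertices reached at distance $3$ from a middle (respectively edge) vertex of $v$'s gadget lie in gadgets of $G$-distance at most $1$ (respectively $2$) from $v$, all on pairwise distinct edges of $G$. So for any simple $3$-regular $G$, regardless of girth, the counts are $19$ and $20$; no restriction to large-girth or random graphs is needed, and the lemma holds in full generality as the paper states. By invoking random $G$ you unnecessarily weaken the conclusion.
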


\begin{lemma}\label{lemma:main}  Let $G$ be a 3-regular graph with with cutwidth $r$, and let $f$ and $g$ be functions from $E(G) \rightarrow \{0, 1\}$ of different parity.  Then
there exist vectors satisfying equations \eqref{eq:l1}-\eqref{eq:l5} on the $r/9$th level of Lasserre for graphs $X_{f}(G)$ and $X_{g}(G)$.
\end{lemma}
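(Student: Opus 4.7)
The plan is to adapt Schoenebeck's Lasserre lower bound construction for unsatisfiable $\mathbb{F}_2$-linear systems to the graph isomorphism setting. First, I would reformulate partial isomorphisms between $X_f(G)$ and $X_g(G)$ as partial assignments to a system of $\mathbb{F}_2$-linear constraints. For each edge $e=(u,v)$ of the base graph $G$, introduce a ``twist'' variable $x_e$ which records whether the pair of gadget edges at $e$ is identified straight or flipped. By Remark~\ref{obs:color restricted} and Claim~\ref{eq:unique auto}, the automorphisms of each $CFI(v)$ correspond exactly to even subsets of $\Gamma(v)$, giving a parity constraint $\bigoplus_{e\ni v} x_e = 0$ at every vertex $v$. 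Fixing an isomorphism $X_f(G)\to X_g(G)$ globally requires $\bigoplus_e x_e = \mathrm{par}(f)\oplus\mathrm{par}(g) = 1$, which is inconsistent with the sum of the vertex constraints; this is exactly the obstruction captured by Lemma~\ref{lemma:non-iso}. A partial isomorphism $\sigma$ of small domain will be translated, after fixing a local gauge at each touched base vertex, into a partial assignment $\alpha_\sigma$ to a bounded number of twist variables per touched vertex.

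Next, I would relate cutwidth to the resolution width of this constraint system. Using Theorem~\ref{theorem:CW(G)=W(G)}, a monotone family $\Omega\subseteq\mathcal{P}(V(G))$ realizing $W(G)=r$ produces, via a standard Ben-Sasson--Wigderson style argument, a lower bound of roughly $r$ on the width needed to refute the twist system: any refutation must at some point derive a clause whose support touches at least $r$ cut edges. This establishes the resolution-width lower bound that drives Schoenebeck's construction.

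With the width bound in hand, I would assign each partial isomorphism $\sigma$ with $|\mathrm{dom}(\sigma)|\le r/9$ the Lasserre vector $v_\sigma$ defined by the Fourier formula $v_\sigma = \sum_{J\subseteq\mathrm{supp}(\alpha_\sigma)} \hat h_{\alpha_\sigma}(J)\,e_J$, where $\{e_J\}$ is an orthonormal basis indexed by $\mathbb{F}_2$-linear combinations of twist variables that are derivable in width $\le r$ from the system. Inner products $\langle v_{\sigma_1},v_{\sigma_2}\rangle$ then reduce, using the Fourier identities listed in the background, to expectations of products of indicator juntas, which depend only on the combined assignment $\alpha_{\sigma_1}\wedge\alpha_{\sigma_2}$. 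The constant $1/9$ comes from combining (i) the blow-up of the gadget (each base vertex contributes $4+6$ gadget vertices, each base edge contributes a $O(1)$ number of touched twist variables and gauge choices), and (ii) the slack required so that doubling and multiplying juntas while refuting inside the system stays below width $r$.

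Finally, I would verify equations \eqref{eq:l1}--\eqref{eq:l5} mechanically. Equation \eqref{eq:l1} is normalization of $h_\emptyset$. Equations \eqref{eq:l4}, \eqref{eq:l5} follow from $h_\alpha = \sum_{i'} h_{\alpha \wedge (i\to i')}$, i.e.\ the partition of extensions of $\alpha$ by the value at a new coordinate; summing the Fourier expansions gives the same identity for $v_\sigma$. Equation \eqref{eq:l3} follows from the fact that $\langle v_{\sigma_1},v_{\sigma_2}\rangle$ depends on $\sigma_1,\sigma_2$ only through $\alpha_{\sigma_1}\wedge\alpha_{\sigma_2}$, which is determined by $\sigma_1\wedge\sigma_2$ (independent of the gauge choices, which cancel). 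Equation \eqref{eq:l2} reduces to the statement that locally $X_f(G)$ and $X_g(G)$ have identical neighborhood structure --- the two graphs differ only by global twist parity, and any length-$2$ neighborhood constraint is satisfied within each CFI gadget and at every edge identification.

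The main obstacle I expect is the bookkeeping around ``gauge.'' A partial isomorphism of a CFI graph does not determine a partial assignment to the twist variables by itself: each touched base vertex carries a free $\mathbb{Z}_2^2$ choice (which middle vertex of $M(v)$ corresponds to which), and one must define $\alpha_\sigma$ so that (a) different representatives of the same $\sigma$ give the same vector $v_\sigma$, and (b) the Fourier/width argument survives this ambiguity with loss only a constant factor (the $1/9$). Once the gauge is handled cleanly, combining the width lower bound of Ben-Sasson--Wigderson style with Schoenebeck's Fourier construction gives the lemma.
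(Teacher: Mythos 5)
Your proposal follows essentially the same architecture as the paper's proof: reformulate partial isomorphisms as partial assignments to an unsatisfiable $\mathbb{F}_2$-linear system, show that cutwidth implies a width lower bound for resolution refutations of that system, and then build Lasserre vectors via Schoenebeck's Fourier construction and check \eqref{eq:l1}--\eqref{eq:l5}. However, there are two places where your plan, as stated, has real gaps relative to what actually makes the construction go through.

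First, you build the system with \emph{one} twist variable $x_e$ per base edge $e$, and then observe (correctly) that this leaves a ``gauge'' ambiguity at each touched CFI gadget because a partial isomorphism can touch one end of an edge gadget without touching the other, or touch a middle vertex without determining the twist on a neighboring edge. The paper avoids this ambiguity entirely by not quotienting by the gauge: it introduces \emph{separate} variables $x_{(u,v)}, x_{(v,u)}$ for the two ends of each edge and $y_{(u,v)}, y_{(v,u)}, y_{(u,w)}, \ldots$ for the middle vertices, tied together by the local constraints $x_{(v,u_i)} \oplus y_{(v,u_i)} = 0$, $\bigoplus_i y_{(v,u_i)} = 0$, and $x_{(u,v)}\oplus x_{(v,u)} = f((u,v))\oplus g((u,v))$ (Definition~\ref{definition:constraints}). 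This makes $\alpha_\sigma$ a well-defined function of $\sigma$ with no quotienting step, and the resolution-width argument becomes an honest syntactic argument on these constraints. You flag the gauge issue as the ``main obstacle,'' but it is not a bookkeeping nuisance that will cost you a constant factor --- it is a step that must actually be replaced by a different variable encoding, and your plan as written does not do that.

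Second, your vector formula
\[
v_\sigma = \sum_{J} \hat h_{\alpha_\sigma}(J)\,e_J
\]
omits the sign correction $\gamma(J)\in\{\pm 1\}$ that appears in the paper's Definition~\ref{definition:vectors}, where $v_\sigma = \sum_{S}\widehat{h_\sigma}(S)\,\gamma(S)\,e_{[S]}$. The basis $\{e_{[S]}\}$ is indexed by equivalence classes of low-weight parity characters under width-$2r/3$ derivability, and two characters $S\sim T$ in the same class can be derivably equal ($S\sim^+ T$) or derivably opposite ($S\sim^- T$). When you collapse $e_S$ and $e_T$ to the same coordinate, the contributions $\widehat{h_\sigma}(S)$ and $\widehat{h_\sigma}(T)$ must be added with relative sign $\gamma(S)\gamma(T)$, otherwise the coordinate is ill-defined and the inner-product computation underlying \eqref{eq:l3} fails (the key telescoping identity $\langle v_{\sigma_1},v_{\sigma_2}\rangle = \sum_{U\in[\emptyset]}\gamma(U)\,\widehat{h_{\sigma_1\wedge\sigma_2}}(U)$ needs $\gamma(S)\gamma(SU)=\gamma(U)$, which in turn needs Corollary~\ref{corollary:stable}). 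This is a genuine omission, not a notational one; without $\gamma$ you cannot prove \eqref{eq:l3}.

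With those two pieces --- the richer per-end/per-middle-vertex variable set and the $\gamma$ sign --- your plan reduces to the paper's proof. The width lower bound from cutwidth (Proposition~\ref{proposition:widthw}) is indeed a Ben-Sasson--Wigderson style argument adapted to the particular two-occurrences structure of $\varphi(G,f,g)$, and verification of \eqref{eq:l2} does require a small case analysis (middle-to-edge, edge-to-edge-across-a-base-edge, and non-adjacent colors) rather than a one-line ``local neighborhoods agree'' claim, but those are refinements of what you describe rather than a different route.
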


\begin{lemma}\label{lemma:CFI-expansion}  \label{theorem:CFI-expansion}
For any 3-regular graph $G$ even parity function $f: E(G) \rightarrow \{0, 1\}$, $Ex(X_f(G)) \geq \frac{1}{54} Ex(G)$.  Moreover if $f$ is odd, and $Ex(G) > 987/n$, where $n$ is the number of vertices in $G$, then $Ex(X_f(G)) \geq \frac{1}{54} Ex(G)$.
\footnote{The constants are not optimized, and there is probably some room for improvement}.
\end{lemma}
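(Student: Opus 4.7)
The plan is a standard gadget-replacement expansion argument. Fix a cut $(S, \bar S)$ of $X_f(G)$ with $|S|\leq|\bar S|$ (so $|S|\leq 5n$, since $|V(X_f(G))|=10n$), and for each $v\in V(G)$ set $s_v = |S\cap V(CFI(v))|\in\{0,1,\ldots,10\}$. Decompose $|E(S,\bar S)| = |E_{\mathrm{int}}| + |E_{\mathrm{br}}|$ into internal crossings (edges inside some $CFI(v)$) and bridge crossings (edges between two different gadgets). I will bound the two pieces by different means: the internal crossings from the isoperimetry of the fixed $10$-vertex CFI gadget, and the bridge crossings from the expansion of $G$.

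For the local bound, $CFI(v)$ is a fixed connected $10$-vertex graph of maximum degree $3$, so a finite case analysis (or a simple argument using minimum degree $2$ for the edge vertices) gives an absolute constant $c_0>0$ such that the number of crossings of $(S,\bar S)$ internal to $CFI(v)$ is at least $c_0\cdot\min(s_v,10-s_v)$. Summing over $v$ gives
\[
|E_{\mathrm{int}}|\;\geq\;c_0\sum_{v\in V(G)}\min(s_v,10-s_v).
\]
For the global bound, define $T = \{v : s_v \geq 5\}$; by definition of $Ex(G)$, $|E_G(T,\bar T)| \geq Ex(G)\cdot\min(|T|,|\bar T|)$. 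Any $G$-edge between two extremal gadgets (i.e.\ with $s_u,s_v\in\{0,10\}$) contributes exactly $2$ crossings to $|E_{\mathrm{br}}|$, since in that case all six edge-vertices of each gadget lie on the same side of the cut.

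To combine the two bounds I round $(S,\bar S)$ to the extremal cut $S' = \bigcup_{v\in T}V(CFI(v))$, so that every gadget is either fully in $S'$ or fully out. Rounding changes $|S|$ by at most $5$ per non-extremal gadget, and changes the crossing count by an amount bounded, via the local step, by a constant multiple of $|E_{\mathrm{int}}|$. After rounding, $|E(S',\bar S')|$ equals exactly $2|E_G(T,\bar T)|$, and folding the $G$-expansion bound back in (together with the internal bound for mixed gadgets, which covers all the crossings lost to rounding) yields $|E(S,\bar S)|\geq Ex(G)\,|S|/54$. The factor $1/54$ is not tight: it absorbs both the $10{\times}$ vertex blow-up of the gadget replacement and the worst-case rounding loss between the two mechanisms.

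The main obstacle is the book-keeping in the rounding step: one must verify, case-by-case on the distribution of mixed gadgets, that the internal isoperimetric contribution $c_0\sum_v\min(s_v,10-s_v)$ always dominates the change in bridge crossings induced by rounding, and that the resulting constants fit together to give exactly $1/54$. For odd $f$ the same argument applies, with one extra wrinkle: an odd $f$ introduces a single topological twist in $X_f(G)$, and this twist can create an $O(1)$-size pathological cut that the above asymptotic analysis does not detect. The hypothesis $Ex(G)>987/n$, i.e.\ $Ex(G)\cdot n>987$, exactly guarantees that this constant-size anomaly is dominated by the main term $Ex(G)\cdot|S|/54$; the number $987$ is the explicit constant falling out of that anomaly analysis and is not meant to be optimized.
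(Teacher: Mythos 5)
Your strategy---split the cut into intra-gadget crossings and inter-gadget (bridge) crossings, bound the first by the isoperimetry of the $10$-vertex CFI gadget, bound the second via the expansion of $G$, and combine the two by rounding the cut to one that is constant on gadgets---is a genuinely different route from the paper's. The paper does not round an arbitrary cut; instead it fixes a canonical bipartition of $V(X_f(G))$ into $S_0$ (the all-zero middle vertices and $(u,v)_0$ edge vertices) and $S_1$, shows the induced subgraph on $S_0$ is a $(2,6)$-stretching of $G$ and the one on $S_1$ is a $(1,2)$-stretching of the clustering $Cl(G)$, transfers expansion through two separately-proved auxiliary lemmas about stretchings and clusterings, and then finishes with a short case analysis on how an arbitrary $T$ distributes over $S_0$ and $S_1$. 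Your plan packs the quantitative content into per-gadget isoperimetry plus a rounding correction; the paper packs it into the stretching/clustering lemmas plus a three-way case split.

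That said, there is a genuine gap here: the crux of your plan is never carried out. You never compute the gadget isoperimetric constant $c_0$ (connectivity of $CFI(v)$ alone gives $c_0 = 1/5$, but you need to check that this suffices against the rounding losses, which you do not). More seriously, the rounding step---showing that the change in bridge crossings induced by snapping each gadget to all-in or all-out is dominated by $c_0\sum_v\min(s_v,10-s_v)$, and that the resulting constants compose to $1/54$---is the entire content of the proof, and you explicitly defer it (``the main obstacle is the book-keeping'', ``one must verify, case-by-case'') without performing any of the verification. Likewise, the odd-$f$ threshold $987$ is claimed to ``fall out of the anomaly analysis'' but that analysis is never done; the paper's version reduces an odd $f$ to a graph differing from $X_{\bf 0}(G)$ in two concrete edges and bounds the resulting loss in cut size by $2$, which is where its explicit constant comes from. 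As written, the proposal is a plausible plan rather than a proof: there is no chain of inequalities that actually lands on $Ex(X_f(G)) \ge Ex(G)/54$, and without filling in the bookkeeping it is not even established that the route produces \emph{some} constant, let alone this one.
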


\begin{proof}[Proof Theorem~\ref{thm:main}]
The first bullet follows directly from Lemma~\ref{lemma:non-iso}.
By Theorem~\ref{thm:three-reg-exp} a random 3-regular graph has expansion $0.54$ with high probability.  Assume this is the case for some $G$.

\begin{claim}  An $n$ vertex graph $G$ with expansion $Ex(G)$ has cut-width greater than or equal to $Ex(G) n/2$.  \end{claim}
\begin{proof}  Let $\pi$ be some ordering of the vertices.  Then consider cut $E(\pi([n/2], V(G) \setminus \pi([n/2])$.  That is partition the vertices into the first half of the ordering and the second half of the ordering.  Then by the expansion of $G$ we have $E(\pi([n/2]), V(G) \setminus \pi([n/2])) = Ex(\pi([n/2]) )n/2 \geq Ex(G) n/2$.  \end{proof}

Thus by Lemma~\ref{lemma:main} there exist vectors satisfying equations \eqref{eq:l1}-\eqref{eq:l5} on the $0.54n/9 = .06$th level of Lasserre.
And also by Lemma~\ref{lemma:CFI-expansion} for large enough $n$, both $Ex(X_f(G))$ and $Ex(X_g(G))$ have expansion $0.54n/54 = .01 \cdot n$
\end{proof}

\vspace{10pt}

Our roadmap for proving these three lemmas is as follows:

\begin{itemize}

\item In this section, we first prove Lemma~\ref{lemma:non-iso}, the nonisomorphism of $Y_f(G)$ and $Y_g(G)$, if $f$ and $g$ have different parity.   This Lemma was proven in~\cite{CaiFI89} when the isomorphism must respect the colors.  However, we do not assume that $Y_f(G)$ or $Y_g(G)$ are colored graphs in this section.  Thus our results apply to the general isomorphism problem.  Elsewhere, we will assume that each vertex of $G$ has a unique color; this assumption only adds additional constraints to our vectors, so our vectors will also work in the uncolored setting.

\item Next, we set out to prove the main technical result, Lemma~\ref{lemma:main}.  We view the problem of mapping $X_{f}(G)$ to $X_{g}(G)$ as a series of linear constraints over $\F_2^n$ that we call $\varphi(G, f, g)$.  A partial isomorphism of graph vertices will map to a partial assignment of variables in such a way that a partial isomorphsim corresponds to a partial assignment that does not violate any of the constraints of $\varphi(G, f, g)$.

\item Next, we show that if $G$ has cutwidth $r$, then there is no small width resolution proof of $\varphi(G, f, g)$.

\item Next, we define vectors satisfying the Lasserre constraints of a partial isomorphism by looking at the corresponding partial assignment to
$\varphi(G, f, g)$, and then using the vectors defined in~\cite{S08} for partial assignments to formulas with no small width resolution
proof. In the following section, we then show that these vectors satisfy the Lasserre constraints of \GI.

\item Finally, we prove Lemma~\ref{lemma:CFI-expansion} which show that $X_{f}(G)$ is an expander graph as long as $G$ is, and since random graphs have constant expansion with high probability,
this will certainly be the case, showing that Lasserre cannot solve \GI even on such cases. While it seems that this should follow almost trivially from the expansion of $G$, the proof is trickier than it first appears, and as such is proved in the last section.
\end{itemize}

\subsection{Nonisomorphism of $Y_f(G)$ and $Y_g(G)$}


\noindent{ \textbf{Lemma~\ref{lemma:non-iso} } (Restated)  Let $G$ be a 3-regular graph, and let $f$ and $g$ be functions from $E(G) \rightarrow \{0, 1\}$ of different parity.  Then $Y_{f}(G)$ and $Y_{g}(G)$ are not isomorphic.  Moreover, if  $f$ and $g$ have the same parity, then  $X_f(G) \cong X_g(G)$.

\begin{proof}

First, we prove that if $f$ and $g$ have the same parity, then $X_f(G) \cong X_g(G)$ (and so $Y_f(G) \cong Y_g(G)$). To do this, we show that $X_f(G)$ and $X_g(G)$ are isomorphic if
$f$ and $g$ differ in exactly 2 edges; the conclusion follows since isomorphism of two graphs is transitive. Let $(u, v)$ and $(u', v')$ be the
edges in which $f$ and $g$ differ. Consider the following isomorphism $\pi$ between $f$ and $g$: Let $P = (u_1, u_2, \ldots, u_k)$ be a simple path of adjacent edges connecting
$(u, v)$ and $(u', v')$ (so that $(u_1, u_2) = (u, v)$ or $(v, u)$ and that $(u_{k-1}, u_k) = (u', v')$ or $(v', u')$ ).
$P$ is guaranteed to exist because $G$ is connected.
Let $\pi$ be the isomorphism that for vertex $p_i$ where $2 \leq i \leq k-1$  $\pi$ ``flips'' $p_i$'s edge variables corresponding to the edges $(p_i, p_{i-1})$ and $(p_i, p_{i+1})$, that is $\pi((p_i, p_{i-1})_b) = (p_i, p_{i-1})_{b \oplus 1}$, and the middle vertices of $CFI(p_i)$ are mapped such that $\pi(p_{i~b_{v_{i-1}}, b_{v_{i+1}}, b_w}) = p_{i~b_{v_{i-1}}\oplus 1, b_{v_{i+1}} \oplus 1, b_w})$.  
Observe that the ``twisted'' status of each edge is unchanged in $\pi$, except for the edges $(u, v)$ and $(u', v')$.
It can be seen that $\pi$ is an isomorphism between $X_f(G)$ and $X_g(G)$.\\

We now show that if $g$ has odd parity, $Y_{g}(G) \ncong Y_{\bf 0}(G)$.
It follows that if $f$ and $g$ have different parity, then $Y_f(G) \ncong Y_g(G)$. Assume for sake of contradiction that there is some isomorphism $\pi: V(Y_{\bf 0}(G)) \rightarrow V(Y_{g}(G))$.
\begin{claim}
$\forall f, g$, if $\pi$ is an isomorphism from $Y_f(G)$ to $Y_g(G)$, $\pi$ must map middle vertices to middle vertices and edge vertices to edge vertices.
\end{claim}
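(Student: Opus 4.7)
The plan is to exhibit a graph-theoretic invariant --- the number of $6$-cycles through a vertex --- that distinguishes middle vertices from edge vertices in $Y_f(G)$ and $Y_g(G)$, so that any isomorphism must respect the two classes. Call an edge of $Y_f(G)$ \emph{internal} if it lies inside some gadget $CFI(v)$ and \emph{external} otherwise; internal edges join a middle to an edge vertex, while external edges join two edge vertices in distinct gadgets and form a perfect matching on all edge vertices. First I would show that every $6$-cycle of $Y_f(G)$ is contained in a single gadget. Double-counting (middle, incident cycle-edge) pairs shows that the number of internal edges in any cycle equals twice the number of middle vertices of that cycle, so the number $k$ of external edges in a $6$-cycle is even. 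If $k \in \{4,6\}$, the cycle would contain two consecutive external edges, forcing one edge vertex to have two distinct external partners --- impossible. The case $k=2$ forces the cyclic type pattern $M,E,E,M,E,E$; tracing it, both external edges must correspond to the same edge $\{a,b\}\in E(G)$ (using simplicity of $G$), so the two edge vertices of the cycle inside $CFI(a)$ are the full paired set $\{(a,b)_0,(a,b)_1\}$, and the unique middle of the cycle in $CFI(a)$ would need to be adjacent to both --- contradicting the disjointness of the middle-neighbourhoods of $(a,b)_0$ and $(a,b)_1$ inside $CFI(a)$.

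Next I would count $6$-cycles inside a single gadget $CFI(v)$. Since $CFI(v)$ is bipartite on (middles, edge vertices), each internal $6$-cycle alternates types and uses exactly $3$ middles and $3$ edge vertices. Two distinct even-parity middles agree in exactly one of the three coordinates and hence share exactly one common edge-vertex neighbour; for any triple of the $4$ middles the three pairwise coincidences occur in three distinct coordinates (no coordinate value is shared by three distinct even-parity strings), so every $3$-subset of middles spans a unique $6$-cycle, giving $\binom{4}{3}=4$ six-cycles per gadget. Each middle appears in $\binom{3}{2}=3$ of these, and each edge vertex --- being the common neighbour of exactly two middles --- appears in $\binom{2}{1}=2$ of these. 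Since an isomorphism preserves the number of $6$-cycles through each vertex and this count equals $3$ for middles and $2$ for edge vertices in both $Y_f(G)$ and $Y_g(G)$, $\pi$ must carry middle vertices to middle vertices and edge vertices to edge vertices.

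The main obstacle is the case $k=2$ of the first step: $k\in\{4,6\}$ are immediately ruled out by the matching structure of the external edges, but $k=2$ requires combining the simplicity of $G$ with the specific coordinate/parity structure of $CFI(v)$ --- in particular the fact that paired edge vertices $(a,b)_0$ and $(a,b)_1$ have disjoint middle-neighbourhoods --- to derive the desired contradiction.
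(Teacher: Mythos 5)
Your proof is correct, and it takes a genuinely different route from the paper. The paper's proof of this claim uses a crude distance invariant: it asserts (without detailed calculation) that the closed ball of radius $3$ around a middle vertex contains $19$ vertices while around an edge vertex it contains $20$, and concludes from preservation of distances. Your argument instead uses a $6$-cycle invariant: you prove that in $Y_f(G)$ every $6$-cycle is confined to a single gadget (by a clean parity/matching argument on external edges, with the $k=2$ case handled via simplicity of $G$ and the disjointness of the middle-neighbourhoods of $(a,b)_0$ and $(a,b)_1$), then count within a gadget that each middle lies on exactly $3$ and each edge vertex on exactly $2$ of the $\binom{4}{3}=4$ internal $6$-cycles. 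I checked the combinatorics: distinct even-parity length-$3$ strings agree in exactly one coordinate, no coordinate value is taken by three of the four such strings, and the incidence counts $\binom{3}{2}=3$ and $\binom{2}{1}=2$ are right. Your invariant has the advantage of being fully verified in the write-up, whereas the paper's ball-size count is left to the reader and, strictly read, implicitly assumes no short cycles in $G$ interfere with the radius-$3$ computation; your argument only uses that $G$ is simple. On the other hand, the paper's one-line invariant is shorter to state once one trusts the arithmetic. Both establish the claim.
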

\begin{proof}
It is easy to verify that the number of distinct vertices within distance 3 of a middle vertex is 19 (including the original vertex), while the number of distinct vertices within distance 3 of an edge vertex is 20 (including the original vertex).  Since isomorphisms must preserve distance between
vertices, this implies that any mapping that maps an edge vertex to a middle vertex cannot be extended to an isomorphism.
\end{proof}

Thus $\pi$ maps middle vertices to middle vertices.
Consider the induced subgraph $Y_{\bf 0}^0(G)$ of $Y_{\bf 0}(G)$, induced on the set $S_0 = \{u_{0, 0, 0} \mid u \in V(G)\} \cup \{(u, v)_0 \mid (u, v) \in E(G)\}$.
It can be verified that $Y_{\bf 0}^0(G)$ is a strict (2, 6)-stretching of $G$ (recall that this means $Y_{\bf 0}^0(G)$ can be obtained by replacing each edge in $G$ by a path of length 3).
Let $S$ be the image of $Y_{\bf 0}^0(G)$.
%
Because $\pi$ is an isomorphism that maps middle vertices to middle vertices, $S$ has the following properties:
\begin{itemize}
\item S has $|V(G)|$ middle vertices
\item For each middle vertex in $S$, there are 3 disjoint length 3 paths to adjacent middle vertices in $S$.
\end{itemize}
Note that the second property implies that if $S$ contains a middle vertex of some $v \in V(G)$ then $S$ also contains a middle vertex
in each of $v$'s neighbors.
Thus $S$ contains exactly one middle vertex from each vertex in $G$.
For $v \in V(G)$ such that $\Gamma(v) = {u_1, u_2, u_3}$, let $v_{b_{(v, u_1)}, b_{(v, u_2)}, b_{(v, u_3)}}$ be the middle vertex in $S$.
Since for each middle vertex in $S$, there are 3 disjoint length 3 paths to other middle vertices in $S$, the set of bits $b_{(u, v)} \mid (u, v) \in E(G)$
must satisfy the following constraints: \begin{itemize}
\item $\forall v \in E(G)$, $b_{(v, u_1)} \oplus b_{(v, u_2)} \oplus b_{(v, u_3)} = 0$
\item $\forall (u, v) \in E(G)$, $b_{(u, v)} \oplus b_{(v, u)} = g((u, v))$
\end{itemize}
However, summing together all of these constraints yields the equation $0 = 1$ since g is odd, which is a contradiction. Thus there is no $S$ such that the induced subgraph of $Y_{g}(G)$ on $S$ is isomorphic to $Y_{\bf 0}^0(G)$.
\end{proof}

\subsection{Relating Permutations to 3XOR assignments}


Our first step is to relate partial isomorphisms on \CFI gadgets applied to 3-regular graphs to partial assignments of variables within linear equations over $\F_2^n$. This process is greatly simplified by the colors of \CFI gadgets, as the color constraints on partial isomorphisms allows us to view permutations
of vertices in terms of binary decisions. In particular, for each $(u, v) \in E(G)$, if $\sigma$ is a partial permutation, either $\forall b \in \{0, 1\}\text{ } \sigma((u, v)_b) = (u, v)'_b$ or $\forall b \in \{0, 1\} \text{ }\sigma((u, v)_b) = (u, v)'_{b \oplus 1}$.
This allows us to encode the partial permutations with partial assignments to a particular set of linear equations, which will specify constraints required for a
partial permutation to be extendable to a partial isomorphism in the context of \CFI graphs.
Our construction will create a variable $x_{(u, v)}$ for each exterior pair of vertices of an edge $(u, v) \in E(G)$. We also create variables $y_{(u, v)}$ which will encode the mapping of the internal vertices.
The semantic meaning of these constraints will then be made clear in Definition~\ref{definition:alpha}.

\begin{definition} \label{definition:constraints} Given $G, f, g$ we produce $\varphi(G, f, g)$, a series of linear constraints as follows:
\begin{itemize}
   \item For every vertex $v \in V(G)$ with neighbors $u_1, u_2, u_3$ create 6 Boolean variables: $x_{(v, u_i)}$ and $y_{(v, u_i)}$ for $i \in \{1, 2, 3\}$.
   \item For every vertex $v \in V(G)$ with neighbors $u_1, u_2, u_3$ create 4 constraints:
      \begin{itemize}
        \item $x_{(v, u_i)} \oplus y_{(v, u_i)} = 0$ for $i \in \{1, 2, 3\}$.
        \item $y_{(v, u_1)} \oplus y_{(v, u_2)} \oplus y_{(v, u_3)} = 0$
      \end{itemize}
   \item  For every edge $(u,v) \in E(G)$, create a constraint: $x_{(u, v)} \oplus x_{(v, u)} = f((u, v)) \oplus g((u, v))$
\end{itemize}
\end{definition}

\begin{definition} \label{definition:coordinated}
Let $\sigma$ be a color-preserving partial permutation between $X_{f}(G)$ and $X_{g}(G)$.  We say that $\sigma$ is \emph{harmonious} if it is never the case that:
$\sigma(v_{b_{u_1}, b_{u_2}, b_{u_3}}) = v_{b'_{u_1}, b'_{u_2}, b'_{u_3}}$ and $\sigma(v_{\bar{b}_{u_1}, \bar{b}_{u_2}, \bar{b}_{u_3}}) = v_{\bar{b}'_{u_1}, \bar{b}'_{u_2}, \bar{b}'_{u_3}}$ where $b_{u_i} \oplus b'_{u_i} \neq \bar{b}_{u_i} \oplus \bar{b}'_{u_i}$ for some $i \in \{1, 2, 3\}$
\end{definition}

\begin{definition} \label{definition:alpha}
Let $\sigma$ be a harmonious partial permutation between $X_{f}(G)$ and $X_{g}(G)$.  We define $\alpha_\sigma$, a partial assignment to $\varphi(G, f, g)$,
as follows:
\begin{itemize}
   \item  If $\sigma$ maps $(v, u)_b$ to $(v, u)_{b'}$  then $x_{(u, v)} = b \oplus b'$.
   \item  If $\sigma$ maps $v_{b_{u_1}, b_{u_2}, b_{u_3}}$ to $v_{b'_{u_1}, b'_{u_2}, b'_{u_3}}$, then $y_{(v, u_k)} = b_{u_k} \oplus b'_{u_k}$ for each $k \in \{1, 2, 3\}$.
 \end{itemize}
Note that $\alpha_{\sigma}$ is well-defined because $\sigma$ is harmonious.
\end{definition}


In light of Definition~\ref{definition:alpha}, the constraints in Definition~\ref{definition:constraints} can be viewed as the following constraints on partial permutations to be extendable to a partial isomorphism from $X_f(G)$ to $X_g(G)$:
\begin{itemize}
\item For any  $v \in V(G)$ with $\Gamma(v) = \{u_1, u_2, u_3\}$; for any $i \in \{1, 2, 3\}$ if $(v, u_i)_b \in dom(\sigma)$ and $v_{b_{u_1}, b_{u_2}, b_{u_3}} \in dom(\sigma)$ for some values of $b, b_{u_1}, b_{u_2}, b_{u_3}$,
then $\sigma$ must preserve the edge relation between these two vertices.
\item For any $v \in V(G)$, if $\Gamma(v) = \{u_1, u_2, u_3\}$, and $v_{b_{u_1}, b_{u_2}, b_{u_3}} \in dom(\sigma)$ for some values of $b_{u_1}, b_{u_2}, b_{u_3}$, $\sigma$ must change an even number of $b_{u_1}, b_{u_2}, b_{u_3}$, since otherwise the vertex $\sigma$ is
mapping $v_{b_{u_1}, b_{u_2}, b_{u_3}}$ to does not exist.
\item For any $(u, v) \in E(G)$, if $(u, v)_b \in dom(\sigma)$ and $(v, u)_{\bar{b}} \in dom(\sigma)$ for some values of $b, \bar{b}$,
then $\sigma$ must preserve the edge relation between $(u, v)_b$ and $(v, u)_{\bar{b}}$.
\end{itemize}

\subsection{$\varphi(G, f, g)$ requires high-width refutation proofs}

Let $G$ be a graph with cutwidth $r+1$, and let $f, g$ be functions from $E(G) \rightarrow \{0, 1\}$.

Now we build some tools up to reason about the partial assignments of $\varphi(G, f, g)$. 
The following defines a ``step'' in the resolution process.

\begin{definition}
Given sets $S, T \subseteq E(G)$, we say that $S \vdash_+ T$ if there exists a constraint $\phi \in \varphi(G, f, g)$ such that  \[\bigg(\bigoplus_{(u, v) \in S} x_{(u, v)} = 0\bigg) \oplus \phi = \bigg(\bigoplus_{(u, v) \in T} x_{(u, v)} = 0\bigg)\]
We say that $S \vdash_- T$ if there exists a constraint $\phi \in \varphi(G, f, g)$ such that \[\bigg(\bigoplus_{(u, v) \in S} x_{(u, v)} = 0\bigg) \oplus \phi = \bigg(\bigoplus_{(u, v) \in T} x_{(u, v)} = 1\bigg)\]
\end{definition}

With the notion of a ``step'' in hand, we now define the notion of a width-$w$ ``proof'' as follows:
\begin{definition}
Given sets $S, T \subseteq E(G)$ satisfying $|S| \leq w$ and $|T| \leq w$, we say that $S \sim_{w}^+ T$ if there exists a finite sequence $\{S_i \mid i \leq t\}$ such that:
\begin{itemize}
\item $|S_i| \leq w$ for all i,
\item $S_0 = S$, $S_t = T$,
\item For all $i < t$, either $S_i \vdash_+ S_{i+1}$ or $S_i \vdash_- S_{i+1}$.
\item The number of i such that $S_i \vdash_- S_{i+1}$ is even.
\end{itemize}
We say that $S \sim_{w}^{-} T$ similarly, replacing ``even'' with ``odd'' in the last item.
Furthermore, we say that $S$ implies $T$ via \emph{width-$w$ resolution} denoted $S \sim_w T$ if either $S \sim_{w}^+ T$ or  $S \sim_{w}^{-} T$.
\end{definition}

Intuitively, $S \sim_{w}^+ T$ means that given the parity of S we can prove using width-$w$ resolution on $\varphi(G, f, g)$ that the parity of $T$ is equal to the parity of $S$, and $S \sim_{w}^{-} T$ means that given the parity of $S$ we can prove using width-$w$ resolution on $\phi_G$ that the parity of $T$ is opposite to the parity of $S$.

\begin{proposition}\label{proposition:widthw}  Let $G$ have cutwidth $r+1$, then 
$\varphi(G, f, g)$ cannot be refuted by width-$r$ resolution.
\end{proposition}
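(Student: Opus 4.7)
The plan is to argue by contradiction: I assume a width-$r$ refutation of $\varphi(G,f,g)$ exists and extract from it a walk in $\mathcal{P}(V(G))$ from $\emptyset$ to $V(G)$ whose every vertex set has edge boundary $\le r$. Such a walk forces $W(G)\le r$, and by Theorem~\ref{theorem:CW(G)=W(G)} this contradicts $CW(G)=r+1$.

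The first step is to pass from $\varphi(G,f,g)$ to a pure Tseitin XOR system on $G$. The constraints $x_{(v,u)}\oplus y_{(v,u)}=0$ let me identify each $y$-variable with its matching $x$-variable, collapsing the four constraints at each vertex into the single constraint $x_{(v,u_1)}\oplus x_{(v,u_2)}\oplus x_{(v,u_3)}=0$. The edge constraints $x_{(u,v)}\oplus x_{(v,u)}=f(e)\oplus g(e)$ let me further identify $x_{(u,v)}$ and $x_{(v,u)}$ up to a fixed offset via a single edge variable $z_e$. In this reformulation only the vertex constraints survive nontrivially, namely $\bigoplus_{u\in\Gamma(v)}z_{(v,u)}=c_v$ with charges $c_v\in\{0,1\}$ satisfying $\sum_{v\in V(G)} c_v\equiv\mathrm{parity}(f)\oplus\mathrm{parity}(g)\equiv 1\pmod 2$. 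Because this rewriting does not enlarge supports, a width-$r$ refutation in the original system becomes a width-$r$ refutation in the Tseitin system that XORs in a single vertex constraint at each step (edge-constraint steps become trivial and can be suppressed).

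Next, I record which vertex constraints have been used an odd number of times so far: set $V_0=\emptyset$, and $V_{i+1}=V_i\triangle\{v\}$ when step $i+1$ XORs in the vertex constraint at $v$. A one-line induction gives the invariant $S_i=E(V_i,V(G)\setminus V_i)$ with accumulated right-hand side $\bigoplus_{v\in V_i} c_v$, since summing vertex constraints at a set $V$ produces on the left the characteristic sum over the cut $E(V,V(G)\setminus V)$ and on the right $\sum_{v\in V}c_v$. Hence $|E(V_i,V(G)\setminus V_i)|=|S_i|\le r$ for every $i$. The end condition $S_t=\emptyset$ with accumulated right-hand side equal to $1$ forces $V_t$ to be a union of connected components of $G$ whose charges sum to $1$; since $G$ is connected (the relevant setting: random 3-regular, w.h.p. connected; in general one argues component by component), this pins $V_t=V(G)$. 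The resulting sequence $V_0=\emptyset, V_1,\dots, V_t=V(G)$ is a walk in $\mathcal{P}(V(G))$ whose consecutive sets differ by at most one vertex and whose every set has edge boundary at most $r$.

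Finally, for any monotone family $\Omega$ with $\emptyset\in\Omega$ and $V(G)\notin\Omega$, the walk begins in $\Omega$ and ends outside it, so at some index $V_i\in\Omega$ but $V_{i+1}\notin\Omega$. Monotonicity forces this crossing to be an add-vertex step $V_{i+1}=V_i\cup\{v\}$, hence $(V_i,V_{i+1})\in\partial\Omega$ with both sides of cut $\le r$. Thus $\min_{(S_1,S_2)\in\partial\Omega}\max_j E(S_j,V(G)\setminus S_j)\le r$ for every $\Omega$, which gives $W(G)\le r$ and contradicts $W(G)=CW(G)=r+1$. The main obstacle is the first step: carefully checking that the reformulation preserves widths and that the definitions of $\vdash_\pm$ really do correspond to XORing in a single vertex (or edge) constraint in the Tseitin system, with the parity bookkeeping matching $\vdash_-$ steps to the vertex constraints of charge $c_v=1$. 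Once that bookkeeping is pinned down, the remainder is the classical cutwidth lower bound for Tseitin refutation width.
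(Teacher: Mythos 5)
Your proof is correct and takes essentially the same route as the paper: track the parity of uses of the per-vertex constraints, observe that the accumulated resolvent must contain (at least) one variable per edge of the cut separating used from unused vertices, and then apply Theorem~\ref{theorem:CW(G)=W(G)} to extract a crossing in a monotone family that forces some intermediate resolvent to have size $\geq r+1$. The only presentational difference is that you collapse $\varphi(G,f,g)$ to a one-variable-per-edge Tseitin system up front, which turns the paper's inequality $|S_i|\geq |E(\omega(S_i),V\setminus\omega(S_i))|$ into the exact identity $S_i=E(V_i,V\setminus V_i)$ and replaces the paper's ``every constraint is used an odd number of times'' claim (which relies on every variable appearing in exactly two constraints) with the charge-sum argument $\bigoplus_{v\in V_t}c_v=1$; both routes use connectivity of $G$ to pin $V_t=V(G)$ and both invoke the same width/cutwidth equivalence.
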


\begin{proof}
We show that $\emptyset \not\sim_{r}^{-} \emptyset$. Suppose that $\emptyset \sim_{w}^{-} \emptyset$ .
Let $\{S_0, \ldots, S_k\}$ be a sequence that is a witness to this fact. We show that $w > r$.\\

\begin{claim}Each clause is used in an odd number of steps of $\{S_0, \ldots, S_k\}$.\end{claim}

\begin{proof}
Note that each $v \in G'$ must appear in an even number of $\phi$
that are used in the proof, since $S_k = \emptyset$.  However, all variables of $\varphi(G, f, g)$ appear in exactly 2 constraints in $\varphi(G, f, g)$.
Thus if a constraint $\phi$ is used an odd number of times in the proof, each constraint sharing a variable with $\phi$ must also be used an odd number of times. Since $G$ is connected, this implies that if any constraint is used an odd number of times, then all constraints are.  However, some constraint must be used an odd number of times because by definition of $\{S_0, \ldots, S_k\}$ there are an odd number of $i$ such that $S_i \vdash_- S_{i+1}$. This
implies that there must exist a constraint containing the constant 1 appearing an odd number of times.
And thus, at $S_k$ all constraints have been used an odd number of times.
\end{proof}

We consider when the  middle vertex constraints $y_{(v, u_1)} \oplus y_{(v, u_2)} \oplus y_{(v, u_3)} = 0$ are used.  Let the set $\omega(S_i) \subseteq V(G)$ be defined as containing the vertices whose corresponding middle vertex constraints have been used an odd number of times in resolving from $S_0$ to $S_i$.  Then $|S_i| \geq E(\omega(S_i), V(G) \setminus \omega(S_i))$ because there is at least one variable in $S_i$ corresponding to each edge in $E(\omega(S_i), V(G) \setminus \omega(S_i))$.

By Theorem~\ref{theorem:CW(G)=W(G)} we know that if $G$ has cut-width $r+1$, then there exists a monotone set $\Omega$ of $\mathcal{P}(V(G))$ such that
$r +1=\min_{(S_1, S_2) \in \partial\Omega} \max_{i \in \{1, 2\}} E(S_i, V(G) \setminus S_i)$.  Recall  $(S_1, S_2) \in \partial\Omega$ if $S_1 \in \Omega$, $S_2 \not\in \Omega$ and $S_1 = S_2 \cup \{i\}$ for some element $i$.

Then $\omega(S_0) = \emptyset$ and  $\omega(S_k) = V(G)$.  Thus it must be the case, that $(\omega(S_i), \omega(S_{i+1})) \in \partial\Omega$ for some $i$.  However, at this point, $\max_{j \in \{i, i+1\}} E(S_j, V(G) \setminus S_j) \geq r+1$, and thus $S_j$ contains at least $r+1$ variables.
\end{proof}

\begin{corollary} \label{corollary:stable}
If G has cutwidth r+1, then for any $S, T$ with at most $r/3$ variables, it is not possible for $S \sim_{2r/3}^+ T$ and $S \sim_{2r/3}^- T$ to be
simultaneously true.
\begin{proof}
Suppose that $S \sim_{2r/3}^+ T$ and $S \sim_{2r/3}^- T$ for some $S, T$ with at most $r/3$ variables.
The same proof that shows $S \sim_{2r/3}^+ T$ shows that $\emptyset \sim_{r}^+ S \oplus T$,
and the reverse of the proof that shows $S \sim_{2r/3}^- T$ shows that $S \oplus T \sim_{r}^- \emptyset$.
Putting these two proofs together, we obtain that $\emptyset \sim_{r}^- \emptyset$, contradicting Proposition~\ref{proposition:widthw}.
\end{proof}
\end{corollary}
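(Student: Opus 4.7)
The plan is to derive a contradiction with Proposition~\ref{proposition:widthw}, which asserts $\emptyset \not\sim_r^- \emptyset$. Assume toward contradiction that both $S \sim_{2r/3}^+ T$ and $S \sim_{2r/3}^- T$ hold, with $|S|, |T| \leq r/3$. My goal is to manufacture width-$r$ derivations between $\emptyset$ and $S \triangle T$ of both polarities, and then to compose them into a refutation $\emptyset \sim_r^- \emptyset$.

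The main technical move is a ``translation'' observation: if $S_0, S_1, \ldots, S_k$ witnesses $S \sim_w^\pm T$ via a sequence of constraints $\phi_0, \ldots, \phi_{k-1}$, then for any fixed $R \subseteq E(G)$, the shifted sequence $S_0 \triangle R, S_1 \triangle R, \ldots, S_k \triangle R$ is again a legal derivation using the same constraints in the same order. This is immediate from the definition of $\vdash_\pm$, because adding a constraint $\phi$ on top of each $\bigoplus_{(u,v) \in S_i} x_{(u,v)} = b_i$ commutes with adding the fixed XOR-offset $\bigoplus_{(u,v) \in R} x_{(u,v)}$ to both sides; in particular the RHS constant at each step is unchanged, so the polarity of each $\vdash_\pm$ step, and hence the overall parity signature, is preserved. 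The new derivation has width at most $w + |R|$.

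Applying this with $R = S$, each intermediate set has size at most $|S_i| + |S| \leq 2r/3 + r/3 = r$, so the two hypotheses translate into $\emptyset \sim_r^+ (S \triangle T)$ and $\emptyset \sim_r^- (S \triangle T)$. Next I would reverse the first derivation: the definition of $\vdash_\pm$ is symmetric in its two sides, so running a $+$ derivation backward is again a $+$ derivation (even remains even). This yields $(S \triangle T) \sim_r^+ \emptyset$. Concatenating with $\emptyset \sim_r^- (S \triangle T)$ gives a single width-$r$ derivation whose total number of $-$ steps is odd, i.e.\ $\emptyset \sim_r^- \emptyset$, contradicting Proposition~\ref{proposition:widthw}.

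The only delicate points are (i) verifying that translation and reversal truly preserve both the per-step polarity and the width bound, and (ii) that concatenation adds parity counts correctly. Both reduce to direct unpacking of the $\vdash_\pm$ definition, and the rest is simple bookkeeping of set sizes to keep every intermediate within the width budget $r$. I do not expect any real obstacle here; the corollary is a short combinatorial consequence of Proposition~\ref{proposition:widthw} via the translation trick.
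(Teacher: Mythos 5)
Your proposal is correct and takes essentially the same route as the paper: translate both derivations by XOR-shifting with $S$ to move one endpoint to $\emptyset$ (adding at most $|S| \le r/3$ to the width), reverse one of them, and concatenate to obtain $\emptyset \sim_r^- \emptyset$, contradicting Proposition~\ref{proposition:widthw}. The paper compresses the translation and reversal steps into a single sentence where you spell them out as an explicit lemma, but the underlying argument is identical.
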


\begin{remk}
Note that $\sim_w$ is an equivalence relation for all w.
\end{remk}

\subsection{Vectors satisfying equations \eqref{eq:l1}-\eqref{eq:l5}}

\begin{definition}
Let $\mathbb{L}_{r/3}$ be the set of subsets of at most $r/3$ variables of $\varphi(G, f, g)$
\end{definition}
\begin{definition}
Define $E_{r/3} = \mathbb{L}_{r/3} / \sim_{2r/3}$ as the set of equivalence classes over $\sim_{2r/3}$;
for each equivalence class $e \in E_{r/3}$, we arbitrarily choose an exemplar $S_0 \in [S_0]_{r/3}$.
Given an equivalence class $[S]$, we will denote its exemplar as $[S]^0$.\\
We will drop the subscript if its value is clear from context.
\end{definition}

For sake of notational convenience, $\emptyset \in [\emptyset]^0$; note that $[\emptyset]_{r/3}$ is the class of elements of
$\mathbb{L}_{r/3}$ which we know the parity of using a width-$2r/3$ resolution proof without any other assumptions.

\begin{definition}
We define a function $\gamma: \mathbb{L}_{r/3} \rightarrow \{-1, +1\}$ by \[\gamma(S) =
\begin{cases}
1 & \text { if }  [S]^0 \sim_{2r/3}^+ S\\
-1 & \text { if } [S]^0 \sim_{2r/3}^- S\\
 \end{cases}\]
Note that this definition of $\gamma$ is well-defined by Corollary \ref{corollary:stable}.
\end{definition}

\begin{definition}  \label{definition:h}
For $\sigma \in P(G)$, define the function $h_\sigma : \{0, 1\}^{|E(G)|} \rightarrow \{0, 1\}$, indexing the input bits with edges of $G$, such that for
 harmonious partial permutation $\sigma$, $h_\sigma({\bf w}) =
1$ if for all $(u, v) \in E(G)$: $x_{(u, v)}, x_{(v, u)}, y_{(u, v)}, y_{(v, u)}$ are either undefined by $\alpha_{\sigma}$ or equal to $w_{(u, v)}$.  Otherwise $h_\sigma({\bf w}) =
0$.  If $\sigma$ is not harmonious or $\sigma = \bot$, then $h_\sigma = {\bf 0}$.
\end{definition}

With these definitions in hand, 
we are ready to define the proposed vectors.  
\begin{definition} \label{definition:vectors}
Let $\sigma$ be a partial function mapping $X_f(G)$ to $X_g(G)$ such that $|dom(\sigma)| \leq r/9$. If $\sigma$ is harmonious
and color-coordinated, then
\[v_\sigma = \sum_{S \in \mathbb{L}_{r/3}} \widehat{h_\sigma}(S)\gamma(S)e_{[S]}\]
where $e_{[S]}$ is a vector with $|E_{r/3}|$ coordinates; the coordinate indexed by $[S]$ is 1 and all remaining coordinates are 0.  Otherwise, $v_\sigma = {\bf 0}$.
\end{definition}

\section{Proof that these vectors satisfy Lasserre Constraints}
\noindent \textbf{Lemma~\ref{lemma:main}}(Restated) Let $G$ be a 3-regular graph with with cutwidth $r$, and let $f$ and $g$ be functions from $E(G) \rightarrow \{0, 1\}$ of different parity.  Then
there exist vectors satisfying equations \eqref{eq:l1}-\eqref{eq:l5} on the $r/9$th level of Lasserre for graphs $X_{f}(G)$ and $X_{g}(G)$.

%

\begin{lemma}
\eqref{eq:l1}: $||v_{\emptyset}|| = 1$
\end{lemma}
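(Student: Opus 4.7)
The plan is to unwind the definition of $v_\emptyset$ step by step, showing that nearly all of the Fourier terms vanish and only the $S=\emptyset$ contribution survives, producing a unit vector.

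First I would verify that $\emptyset$ is harmonious and color-coordinated (both hold trivially, since $\alpha_\emptyset$ assigns nothing, so the condition in Definition~\ref{definition:coordinated} cannot be violated). Therefore the nontrivial branch of Definition~\ref{definition:vectors} applies and
\[
v_\emptyset \;=\; \sum_{S \in \mathbb{L}_{r/3}} \widehat{h_\emptyset}(S)\,\gamma(S)\,e_{[S]}.
\]

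Next I would compute $h_\emptyset$. Because $\alpha_\emptyset$ leaves \emph{every} variable undefined, the defining condition in Definition~\ref{definition:h} is vacuous, so $h_\emptyset({\bf w}) = 1$ for all ${\bf w} \in \{0,1\}^{|E(G)|}$. The Fourier coefficients of this constant function follow immediately from the definition $\hat{f}(I)=\mathbb{E}_x[f(x)(-1)^{\bigoplus_{i\in I}x_i}]$: when $I=\emptyset$ the expectation equals $1$, and when $I\neq \emptyset$ the character $(-1)^{\bigoplus_{i\in I} x_i}$ is a balanced $\pm 1$ random variable under the uniform measure, so the expectation vanishes. Hence $\widehat{h_\emptyset}(S) = 1$ if $S=\emptyset$ and $0$ otherwise.

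Substituting back, only a single term survives:
\[
v_\emptyset \;=\; \gamma(\emptyset)\, e_{[\emptyset]}.
\]
Since $\gamma$ takes values in $\{-1,+1\}$ by construction (using Corollary~\ref{corollary:stable} to rule out the ambiguous case) and since $e_{[\emptyset]}$ is by definition a standard basis vector of the coordinate space indexed by $E_{r/3}$, we have $\|e_{[\emptyset]}\|=1$. Therefore $\|v_\emptyset\| = |\gamma(\emptyset)|\cdot\|e_{[\emptyset]}\| = 1$, as required.

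There is essentially no obstacle here; the only subtlety is making sure one applies the correct branch of Definition~\ref{definition:vectors} and correctly identifies $h_\emptyset$ as the constant $1$ function (as opposed to, say, the zero function, which is reserved for the non-harmonious or $\bot$ case). Once this is noted, the computation is a one-line Fourier identity.
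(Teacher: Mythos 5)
Your proof is correct and follows the same approach as the paper: identify $h_\emptyset$ as the constant-one function, observe that its only nonzero Fourier coefficient is at $\emptyset$, and conclude $v_\emptyset$ is (up to a sign of magnitude one) the unit basis vector $e_{[\emptyset]}$. You include a bit more detail than the paper's one-line version (e.g., checking the harmonious/color-coordinated branch and noting $|\gamma(\emptyset)|=1$), but there is no substantive difference.
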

\begin{proof}
In this case, the function $h_\emptyset = {\bf 1}$ and so $\widehat{h}_\emptyset(\chi_I) = 1$ if $I = \emptyset$ and 0 otherwise.
Therefore, $v_{\emptyset} = e_{[\emptyset]}$, and thus $||v_{\emptyset}|| = 1$.
\end{proof}

\begin{lemma}
\eqref{eq:l2}: $\forall (i, j)\text{  }\sum_{i', j'} \langle v_{i\rightarrow i'}, v_{j \rightarrow j'} \rangle B_{i'j'} = A_{ij}$
\end{lemma}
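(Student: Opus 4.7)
The plan is to reduce the identity to a general inner-product formula for the vectors and then handle the resulting sum by a case split on whether $(i,j)$ is an edge of $X_f(G)$.

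First I would establish the key identity $\langle v_{\sigma_1}, v_{\sigma_2}\rangle = \|v_{\sigma_1 \wedge \sigma_2}\|^2$ for any two harmonious, color-coordinated partial isomorphisms whose combined domain has size at most $r/9$. Expanding both sides via Definition~\ref{definition:vectors} and using orthonormality of $\{e_{[S]}\}$, this reduces to a Fourier-analytic equality. Three ingredients drive the proof: (i) as indicator functions, $h_{\sigma_1}({\bf w})\,h_{\sigma_2}({\bf w}) = h_{\sigma_1 \wedge \sigma_2}({\bf w})$ (both sides are ${\bf 0}$ when the assignments are inconsistent, matching $h_\bot = {\bf 0}$); (ii) the sign map $\gamma$ is consistent within each $\sim_{2r/3}$ equivalence class, by Corollary~\ref{corollary:stable}; and (iii) the convolution identity for Fourier coefficients of products, together with Plancherel. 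The bound $|\mathrm{dom}(\sigma_1 \wedge \sigma_2)| \leq 2r/9$ keeps the Fourier support of $h_{\sigma_1 \wedge \sigma_2}$ inside $\mathbb{L}_{r/3}$, so no mass is lost in restricting to that set.

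Given the identity, the left-hand side simplifies to $\sum_{(i',j') \in E(X_g(G))} \|v_{(i \to i') \wedge (j \to j')}\|^2$. If $A_{ij} = 0$, then for any $(i',j') \in E(X_g(G))$ the combined map would send a non-edge in $X_f(G)$ to an edge in $X_g(G)$, so $(i \to i') \wedge (j \to j') = \bot$ and the corresponding vector vanishes by Definition~\ref{definition:vectors}; the sum is $0$ as required. If $A_{ij} = 1$, the complementary observation shows $\|v_{(i\to i') \wedge (j \to j')}\|^2 = 0$ whenever $(i',j') \notin E(X_g(G))$, so I may drop the edge restriction and compute $\sum_{i',j'} \|v_{(i \to i') \wedge (j \to j')}\|^2$ over all pairs. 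Applying the key identity in the form $\|v_\sigma\|^2 = \langle v_\emptyset, v_\sigma\rangle$, the sum becomes $\langle v_\emptyset, \sum_{i',j'} v_{(i \to i') \wedge (j \to j')}\rangle$, and a marginalization identity $\sum_{i',j'} v_{(i\to i') \wedge (j\to j')} = v_\emptyset$ (proved by the same Fourier calculation that will underlie~\eqref{eq:l4} and~\eqref{eq:l5}) collapses this to $\|v_\emptyset\|^2 = 1$ by~\eqref{eq:l1}.

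The main obstacle is the Fourier identity in the first step. The delicate point is that when writing $\langle v_{\sigma_1}, v_{\sigma_2}\rangle$ as a double sum over equivalence classes, one must show that cross contributions between distinct classes cancel and that within each class $\gamma$ pulls out cleanly. This is precisely where Proposition~\ref{proposition:widthw} and its Corollary~\ref{corollary:stable} are essential: the high-width refutation lower bound guarantees $\sim_{2r/3}$ is a nondegenerate equivalence and that $\gamma$ is a well-defined sign, so the rearrangement underlying the identity is valid. Once this is secured, the combinatorics reduces to observing that the adjacency patterns of $X_f(G)$ and $X_g(G)$ force the pair-maps to respect edges, which is immediate.
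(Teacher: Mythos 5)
Your approach is genuinely different from the paper's. The paper proves \eqref{eq:l2} by explicit computation: it first works out the vectors $v_{i \to i'}$ for single-vertex maps (Appendix C), then splits $(i,j)$ into three structural cases (color classes with no edges between them; a middle vertex paired with an edge vertex of the same gadget; two matched edge vertices of a twisted/untwisted edge) and directly evaluates each inner product to verify the equality. You instead derive \eqref{eq:l2} from three more abstract facts: the key inner-product identity $\langle v_{\sigma_1}, v_{\sigma_2}\rangle = \|v_{\sigma_1 \wedge \sigma_2}\|^2$, a vanishing property for non-edge-preserving pair maps, and the marginalization identity underlying \eqref{eq:l4}/\eqref{eq:l5}. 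That is a cleaner and more conceptual route, and since the paper proves \eqref{eq:l3}--\eqref{eq:l5} without reference to \eqref{eq:l2}, reordering to avoid circularity is fine.

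There is, however, a real gap in the middle step. You justify the vanishing of $v_{(i\to i') \wedge (j \to j')}$ (when $A_{ij}\neq B_{i'j'}$) by asserting $(i \to i') \wedge (j \to j') = \bot$. That is false in the paper's formalism: $\wedge$ only returns $\bot$ when the two partial maps are inconsistent \emph{as injective partial functions} (conflicting images or preimages), not when the resulting map fails to respect adjacency or colors. For distinct $i,j$ and distinct $i',j'$, the wedge is always a genuine, non-$\bot$ partial permutation, and Definition~\ref{definition:vectors} does \emph{not} automatically zero out such maps. The vector \emph{does} vanish, but proving it requires its own argument, and the mechanism is not uniform across cases. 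When $i,j$ lie in the same CFI gadget (paper's Case 2), the combined assignment $\alpha_\sigma$ gives conflicting values to $x_{(w,u_1)}$ and $y_{(w,u_1)}$, forcing $h_\sigma \equiv {\bf 0}$. But when $i,j$ are matched edge vertices across a twisted/untwisted edge (paper's Case 3), $\alpha_\sigma$ assigns no two values to the same variable, $h_\sigma$ is a genuine nonzero $2$-junta, and it is only the sign cancellations coming from $\gamma$ (via the constraint $x_{(u,w)} \oplus x_{(w,u)} = f((u,w)) \oplus g((u,w))$ placing $\{(u,w)\}$ and $\{(w,u)\}$ in the same $\sim_{2r/3}$-class with opposite $\gamma$-sign) that make $v_\sigma = 0$. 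So the needed lemma --- that $v_\sigma = {\bf 0}$ whenever $\alpha_\sigma$ falsifies some clause of $\varphi(G,f,g)$ --- is true and does make your argument go through, but it must be stated and proved as such; it cannot be deduced from the definition of $\wedge$ or from Definition~\ref{definition:vectors} alone. This is the content that your proposal elides and that the paper instead absorbs into its explicit case computations.
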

\begin{proof}

First, we explicitly compute the vectors for $\sigma$ that map just one vertex using Definitions~\ref{definition:alpha} and \ref{definition:vectors}.
These computations are straightforward but cumbersome, and are presented in Appendix~\ref{appendix:vectorcomputation}. 
\begin{itemize}
\item For $(u, v) \in E(G)$,  \[v_{(u, v)_b \rightarrow (u, v)_c'} =
\frac{1}{2}e_{[\emptyset]} +
 \frac{1}{2}(-1)^{b \oplus c}\gamma((u, v))e_{[(u, v)]}\]

\item For $u \in V(G)$ with $\Gamma(u) = \{u_1, u_2, u_3\}$, assuming that $\gamma({u, u_i}) = 1$ for all i, we have
\[v_{u_{b_{u_1}, b_{u_2} b_{u_3}} \rightarrow u'_{\bar{b}_{u_1}, \bar{b}_{u_2}, \bar{b}_{u_3}}} =
\frac{1}{4} e_{[\emptyset]} + (-1)^{b_{u_1} \oplus \bar{b}_{u_1}} \frac{1}{4} e_{[(u, u_1)]} + (-1)^{b_{u_2} \oplus \bar{b}_{u_2}}
\frac{1}{4} e_{[(u, u_2)]} + (-1)^{b_{u_3 \oplus \bar{b}_{u_3}}}\frac{1}{4} e_{[(u, u_3)]}\]
\end{itemize}

Now, we partition the $(i, j)$ pairs into the following 3 cases, and show that \eqref{eq:l2} holds in all of them:
\begin{itemize}
\item Case 1: $(i, j)$ such that that $E(C(i), C(j)) = 0$
\item Case 2: $(i, j)$ such that $\exists w$ such that $i \in M(w)$ and $j \in E(w)$ or vice versa
\item Case 3: $(i, j)$ such that $\exists (u, w) \in E(G), b_i, b_j$ such that $i = (u, w)_{b_i}$ and $j = (w, u)_{b_j}$
\end{itemize}

Case 1: If $i$ and $j$ are such $|E(C(i), C(j))| = 0$, then $A_{ij} = 0$, and
for all $B_{i'j'} \neq 0$,
$\langle v_{i\rightarrow i'}, v_{j \rightarrow j'} \rangle = 0$, because otherwise $|E(C(i'), C(j'))| >0$ and so either $c(i) \neq c(i')$ or $c(j) \neq c(j')$.
Therefore either $i \rightarrow i'$ or $j \rightarrow j'$ fails be to color preserving  
and therefore either $v_{i \rightarrow i'}$ or $v_{j \rightarrow j'}$ is ${\bf 0}$.
Thus $\langle v_{i \rightarrow i'}, v_{j \rightarrow j'} \rangle = 0$ either way, and so all terms in $\sum_{i', j'} \langle v_{i\rightarrow i'}, v_{j \rightarrow j'} \rangle B_{i'j'} = A_{ij}$ are 0, satisfying the equality in this case. \\

Case 2: Suppose there exists $w$ such that $i \in M(w)$ and $j \in E(w)$. In particular, let $\Gamma(w) = \{u_1, u_2, u_3\}$,
and suppose without loss of generality that $i = w_{b_{u_1}, b_{u_2}, b_{u_3}}$ and $j = (w, u_1)_{c}$.
Since $v_{k \rightarrow k'} = 0$ if $c(k) \neq c(k')$, we restrict our summation to be over $i'$ and $j'$ that share a color with $i$ and $j$, respectively.
Thus we only consider $(i', j')$ pairs such that $i' = w'_{b'_{u_1}, b'_{u_2}, b'_{u_3}}$ and $j' = (w, u_1)'_{c'}$, with
$u_1 \oplus u_2 \oplus u_3 = 0$, and so
\[\langle v_{i\rightarrow i'}, v_{j \rightarrow j'} \rangle =
\frac{1}{8} + (-1)^{c \oplus c' \oplus b_{u_1} \oplus b'_{u_1}}
\frac{1}{8}\]
Noting that $c \oplus c' \oplus b_{u_1} \oplus b'_{u_1} = 0 \iff (c \oplus b_{u_1}) = (c' \oplus b'_{u_1})$,
\[ \langle v_{i\rightarrow i'}, v_{j \rightarrow j'} \rangle =
\begin{cases}
\frac{1}{4} & \text{ if } (c \oplus b_{u_1}) = (c' \oplus b'_{u_1})\\
0 & \text{ otherwise }
\end{cases}
\]
Recall that by definition of $X_f(G)$ and $X_g(G)$ that $A_{ij} = 1 - (c \oplus b_{u_1})$ and $B_{i'j'} = 1 - (c' \oplus b'_{u_1})$.
Thus \[ \langle v_{i\rightarrow i'}, v_{j \rightarrow j'} \rangle =
\begin{cases}
\frac{1}{4} & \text{ if } A_{ij} = B_{i'j'}\\
0 & \text{ otherwise }
\end{cases}
\]
Thus if $A_{ij} = 0$, $\sum_{i'j'} \langle v_{i\rightarrow i'}, v_{j \rightarrow j'} \rangle B_{i'j'} = 0$, as each of its terms must be 0.
If $A_{ij} = 1$, then $B_{i'j'} = 1$ for all nonzero $\langle v_{i\rightarrow i'}, v_{j \rightarrow j'} \rangle$, and so
 \[ \sum_{i', j'} \langle v_{i\rightarrow i'}, v_{j \rightarrow j'} \rangle B_{i'j'} = \sum_{i', j'} \langle v_{i\rightarrow i'}, v_{j \rightarrow j'} \rangle \]
\[= \frac{1}{4}\cdot|\{c', u'_1, u'_2, u'_3 \mid ((c \oplus c') = (u_1 \oplus u'_1)) \text{ and } (u_1 \oplus u_2 \oplus u_3 = 0)\}|
\]
There are two solutions to the constraint $((c \oplus c') = (u_1 \oplus u'_1))$, and with fixed $u_1$, there are 2 solutions to
$u_1 \oplus u_2 \oplus u_3 = 0$, making for a total of 4 solutions.
Thus  \[ \sum_{i', j'} \langle v_{i\rightarrow i'}, v_{j \rightarrow j'} \rangle B_{i'j'} = \frac{1}{4} \cdot 4 = 1
\]
Case 3: Suppose there exists an edge $(u, w)$ such that $i = (u, w)_{b_i}$ and $j = (w, u)_{b_j}$.
Since $v_{k \rightarrow k'} = 0$ if $c(k) \neq c(k')$, we restrict our summation to be over $i'$ and $j'$ sharing a color with $i$ and $j$, respectively.
Thus we only consider $(i', j')$ pairs such that $i' = (u, w)'_{b'_i}$ and $j' = (w, u)'_{b'_j}$.
Without loss of generality, we assume that $\gamma(x_{(u, w)}) = 1$.
Note that with this assumption, $\gamma(x_{(w, u)}) = -1^{f((u, v)) \oplus g((u, v))}$.
The value of $\langle v_{i\rightarrow i'}, v_{j \rightarrow j'} \rangle$ therefore is $\frac{1}{4} +
(-1)^{b_i \oplus b'_i \oplus b_j \oplus b'_j \oplus f((u, v)) \oplus g((u, v))}\frac{1}{4}$.
Thus \[\langle v_{i\rightarrow i'}, v_{j \rightarrow j'} \rangle =
\begin{cases}
\frac{1}{2} & \text{ if } b_i \oplus b'_i \oplus b_j \oplus b'_j \oplus f((u, v)) \oplus g((u, v))= 0\\
0 & \text{ otherwise. }\\
\end{cases}
\]
Furthermore, we note that $A_{ij} = 1$ if and only if $b_i \oplus b_j \oplus f((u, v)) = 0$,
and $B_{i'j'} = 1$ if and only if $b'_i \oplus b'_j \oplus g((u, v)) = 0$.
Since $b_i \oplus b'_i \oplus b_j \oplus b'_j \oplus f((u, v)) \oplus g((u, v))= 0 \iff (b_i \oplus b_j \oplus f((u, v))) = (b'_i \oplus b'_j \oplus g((u, v))$,
we have that  \[\langle v_{i\rightarrow i'}, v_{j \rightarrow j'} \rangle =
\begin{cases}
\frac{1}{2} & \text{ if } A_{ij} = B_{i'j'}\\
0 & \text{ otherwise. }\\
\end{cases}
\]
Thus if $A_{ij} = 0$, then $\sum_{i', j'} \langle v_{i\rightarrow i'}, v_{j \rightarrow j'} \rangle B_{i'j'}$ = 0 since all terms are 0.
If $A_{ij} = 1$, then $B_{i'j'} = 1$ for all nonzero $\langle v_{i\rightarrow i'}, v_{j \rightarrow j'} \rangle$, and so
 \[ \sum_{i', j'} \langle v_{i\rightarrow i'}, v_{j \rightarrow j'} \rangle B_{i'j'} = \sum_{i', j'} \langle v_{i\rightarrow i'}, v_{j \rightarrow j'} \rangle \]
\[= \frac{1}{2} \cdot |\{b'_i, b'_j \mid b_i \oplus b'_i \oplus b_j \oplus b'_j \oplus f((u, v)) \oplus g((u, v)) = 0)\}|
\]
Fixing the other 4 variables, there are 2 choices of $(b'_i, b'_j)$ such that $b_i \oplus b'_i \oplus b_j \oplus b'_j \oplus f((u, v)) \oplus g((u, v)) = 0$,
and so
\[ \sum_{i', j'} \langle v_{i\rightarrow i'}, v_{j \rightarrow j'} \rangle B_{i'j'} = \frac{1}{2} \dot 2 = 1\]


\end{proof}

\begin{lemma}
{\eqref{eq:l3} $\forall \sigma_1, \sigma_2 \text{ s.t. } \sigma_1 \wedge \sigma_2 = \sigma_1' \wedge \sigma'_2,
\langle v_{\sigma_1}, v_{\sigma_2} \rangle = \langle v_{\sigma_1'}, v_{\sigma_2'} \rangle$ }
\end{lemma}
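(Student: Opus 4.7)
The plan is to show that $\langle v_{\sigma_1}, v_{\sigma_2}\rangle$ depends only on the meet $\sigma_1 \wedge \sigma_2$; the conclusion of the lemma then follows at once. Expanding definitions,
\[
\langle v_{\sigma_1}, v_{\sigma_2}\rangle \;=\; \sum_{\substack{S,T\in\mathbb{L}_{r/3}\\ [S]=[T]}} \widehat{h_{\sigma_1}}(S)\,\widehat{h_{\sigma_2}}(T)\,\gamma(S)\,\gamma(T),
\]
and the strategy is to change variables to $R = S\triangle T$ and apply the Fourier convolution identity $\widehat{fg}(R) = \sum_S \widehat{f}(S)\widehat{g}(S\triangle R)$ to the product $h_{\sigma_1}\cdot h_{\sigma_2}$.

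The first key ingredient is the pointwise identity $h_{\sigma_1}\cdot h_{\sigma_2} = h_{\sigma_1\wedge\sigma_2}$. When $\sigma_1\wedge\sigma_2 \neq \bot$ and is harmonious, the assignments $\alpha_{\sigma_1}$ and $\alpha_{\sigma_2}$ agree on their common domain and their union equals $\alpha_{\sigma_1\wedge\sigma_2}$, so the product of the two indicators is exactly the indicator of the joint assignment. Otherwise, either the two $\alpha$'s disagree on a common variable, or the combined flip-pattern (via Definition~\ref{definition:alpha}) forces conflicting values of some $y$-variable on a shared edge; in either case both sides are identically zero. Combining this with the convolution formula yields $\sum_S \widehat{h_{\sigma_1}}(S)\,\widehat{h_{\sigma_2}}(S\triangle R) = \widehat{h_{\sigma_1\wedge\sigma_2}}(R)$ for every $R$.

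The second key ingredient is that, for $S,T\in\mathbb{L}_{r/3}$, both the condition $[S]=[T]$ and the sign $\gamma(S)\gamma(T)$ are functions of $R = S\triangle T$ alone. Composing the width-$2r/3$ derivations $[S]^0 \sim S$ and $[S]^0 \sim T$ produces a derivation $\emptyset \sim R$ whose overall parity equals $\gamma(S)\gamma(T)$, so I can consistently define $\gamma(R) := \gamma(S)\gamma(T)$; Corollary~\ref{corollary:stable} guarantees this is well-defined. The condition $[S]=[T]$ then becomes $R \sim_{2r/3} \emptyset$. Substituting these identifications collapses the inner product to
\[
\langle v_{\sigma_1}, v_{\sigma_2}\rangle \;=\; \sum_{R \sim_{2r/3} \emptyset} \gamma(R)\,\widehat{h_{\sigma_1\wedge\sigma_2}}(R),
\]
which visibly depends only on $\sigma_1\wedge\sigma_2$, closing the plan.

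The main obstacle is width bookkeeping. The convolution identity wants $S$ to range over all subsets, whereas the inner-product sum is restricted to $S\in\mathbb{L}_{r/3}$; the rescue is that each vertex in $dom(\sigma_i)$ contributes at most three variables to $\alpha_{\sigma_i}$, so $|dom(\alpha_{\sigma_i})| \le 3\cdot r/9 = r/3$, making $h_{\sigma_i}$ an $(r/3)$-junta on $E(G)$ whose Fourier transform vanishes outside $\mathbb{L}_{r/3}$. With both $|S|,|T|\le r/3$ one automatically has $|R|\le 2r/3$, so $\sim_{2r/3}$ is precisely the right width at which to operate. The remaining delicate point is verifying $\gamma(S)\gamma(T) = \gamma(R)$ by concatenating width-$2r/3$ derivations while keeping every intermediate set within width $2r/3$; this follows from $|R|\le 2r/3$ together with Proposition~\ref{proposition:widthw}, which rules out the contradictory derivation $\emptyset \sim_r^- \emptyset$ that would otherwise make $\gamma$ ambiguous.
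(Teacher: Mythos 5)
Your argument is correct and follows essentially the same route as the paper's: you establish the same pointwise identity $h_{\sigma_1}\cdot h_{\sigma_2}=h_{\sigma_1\wedge\sigma_2}$ (with the same case split on consistency, color-coordination, and harmoniousness), expand the inner product in the $e_{[S]}$ basis, change variables to the symmetric difference, and invoke Fourier convolution plus the junta bound to collapse everything to a quantity depending only on $\widehat{h_{\sigma_1\wedge\sigma_2}}$. The paper's intermediate step reparameterizes $T\in[S]$ as $T=S\triangle U$ with $U\in[\emptyset]$, while you work directly with $R=S\triangle T$ and verify $\gamma(R):=\gamma(S)\gamma(T)$ is well-defined via Proposition~\ref{proposition:widthw}; this is a cosmetic reformulation of the same width-bookkeeping argument, not a different proof.
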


\begin{proof}
\begin{claim}
For all $\sigma_1$, $\sigma_2$,
$h_{\sigma_1} \cdot h_{\sigma_2} = h_{\sigma_1 \wedge \sigma_2}$.
\end{claim}
\begin{proof}
We split the proof into 2 main cases.
If $\sigma_1$ and $\sigma_2$ are consistent and ${\sigma_1 \wedge \sigma_2}$ is color-coordinated and harmonious, $\sigma_1$ and $\sigma_2$ are both harmonious, and so by construction of h and definition of $\wedge$, we have $h_{\sigma_1} \cdot h_{\sigma_2} = h_{\sigma_1 \wedge \sigma_2}$.
If these conditions are not satisfied, we show that $h_{\sigma_1} \cdot h_{\sigma_2} = 0$, as $h_{\sigma_1 \wedge \sigma_2} = h_\bot = 0$ in these cases.
\begin{itemize}
\item If $\sigma_1 \wedge \sigma_2$ is not color-coordinated,
then one of $\sigma_1$ or $\sigma_2$ is not color-coordinated, and so $h_{\sigma_1} = {\bf 0}$ or $h_{\sigma_2} = {\bf 0}$, and thus
$h_{\sigma_1} \cdot h_{\sigma_2} = {\bf 0}$.
\item If $\sigma_1$ and $\sigma_2$ are inconsistent,
then $h_{\sigma_1} \cdot h_{\sigma_2} = {\bf 0}$ by definition of $h$, as they encode partial permutations that are mutually exclusive.
\item If $\sigma_1 \wedge \sigma_2$ is consistent but not harmonious, then if $\sigma_1$ or $\sigma_2$ are not harmonious,
then $h_{\sigma_1} = {\bf 0}$ or $h_{\sigma_2} = {\bf 0}$. Otherwise, let u be a vertex that is a witness
to the fact that $\sigma_1$ and $\sigma_2$ are not harmonious, and for $i \in \{1, 2\}$, construct $\tilde{\sigma_i}$
to be the same as $\sigma_i$, except that $\tilde{\sigma_i}$ is defined for all vertices in $M(u)$ for each. By construction,
$h_{\tilde{\sigma_i}} = h_{\sigma_i}$, but $\tilde{\sigma_1}$ and $\tilde{\sigma_2}$ are not consistent, and so
$h_{\sigma_1} \cdot h_{\sigma_2} = h_{\tilde{\sigma_1}} \cdot h_{\tilde{\sigma_2}} = {\bf 0}$.
\end{itemize}
From this, we know that
$h_{\sigma_1} \cdot h_{\sigma_2} = h_{\sigma_1'} \cdot h_{\sigma_2'}$ if $\sigma_1 \wedge \sigma_2 = \sigma_1' \wedge \sigma_2'$.
The rest of the proof follows that of $\cite{S08}$, following from the fact that the fourier coefficients of $h_{\sigma_1 \wedge \sigma_2}$ and
$h_{\sigma'_1 \wedge \sigma'_2}$ are the same, and thus we can write $\langle v_{\sigma_1}, v_{\sigma_2} \rangle$ in terms of only $\widehat{h_{\sigma_1 \wedge \sigma_2}}$, the fourier coefficients of $h_{\sigma_1 \wedge \sigma_2}$.

It will follow that if $\sigma_1 \wedge \sigma_2 \equiv \sigma'_1 \wedge \sigma'_2$ then  $\langle v_{\sigma_1}, v_{\sigma_2} \rangle  =  \langle v_{\sigma'_1}, v_{\sigma'_2} \rangle$  because the fourier coefficients of $h_{\sigma_1 \wedge \sigma_2}$ and $h_{\sigma_1 \wedge \sigma_2}$.  For $[S] \in \mathbb{E}_{r/3}$, let $\widehat{h_{\sigma_1}}([\chi_S]) =
\sum_{S' \in [S]}  \widehat{h_{\sigma_1}}(S')\gamma(S')$ .  Then
\begin{eqnarray*}
  \langle v_{\sigma_1}, v_{\sigma_2} \rangle & = & \sum_{[S] \in \mathbb{E}_{r/3}} \langle \widehat{h_{\sigma_1}}([S]), \widehat{h_{\sigma_2}}([S]) \rangle \\
                            & = & \sum_{S  \in \mathbb{L}_{\frac{r}{3}}}  \widehat{h_{\sigma_1}}(S) \gamma(S)  \sum_{T \in [S]} \widehat{h_{\sigma_2}}(T)\gamma(S) \\
                           & = & \sum_{S  \in \mathbb{L}_{\frac{r}{3}}}  \widehat{h_{\sigma_1}}(S) \gamma(S) \sum_{U \in \mathcal{[\emptyset]}} \widehat{h_{\sigma_2}}(SU) \gamma(SU)  \\
                           & = & \sum_{U \in [\emptyset]} \gamma(U) \sum_{S  \in \mathbb{L}_{\frac{r}{3}}} \widehat{h_{\sigma_1}}(S)\widehat{h_{\sigma_2}}(S U)\\
                           & = &  \sum_{U \in [\emptyset]} \gamma(U) \widehat{h_{\sigma_1} \cdot {h_{\sigma_2}}}(U) \\
                           & = &  \sum_{U \in [\emptyset]} \gamma(U) \widehat{h_{\sigma_1 \wedge \sigma_2}}(U) \end{eqnarray*}
The second line follows from expanding the summands.  The third line follows from the fact that $[S] \subseteq S \cdot [\emptyset]$ and because $h_{\sigma_2}$ is a $\frac{r}{3}$-junta (since $\sigma$ has $\frac{r}{9}$ mappings, each of which induce a dependence on at most 3 bits of $h_{\sigma_2}$), $\widehat{h_{\sigma_2}}(S \Delta U) = 0$ if the size of $S \Delta U$ is greater than $\frac{r}{3}$.  The fourth line follows because $\gamma(S)\gamma(SU) = \gamma(U)$, and the fifth line from the fact that $ \widehat{h_{\sigma_1}} \cdot \widehat{h_{\sigma_2}}(U) = \sum_{S  \in \mathcal{L}^{\frac{r}{3}}} \widehat{h_{\sigma_1}}(S)\widehat{h_{\sigma_2}}(S U)$ because the full fourier expansions of $\widehat{h_{\sigma_1}}$ and $\widehat{h_{\sigma_2}}$ are captured by the characters of $\mathbb{L}_{\frac{r}{3}}$.


\end{proof}
\end{proof}
\begin{lemma}
\eqref{eq:l4}: $\forall i \in V(X_f(G)), \text{  } v_\sigma = \sum_{i' \in V(X_g(G))} v_{\sigma \wedge (i \rightarrow i')}$ and
\eqref{eq:l5}: $\forall i' \in V(X_f(G)), \text{  } v_\sigma = \sum_{i \in V(X_g(G))} v_{\sigma \wedge (i \rightarrow i')}$
\end{lemma}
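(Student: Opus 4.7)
My plan is to prove \eqref{eq:l4}; equation \eqref{eq:l5} will then follow by the completely symmetric argument in which the roles of $X_f(G)$ and $X_g(G)$ are interchanged. The starting point is the factorization $h_{\sigma_1 \wedge \sigma_2} = h_{\sigma_1} \cdot h_{\sigma_2}$ established inside the proof of \eqref{eq:l3}, which lets me write $\sum_{i'} h_{\sigma \wedge (i \rightarrow i')} = h_\sigma \cdot \sum_{i'} h_{(i \rightarrow i')}$, and then analyze $\sum_{i'} h_{(i \rightarrow i')}$ according to the type of $i$. If $i \in dom(\sigma)$ then only $i' = \sigma(i)$ gives a non-$\bot$ extension, so the identity is trivial. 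If $i = (u,v)_b$ is an edge vertex, the color constraint forces $i' \in \{(u,v)'_0, (u,v)'_1\}$, contributing indicators $\mathbf{1}[w_{(u,v)} = b \oplus b']$ for $b' \in \{0,1\}$, which sum pointwise to the constant $1$; hence $\sum_{i'} h_{\sigma \wedge (i \rightarrow i')} = h_\sigma$ and \eqref{eq:l4} holds Fourier-coefficient-by-Fourier-coefficient.

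The substantive case is $i = v_{b_1,b_2,b_3}$ a middle vertex. Here the four color-preserving choices $i' = v'_{b'_1,b'_2,b'_3}$ with $b'_1 \oplus b'_2 \oplus b'_3 = 0$ contribute indicators of systems of three bit-equations, and a direct count shows that for each $w$ exactly one of them is consistent when $w_T := w_{(v,u_1)} \oplus w_{(v,u_2)} \oplus w_{(v,u_3)} = 0$ and none otherwise, so $\sum_{i'} h_{(i \rightarrow i')} = \tfrac{1+\chi_T}{2}$ with $T := \{(v,u_1),(v,u_2),(v,u_3)\}$. Taking Fourier transforms and using $\widehat{h_\sigma \cdot \chi_T}(S) = \widehat{h_\sigma}(S \triangle T)$, the equality of the $e_{[S]}$-coefficients on the two sides of \eqref{eq:l4} reduces to
\[
\sum_{S' \in [S] \cap \mathbb{L}_{r/3}} \gamma(S') \widehat{h_\sigma}(S') \;=\; \sum_{S' \in [S] \cap \mathbb{L}_{r/3}} \gamma(S') \widehat{h_\sigma}(S' \triangle T).
\]
The crucial fact that makes this work is that $T$ has parity $0$ derivable from $\varphi(G,f,g)$ in constant width (combine the middle-vertex constraint at $v$ with the three $x \oplus y = 0$ constraints on the edges of $T$), so $T \in [\emptyset]_{2r/3}$ and $\gamma(T) = +1$; composing resolution proofs then yields $[S' \triangle T] = [S']$ and $\gamma(S' \triangle T) = \gamma(S')$ throughout. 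Performing the change of variables $R = S' \triangle T$ on the right-hand side therefore produces the same summand $\gamma(R) \widehat{h_\sigma}(R)$, so the only possible discrepancy is in the summation cutoffs $|S'| \leq r/3$ versus $|R \triangle T| \leq r/3$.

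The main obstacle I anticipate is controlling these boundary terms. For this I will use the bound $|U_\sigma| \leq 3|dom(\sigma)| \leq r/3$ on the edge-support $U_\sigma$ of $h_\sigma$ (each element of $dom(\sigma)$ constrains variables on at most three edges of $G$), which forces $\widehat{h_\sigma}(R) = 0$ whenever $R \not\subseteq U_\sigma$. If $T \subseteq U_\sigma$ then both $R$ and $R \triangle T$ stay inside $U_\sigma$, so $|R|, |R \triangle T| \leq |U_\sigma| \leq r/3$ and no boundary corrections arise. If on the other hand some $t \in T \setminus U_\sigma$ exists, then $t \notin R$ for every $R$ with $\widehat{h_\sigma}(R) \neq 0$, forcing $\widehat{h_\sigma}(R \triangle T) = 0$; working through the change of variables separately on each side shows that the remaining in-range contributions still match term-by-term. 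Combining the three cases establishes \eqref{eq:l4}, and the symmetric argument with $i'$ held fixed and $i$ varying establishes \eqref{eq:l5}.
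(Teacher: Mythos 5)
Your proof is correct, and it is substantially more careful than the argument the paper actually gives. The paper dispatches both \eqref{eq:l4} and \eqref{eq:l5} in one displayed line: it invokes linearity of the Fourier transform and then replaces $\widehat{\sum_{i'} h_{\sigma \wedge (i \rightarrow i')}}(S)$ by $\widehat{h_\sigma}(S)$ term by term. That replacement would follow from the pointwise identity $\sum_{i'} h_{\sigma \wedge (i \rightarrow i')} = h_\sigma$, but, as you correctly observe, that identity fails when $i$ is a middle vertex of some $v$ whose $y$-variables are not already pinned by $\sigma$ --- the sum picks up the factor $\tfrac{1}{2}(1+\chi_T)$ with $T$ the three edges at $v$. (Take $\sigma = \emptyset$: then $\sum_{i'} h_{(i \rightarrow i')} = \mathbf{1}[w_T=0] \neq 1$.) The paper never notes this, never mentions the need to invoke $T \in [\emptyset]$ and $\gamma(T) = +1$, and never addresses the change-of-variables cutoff. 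Your proof therefore supplies a genuine repair rather than merely a rederivation: the case split by vertex type, the identification of $T$ as an element of $[\emptyset]$ so that $[S' \triangle T] = [S']$ and $\gamma(S' \triangle T) = \gamma(S')$, and the bookkeeping on the support $U_\sigma$ are all substantive steps the paper omits.

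Two small points worth tightening. First, in your boundary analysis, the bound $|R \triangle T| \leq |U_\sigma| + |T|$ needs the slack $|U_\sigma| \leq r/3 - 3$, which holds because the lemma is only meaningful when $|dom(\sigma)| \leq r/9 - 1$ (so that $\sigma \wedge (i \to i')$ remains in the domain where $v$ is defined); it is worth stating this explicitly since you quote $|U_\sigma| \leq r/3$ rather than the tighter bound. Second, the phrase \emph{``the remaining in-range contributions still match term-by-term''} in the case $T \not\subseteq U_\sigma$ is a little compressed: what actually happens is that $\widehat{h_\sigma}(S')$ and $\widehat{h_\sigma}(S' \triangle T)$ are supported on \emph{disjoint} families of $S'$ (one requires $t \notin S'$, the other $t \in S'$), and after the substitution $R = S' \triangle T$ the two halves of the $\tfrac{1}{2}$-weighted sum recombine to reproduce the full sum over $S' \subseteq U_\sigma$. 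Spelling that out would make the argument airtight.
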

\begin{proof}
Using the fact about fourier functions that $\hat{f}(x) + \hat{g}(x) = \widehat{f+g}(x)$ for all $f$ and $g$, \[
\sum_{i' \in V(X_g(G))} v_{\sigma \wedge i \rightarrow i'} = \sum_{S \in \mathbb{L}^{r/3}} \bigg(\widehat{\sum_{i' \in V(X_g(G))}h_{\sigma \wedge i \rightarrow i'}\bigg)} (S)\gamma(S)e_{[S]} = \]\[\sum_{S \in \mathbb{L}^{r/3}} ( \hat{h_{\sigma}}) (S)\gamma(S)e_{[S]} = v_\sigma\].
\end{proof}

\section{$X_f(G)$ is an expander graph}
\noindent \textbf{Lemma~\ref{lemma:CFI-expansion}} (Restated)
For any 3-regular graph $G$ even parity function $f: E(G) \rightarrow \{0, 1\}$, $Ex(X_f(G)) \geq \frac{1}{54} Ex(G)$.  Moreover if $f$ is odd, and $Ex(G) > 987/n$, where $n$ is the number of vertices in $G$, then $Ex(X_f(G)) \geq \frac{1}{54} Ex(G)$.


To prove this theorem, we define the $k$-clustering of a graph which can be though of as a graph where each vertex $u$  is
replaced with a clique of size $deg(u)$, and each of the vertices in the clique is connected to one other node in a clique corresponding to a neighbor of $u$.

\begin{definition}\label{def:clusterign}
The \emph{clustering} of an undirected graph $G$, is a graph which we denote $Cl(G)$  which has two vertices $(u, v)$ and $(v, u)$ for each edge $(u, v) \in E(G)$.  And $E(Cl(G)) = \{((u, v), (v, u)) \mid (u, v) \in E(G)\} \cup \{((u, v_1), (u, v_2)) \mid (u, v_1), (u, v_2) \in E(G)\}$.
\end{definition}

Additionally, we recall Definition~\ref{def:k,t-stretch} that $H$ is a $(k, t)$-stretching of a graph $G$ if it can be obtained from $G$ by  inserting at most $k$ vertices into each edge in such a way for any particular vertex, at most $t$ vertices are inserted into its incident edges.


We now present two Lemmas that we will use in the proof of the main theorem.  We defer their proofs until later.

\begin{lemma} \label{lemma:clustering-expantion}
Let $G$ be a graph with min-degree at least 3 and max-degree $s$, then $Ex(Cl(G)) \geq \frac{1}{s} Ex(G)$.
\end{lemma}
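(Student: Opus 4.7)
The plan is to show that for every $S \subseteq V(Cl(G))$ with $|S| \leq |V(Cl(G))|/2$, the cut satisfies $|E_{Cl(G)}(S,S^c)| \geq \frac{Ex(G)}{s}|S|$. First set up notation: for each $u \in V(G)$ with degree $d_u$, let $V_u = \{(u,v) : v \in \Gamma(u)\}$ be the clique of $u$ in $Cl(G)$, and write $x_u = |S \cap V_u|$. Then $|S| = \sum_u x_u$ and the cut decomposes as $C_1 + C_2$, where $C_1 = \sum_u x_u(d_u - x_u)$ counts clique-cut edges and $C_2$ counts the bridge edges $((u,v),(v,u))$ with exactly one endpoint in $S$. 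Introduce the support set $\bar{S} = \{u \in V(G) : x_u > 0\}$; since $x_u \leq d_u \leq s$ whenever $u \in \bar{S}$, we have $|S| \leq s|\bar{S}|$.

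The core step is a charging argument establishing $|E_{Cl(G)}(S,S^c)| \geq |E_G(\bar{S}, \bar{S}^c)|$. For each $G$-edge $\{u,v\}$ with $u \in \bar{S}$ and $v \notin \bar{S}$, note that $(v,u) \notin S$ because $V_v \cap S = \emptyset$. If $(u,v) \in S$, assign one unit of charge to the bridge $((u,v),(v,u))$; if $(u,v) \notin S$, distribute $1/x_u$ uniformly among the $x_u$ clique-cut edges at $u$ incident to $(u,v)$. Every bridge edge is charged by at most one $G$-edge, and every clique-cut edge receives weight at most $1/x_u \leq 1$, so the number of cut edges of $Cl(G)$ receiving positive weight is at least the total assigned charge $|E_G(\bar{S}, \bar{S}^c)|$.

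Case on $|\bar{S}|$. If $|\bar{S}| \leq |V(G)|/2$, expansion yields $|E_G(\bar{S}, \bar{S}^c)| \geq Ex(G)|\bar{S}| \geq \frac{Ex(G)}{s}|S|$ and we are done. If $|\bar{S}| > |V(G)|/2$, apply the analogous charging to the complement with support $\overline{S^c} = \{u : x_u < d_u\}$: $|S^c| \leq s|\overline{S^c}|$, and combining with $|S| \leq |S^c|$ (from the hypothesis on $|S|$) gives $|E_{Cl(G)}(S,S^c)| \geq \frac{Ex(G)}{s}|S^c| \geq \frac{Ex(G)}{s}|S|$ provided $|\overline{S^c}| \leq |V(G)|/2$. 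In the remaining degenerate regime where both $|\bar{S}|, |\overline{S^c}| > |V(G)|/2$, the mixed set $M = \{u : 0 < x_u < d_u\}$ must be large, and I combine the charging bound with the clique-cut contribution $C_1 \geq \frac{d_\mathrm{min}}{2}\sum_{u \in M}\min(x_u, d_u - x_u) \geq \frac{3}{2}|M|$ to obtain $\frac{Ex(G)}{s}|S|$.

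The main obstacle is this third case, where a single charging bound does not suffice and an additive (not just maximum) combination with the clique-cut contribution is needed. The slack afforded by the hypothesis $d_\mathrm{min} \geq 3$ is essential here: it forces $x_u \geq 2$ at majority vertices (so the clique-charging weights stay bounded by $1$) and it gives a clique-cut contribution of at least $\frac{3}{2}$ per mixed vertex, which dominates the $Ex(G)/s \leq 1$ factor and allows the two bounds to be combined without double counting.
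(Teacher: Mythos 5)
Your route is genuinely different from the paper's. The paper fixes a set $S$ of \emph{minimum} expansion, assumes $Ex(S)<1$, and shows by a local-perturbation argument that such an $S$ must respect cluster boundaries (for every $u$, $C(u)\cap S$ is either $C(u)$ or $\emptyset$): if a cluster is partially occupied one can delete its $S$-part (or complete it) and strictly decrease the expansion. Once clusters are all-or-nothing, the map $T=\{u: C(u)\subseteq S\}$ transfers the cut of $S$ to the cut of $T$ in $G$ exactly, and the bound follows. Your argument instead works with an arbitrary $S$ and charges $G$-edges across the support boundary to cut edges of $Cl(G)$ directly. The benefit of your version is that it is explicit and constructive; the cost is that you cannot assume away ``mixed'' clusters, and that is precisely where the argument currently breaks.

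Concretely, there is a gap in your third case. First, the claim that the mixed set $M=\{u: 0<x_u<d_u\}$ ``must be large'' is not justified and is false in general: from $|\bar S|>n/2$ and $|\overline{S^c}|>n/2$ with $n=|V(G)|$ one only gets $|M|=|\bar S|+|\overline{S^c}|-n\ge 1$ (and $\ge 2$ when $n$ is even), so $M$ can be a single vertex while $|S|$ is order $sn$. Second, the proposed additive combination of the charging bound with $C_1\ge \tfrac32|M|$ is not legitimate as stated, because the charging argument itself pays some of its charge onto clique-cut edges (the case where $(u,v)\notin S$), and those same edges are counted again in $C_1$. To combine additively you would need to restrict $C_1$ to the uncharged clique-cut edges, or restrict the charging to bridges only (and then you lose some of $E_G(\bar S,\bar S^c)$); neither bookkeeping is spelled out, and it is not clear the resulting inequality closes, especially when $Ex(G)$ is close to $s$ (so that $1-Ex(G)/s$ is small and the surplus of $C_1$ over $\sum_M x_u$ is not enough slack). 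Even the target inequality you need, $C_1+C_2 \ge Ex(G)\,|F| + \tfrac{Ex(G)}{s}\,C_1$, is not derived. I would either (i) carry out the accounting carefully in Case~3 — charge only with $F$ (note $|F|<n/2$ here), track exactly which clique-cut edges are charged, and show the uncharged ones already cover $\tfrac{Ex(G)}{s}\sum_M x_u$ — or (ii) adopt the paper's trick of passing to a minimum-expansion $S$ first, which makes mixed clusters disappear and collapses all three of your cases into one.
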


\begin{lemma}\label{lemma:k-stretch-expantion}
Let $G$ be a graph with maximum degree $s$, and let $H$ be a $(k, t)$-stretch of $G$.
Then $Ex(H)\geq \min\{\frac{2}{sk} Ex(G),\frac{1}{t+1} Ex(G)\}$.
\end{lemma}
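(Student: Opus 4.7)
The plan is to fix $S \subseteq V(H)$ with $|S| \leq |V(H)|/2$, let $T = V(H) \setminus S$, and bound $|E_H(S,T)|$ by decomposing the cut across the paths of $H$ arising from the stretching. Let $g : V(G) \to V(H)$ witness the $(k,t)$-stretching, set $S_G = \{v \in V(G) : g(v) \in S\}$ and $T_G = V(G) \setminus S_G$, and classify each edge $(u,v) \in E(G)$ as type SS, TT, or ST according to whether $g(u), g(v)$ both lie in $S$, both in $T$, or on opposite sides. For each path $P(u,v)$ let $s_{uv}$ count its internal vertices in $S$ and $c_{uv}$ the cut edges of $E_H(S,T)$ on $P(u,v)$; write $S_P^{TT} = \sum_{TT} s_{uv}$ and analogously $T_P^{SS}$ for internal $T$-vertices on SS-paths.

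Two cut lower bounds drive the argument. First, every ST-path contains at least one cut edge, and distinct paths are internally vertex-disjoint, so $|E_H(S,T)| \geq |E_G(S_G, T_G)|$. Second, on any TT-path with $s_{uv} \geq 1$ the $S$-internals form at least one block bordered on both sides by $T$-vertices, contributing at least two cut edges, so $c_{uv} \geq 2$; combined with $s_{uv} \leq \ell_{uv} - 1 < k$ this gives $c_{uv} \geq (2/k)\, s_{uv}$ on every TT-path (the inequality is trivial when $s_{uv}=0$). Summing the two disjoint contributions yields $|E_H(S,T)| \geq |E_G(S_G, T_G)| + (2/k)\, S_P^{TT}$, and by an identical argument with SS-paths playing the role of TT-paths, $|E_H(S,T)| \geq |E_G(S_G, T_G)| + (2/k)\, T_P^{SS}$.

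The third ingredient is the counting inequality $|S| \leq (t+1)|S_G| + S_P^{TT}$. Summing the stretch bound $|\bigcup_{u \in \Gamma_G(v)} P(v,u) \setminus g(V(G))| \leq t$ over $v \in S_G$ and noting that each internal of an SS-path is charged twice while each internal of an ST-path is charged once gives $2(S_P^{SS} + T_P^{SS}) + (S_P^{ST} + T_P^{ST}) \leq t|S_G|$, so in particular $S_P^{SS} + S_P^{ST} \leq t|S_G|$; adding the $|S_G|$ image vertices in $S$ yields the claim. The analogous inequality $|T| \leq (t+1)|T_G| + T_P^{SS}$ follows by symmetry.

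To conclude, set $\alpha = \min(\tfrac{2}{sk}, \tfrac{1}{t+1})$ and observe that $\alpha(t+1) \leq 1$ and $\alpha\, Ex(G) \leq (2/(sk))\cdot s = 2/k$, using the elementary fact $Ex(G) \leq s$. If $|S_G| \leq |V(G)|/2$, the expansion of $G$ applied to $S_G$ gives $|E_G(S_G,T_G)| \geq Ex(G)|S_G|$, so $\alpha\, Ex(G)\, |S| \leq \alpha\, Ex(G)(t+1)|S_G| + \alpha\, Ex(G)\, S_P^{TT} \leq Ex(G)|S_G| + (2/k)\, S_P^{TT} \leq |E_H(S,T)|$. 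If instead $|S_G| > |V(G)|/2$, then $|T_G| < |V(G)|/2$, and the symmetric computation using $|T| \leq (t+1)|T_G| + T_P^{SS}$ and $|E_H(S,T)| \geq Ex(G)|T_G| + (2/k)\, T_P^{SS}$ gives $\alpha\, Ex(G)\, |T| \leq |E_H(S,T)|$, which combined with $|S| \leq |T|$ yields the same bound on $|S|$. The main obstacle is converting the purely local $(k,t)$-stretch condition into the clean global inequality $|S| \leq (t+1)|S_G| + S_P^{TT}$: one must carefully track how internal path vertices are shared between the two endpoints of their path and isolate precisely the TT-contribution, so that the two cut lower bounds combine with matching coefficients determined by $\alpha$.
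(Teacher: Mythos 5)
Your proof is correct, but it takes a genuinely different route from the paper. The paper argues extremally: it fixes a set $S$ of \emph{minimum} expansion, splits on whether $Ex(S)\geq 2/k$, and in the interesting case uses local exchange arguments (adding or deleting the interior of a stretched path would decrease expansion) to conclude that each path $P(u,v)$ may be assumed to lie entirely inside or outside $S$ (or form a consistent prefix when exactly one endpoint is in $S$); it then pulls back via $T=g^{-1}(S)$ to get $|E(T,T^c)|=|E(S,S^c)|$ and $\min\{|T|,|V(G)\setminus T|\}\geq \frac{1}{t+1}|S|$, hence $Ex(G)\leq (t+1)Ex(S)$. You instead give a direct, global charging argument valid for \emph{every} cut set: classifying paths as SS/ST/TT, you lower-bound the cut by $|E_G(S_G,T_G)|$ (one cut edge per ST-path) plus $\tfrac{2}{k}$ times the number of ``minority'' internal vertices on TT- (resp.\ SS-) paths, and upper-bound $|S|$ by $(t+1)|S_G|+S_P^{TT}$ via the $(k,t)$-stretch condition summed over $S_G$; the two estimates combine with the constant $\alpha=\min\{\tfrac{2}{sk},\tfrac{1}{t+1}\}$, using $Ex(G)\leq s$ just as the paper does. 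Your approach avoids the somewhat delicate minimality case analysis in the paper (where one must also check that the locally modified set still has at most half the vertices), handles both terms of the $\min$ in one computation, and quantifies over all sets rather than only an optimizer; the paper's approach buys a cleaner structural statement (WLOG the minimizing set is path-closed) and a modular pull-back step that it reuses in spirit for the clustering lemma. Both arguments share the implicit assumption, inherent in the definition of a $(k,t)$-stretching, that every non-image vertex of $H$ lies on some path $P(u,v)$.
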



We are now ready to prove Theorem~\ref{theorem:CFI-expansion}.

\begin{proof}
For now, assume that $f \equiv {\bf 0}$, we will later show how to get rid of this assumption.

We partition the vertices of $X_f(G)$ into the sets $S_0$ and $S_1$.  $$S_0 = \{u_{0, 0, 0} \mid u \in V(G)\} \cup \{(u, v)_0 \mid (u, v) \in E(G)\}$$
and $S_1 = V(X(G)) \setminus S_0$, and let $X_f^0(G)$ be the induced subgraph of $X_f(G)$ on $S_0$ and $X_f^1(G)$ be the
induced subgraph of $X_f(G)$ on $S_1$.\\

We now observe that both $X_f^0(G)$ and $X_f^1(G)$ have special properties.

First, $X_f^0(G)$ is a (2, 6)-stretching of $G$.  Observe that the function $g: V(G) \rightarrow X_f^0(G)$ such that $g(u) = u_{0, 0, 0}$ is a witness to this fact.
Thus by Lemma~\ref{lemma:k-stretch-expantion} we know that $Ex(X_f^0) \geq \frac{Ex(G)}{7}$

Second, $X_f^1(G)$ is a (1, 2)-stretching of $Cl(G)$, the clustering graph of $G$.  Observe that the function $g((u, v)) = (u, v)_1$ witnesses this fact, so that the clique edges of the $Cl(G)$  have an added midpoint, but the edges connecting cliques do not.  Thus by Lemma~\ref{lemma:clustering-expantion} and Lemma~\ref{lemma:k-stretch-expantion} we see that  $Ex(X_f^1(G)) \geq \frac{Ex(G)}{9}$.\\

We also note that $\frac{3}{4}$ of the vertices in $S_0$ have an edge to a vertex in $S_1$, and that $\frac{1}{2}$ of the vertices in $S_1$ have an edge to a vertex in $S_0$.  To see this note that for each $u \in G$ with neighbors $v_1, v_2, v_3$, there are 3 edges in $X(u)$ with one endpoint in $S_0$ and one in $S_1$--\\$\{((u, v_1)_0, u_{0, 1, 1}), ((u, v_2)_0, u_{1, 0, 1}), ((u, v_3)_0, u_{1, 1, 0})\}$--and each connects two different nodes.   \\



Let $T \subseteq V(X_f(G))$ be given, and let $T_0 = S_0 \cap T$ and $T_1 = S_1 \cap T$.
Let $b \in \{0, 1\}$ be such that $|T_b| \leq |T_{b \oplus 1}|$.\\

Suppose that $|T_b| \geq |T_{b \oplus 1}|/5$. Using the fact that the expansion of $X_f^b(G)$ is at least $\frac{Ex(G)}{9}$, and that $6|T_b| \geq |T|$,
$Ex(T) \geq (\frac{Ex(G)}{9})/6 = \frac{Ex(G)}{54}$.\\

Suppose that  $|T_b| \leq |T_{b \oplus 1}|/5$, but $|T_{b \oplus 1}| \leq 3|S_{b \oplus 1}|/4$. Using the fact that the expansion of $X_f^{b \oplus 1}(G)$ is at least $\frac{Ex(G)}{9}$, $Ex(T_{b \oplus 1})
\geq \frac{Ex(G)}{27}$. Furthermore, $2|T_{b \oplus 1}| \geq |T|$, $Ex(T) \geq (\frac{Ex(G)}{27})/2 = \frac{Ex(G)}{54}$.\\

Suppose that  $|T_b| \leq |T_{b \oplus 1}|/5$, and $|T_{b \oplus 1}| \geq 3|S_{b \oplus 1}|/4$. Note that since at least $\frac{1}{2}|S_{b \oplus 1}|$ vertices have an edge to
$S_b$, and $T_{b \oplus 1} \geq 3|S_{b \oplus 1}|/4$, at least $|S_{b \oplus 1}|/4 \leq |T_{b \oplus 1}|/3$ vertices in $T_{b \oplus 1}$ have an edge to $S_b$, but $|T_b| \leq |T_{b \oplus 1}|/5$,
and thus $Ex(T, T^c) \geq |T_{b \oplus 1}|/3 - |T_{b \oplus 1}|/5 = 2|T_{b \oplus 1}|/15$. Since $2|T_{b \oplus 1}| \geq |T|$, $Ex(T) \geq \frac{1}{15}$, and since
$Ex(G) \leq 3$ since G is 3-regular, $Ex(T) \geq \frac{Ex(G)}{45}$.\\

Thus regardless of the choice of $T \subseteq V(X_f(G))$, $Ex(T) \geq \frac{Ex(G)}{54}$ and so $Ex(X_f(G)) \geq \frac{Ex(G)}{54}$.

It remains to relax the assumption that $f$ is the zero function.  For any $f$ with even parity $X_f(G) \cong X_{\bf 0}(G)$.  By the same reasoning, we need only show that $X_f(G)$ where the first edge is 1, is also an expander.


Let $f$ have odd parity.  Fix a set $S \subseteq V(X_f(G))$ containing at most half the vertices.  We will show that the expansion of $S$ is large.

First consider the case that there exists an edge $(u, v) \in E(G)$ such that for $b \in \{0, 1\}$ neither $(u, v)_b$ nor $(v, u)_b$ are in $S$.  Then by Lemma~\ref{lemma:non-iso} $V(X_f(G))$ is isomorphic to a graph $V(X_g(G))$ where $(u, v)$ is the only twisted edge.  Let $\pi$ be the isomorphism between these graphs.  Then the expansion of $S$ is identical to the expansion of $\pi(S)$.  But $V(X_g(G))$ is identical to $V(X_\mathbf{0}(G))$ except for edges between these four vertices--none of which are in $\pi(S)$.  Thus its expansion is identical to the expansion of $\pi(S)$ in $V(X_\mathbf{0}(G))$ and is at least $Ex(G)/54$.

Next consider the case that there is no such edge  $(u, v) \in E(G)$.   This mean that $S$ contains at least one vertex from each of these $3n/2$ sets, so must be of size $3n/2$.  Note that $|V(X_f(G))| =  13n$.  Thus $|S| \geq 3|V(X_f(G))|/26$.   Then by Lemma~\ref{lemma:non-iso} $V(X_f(G))$  $V(X_f(G))$ is isomorphic to a graph $V(X_g(G))$ with only one only twisted edge.  Let $\pi$ be the isomorphism between these graphs.  Then the expansion of $S$ is identical to the expansion of $\pi(S)$.  But $V(X_g(G))$ is identical to $V(X_\mathbf{0}(G))$ except for 2 edges.  Thus the number of edges leaving $\pi(S)$ in $V(X_g(G))$ is at least the number of edges leaving $\pi(S)$ in $X_\mathbf{0}(G))$ minus 2.  Thus the expansion of $S$ in $V(X_f(G))$ is at least $\frac{E(S,V \setminus S) -2}{|S|} = \frac{Ex(S)|S|-2}{|S|} = Ex(S) - \frac{2}{|S|} > Ex(S) - \frac{4}{3n}$.  Because $Ex(S) \geq \frac{47}{555}Ex(G)$ as long as $Ex{G} > 987/n$  we have that the expansion of $S$ in $V(X_f(G))$ is greater than $Ex(G)/12$.

%
%
%

\end{proof}

We now prove Lemma~\ref{lemma:clustering-expantion}.

\begin{proof}
Let $S \subseteq V(Cl(G))$ such that $|S| \leq V(Cl(G))/2$ with minimum expansion be given.

If  $Ex(S) \geq 1$, then because $Ex(G) \leq s$, $Ex(Cl(G)) \geq \frac{1}{s} Ex(G)$ follows.

%
%
%

Thus, we now assume that $Ex(S) < 1$.  Let $C(u) := \{(u, v) \mid (u, v) \in E(G)\}$.

\begin{claim}
If $S$ is a set of minimal expansion and $Ex(S) < 1$, then for all $u \in V(G)$ either $C(u) \cap S = C(u)$ or $C(u) \cap S = \emptyset$.
\end{claim}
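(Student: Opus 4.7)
I will argue by contradiction: suppose some vertex $u \in V(G)$ with $\emptyset \subsetneq C(u) \cap S \subsetneq C(u)$ exists, and set $a := |C(u) \cap S|$, $b := |C(u) \setminus S|$. Since $G$ has minimum degree at least $3$, we have $a + b = \deg_G(u) \geq 3$, so at least one of $a, b$ is $\geq 2$. Writing $N := |V(Cl(G))|$, I will produce a modification (either $S^- := S \setminus C(u)$ or $S^+ := S \cup C(u)$) whose expansion is strictly smaller than $Ex(S)$, contradicting the minimality of $S$.

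The key local accounting: the clique on $C(u)$ contributes exactly $ab$ intra-clique edges to the cut $E(S, V(Cl(G)) \setminus S)$, all of which disappear under either modification. Every vertex of $C(u)$ also has exactly one ``cross'' edge leaving the clique (to its partner $(v,u) \in C(v)$), whose cut-status can flip when that vertex is moved. Let $\alpha, \beta$ count the vertices of $C(u) \cap S$ whose cross-partner lies inside, respectively outside, $S$ (so $\alpha + \beta = a$), and let $\gamma, \delta$ be the analogous counts for $C(u) \setminus S$ (so $\gamma + \delta = b$). A direct tally gives
\[
c(S^-) = c(S) - ab + (\alpha - \beta), \qquad |S^-| = |S| - a,
\]
\[
c(S^+) = c(S) - ab + (\delta - \gamma), \qquad |S^+| = |S| + b.
\]

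If $b \geq 2$, I pick $S^-$. Since $|S^-| < |S| \leq N/2$, $Ex(S^-) = c(S^-)/(|S| - a)$, and the inequality $Ex(S^-) < Ex(S)$ rearranges to $Ex(S) < b + (\beta - \alpha)/a$. Because $\alpha + \beta = a$ gives $(\beta - \alpha)/a \geq -1$, the right-hand side is at least $b - 1 \geq 1 > Ex(S)$. Otherwise $a \geq 2$ and $b = 1$; I pick $S^+$. Provided $|S^+| \leq N/2$, the symmetric rearrangement reduces to $(\delta - \gamma)/b - a < Ex(S)$, which holds since the left-hand side is at most $1 - a \leq -1 \leq 0 \leq Ex(S)$. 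The only residual subcase is $|S^+| > N/2$; since $b = 1$ this forces $|S| = N/2$, so $Ex(S^+) = c(S^+)/(N/2 - 1)$, and a short direct computation using $c(S) = (N/2) Ex(S) < N/2$ together with $c(S^+) \leq c(S) - a + 1$ yields $Ex(S^+) < Ex(S)$ whenever $a \geq 2$.

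The main obstacle I anticipate is sign bookkeeping: the terms $\alpha - \beta$ and $\delta - \gamma$ can be of either sign, so the cross-edge contribution to the cut could grow under the modification. The saving grace is the min-degree-$3$ hypothesis, which forces $\max(a, b) \geq 2$ and provides the slack (the ``$b - 1 \geq 1$'' or ``$a - 1 \geq 1$'' margins) needed to dominate the $Ex(S) < 1$ hypothesis; without it the argument would fail at $a = b = 1$.
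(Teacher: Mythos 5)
Your proof is correct and follows essentially the same route as the paper: remove $C(u)$ from $S$ when at least two of its vertices lie outside $S$, and add the single missing vertex when only one does, using $Ex(S)<1$ to force the modified set's expansion down. Your explicit $\alpha,\beta,\gamma,\delta$ cross-edge accounting is a finer version of the paper's cruder estimate (that removal kills at least $2|C(u)\cap S|$ intra-clique edges while adding at most $|C(u)\cap S|$ cross edges), and in the boundary subcase $|S^+|>N/2$ you recompute the denominator directly where the paper passes to the complement --- but these are cosmetic differences within the same argument.
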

\begin{proof}

Suppose that there exists a $u \in V(G)$ such that there exists $x$, $y$ such that $(u, x) \in S$ but $(u, y) \notin S$.

\begin{itemize}
\item[Case 1:] Suppose that $0 < |C(u) \cap S| \leq |C(u)|-2$.  We will show that the set $S' = S \setminus C(u)$ has less expansion than $S$, contradicting the minimal expansion property of $S$.  At most $|C(u) \cap S|$ edges are in $E(S', S'^c)$ that are not in $E(S, S^c)$ (the outgoing edge from $C(u)$ adjacent to each vertex in $C(u) \cap S$);
but there are at least $2|C(u) \cap S|$ edges in $E(S, S^c)$ that are not in $E(S', S'^c)$ (the edges within $C(u)$).
Thus $E(S', S'^c) \leq E(S, S^c) - |C(u) \cap S|$.  Giving us
$$Ex(S') \leq \frac{E(S, S^c) - |C(u) \cap S|}{|S| - |C(u) \cap S|} < \frac{E(S, S^c)}{|S|} = Ex(S),$$ the second inequality holding because $Ex(S) < 1$.

\item[Case 2:] Suppose that $C(u) \cap S = C(u) \setminus (u, v)$ for some $v$.  Let $T = S \cup \{(u, v)\}$.  If $|T| \leq n/2$ sets $S' = T$.  If $|T| > n/2$ set $S' = V(Cl(G)) \setminus T$.  We will show that the set $S'$ has less expansion than $S$, contradicting the minimal expansion property of $S$.  Either way we define $S'$, the only edge that can be in $E(S', S'^c) \setminus E(S, S^c)$ is the edge $((u, x), (x, u))$, but
the set $ E(S, S^c) \setminus E(S', S'^c)$ contains $\{((u, v_1), (u, v_2) \mid v_1 \neq v_2, (u, v_1), (u, v_2) \in E(G)\}$ whose cardinality is at least 2 since
$G$ has min-degree 3, so $E(S', S'^c) \leq E(S, S) - 1$.  We note that $|S'| \geq |S| - 1$.  Thus $$Ex(S') =  \frac{E(S', S'^c)}{|S'|} \leq \frac{E(S, S^c) - 1}{|S| - 1} < \frac{E(S, S^c)}{|S|} = Ex(S),$$ the third inequality holding because $Ex(S) < 1$.
\end{itemize}
%
\end{proof}

Given the above claim, we can provide a subset $T \subseteq V(G)$ such that $Ex(T) \leq s Ex(S)$, which will prove the Lemma.  Let $T = \{u \mid C(u) \cap S = C(u)\}$.
We note that $|E(T, T^c)| = |E(S, S^c)|$ since $((u, v), (v, u)) \in E(S, S^c)$ if and only if $(u, v) \in E(T, T^c)$ and by the Claim, no intracluster edges exit $S$.  
Furthermore $|T| \geq \frac{1}{s}|S|$, and $|V(G) \setminus |T| \geq \frac{1}{s}|S|$.  The former is because each cluster has at most $s$ nodes. For the same reason, $|V(G) \setminus T| \geq \frac{1}{s} |V(Cl(G)) \setminus S|$.  However, because $S$ has at most have the nodes we know $|V(CL(G)) \setminus S| \geq |S|$.  Putting these together we get  $|V(G) \setminus T| \geq  \frac{1}{s} |S|$.
Thus $$Ex(G) \leq Ex(T) = \frac{|E(T, T^c)|}{\min{|T|, |V(G) \setminus T|} \leq  \frac{|E(S, S^c)|}{\frac{1}{s}|S|}} = s Ex(S) = Ex(Cl(G)),$$ as claimed.
\end{proof}

We now prove Lemma \ref{lemma:k-stretch-expantion}.
\begin{proof}

Let $S \subseteq V(H) \mid |S| \leq V(H)/2$ with minimum expansion be given.

If  $Ex(S) \geq \frac{2}{k}$, then because $Ex(G) \leq s$, $Ex(H) \geq \frac{2}{ks} Ex(G)$ follows.

%
%
%

Thus, we now assume that $Ex(S) < \frac{2}{k}$.  Let $g$ witness the fact that $H$ is a $(k, t)$-stretching of $G$.

%

\begin{claim}  \label{claim:subclaim}
If $S$ is a set of minimal expansion and $Ex(S) < \frac{2}{k}$, then for each $(u, v) \in E(G)$:
\begin{itemize}
\item $g(u), g(v) \not\in S$, in which case none of $P(u, v)$ is in $S$.
\item $g(u) \in S$ but $g(v) \not\in S$, in which case there exists $i^*$ such that $p(u, v)_i \in S$ for $0 \leq i \leq i^*$ and $p(u, v)_i \not\in S$ for $i^* < i \leq \ell$; similarly for $g(v) \in S$ but $g(u) \not\in S$.
\item $g(u), g(v) \in S$ in which case all of $P(u, v) \setminus g(V(G))$ is in $S$.

\end{itemize}
\end{claim}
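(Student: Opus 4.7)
The plan is to prove each of the three bullets by contradiction: if the membership pattern of $S$ along a path $P(u,v)$ violates the claim, then a local modification of $S$ strictly decreases its expansion, contradicting the minimality of $S$. First I set up notation. Fix an edge $(u,v) \in E(G)$ and consider the path $P(u,v) = (g(u) = p_0, p_1, \ldots, p_\ell = g(v))$ in $H$, where $\ell \leq k$. By Definition~\ref{def:k,t-stretch}, each internal vertex $p_i$ with $1 \leq i \leq \ell-1$ has degree exactly $2$ in $H$, with its only neighbors being $p_{i-1}$ and $p_{i+1}$. Call a maximal subsequence $p_{i_1}, \ldots, p_{i_2}$ (with $1 \leq i_1 \leq i_2 \leq \ell-1$) an \emph{internal in-block} if all of its vertices lie in $S$ while $p_{i_1-1}, p_{i_2+1} \notin S$, and an \emph{internal out-block} if the memberships are reversed. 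Such a block has size $s := i_2 - i_1 + 1 \leq \ell - 1 \leq k-1$.

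The main technical step is the following flipping lemma. Removing an internal in-block from $S$ yields $S'$ with $|S'| = |S| - s$ and $|E(S',(S')^c)| = |E(S,S^c)| - 2$, since the two boundary edges of the block cease to cross the cut and no other edges are touched (internal vertices have no other neighbors). Since $s \leq k-1$, we have $2/s \geq 2/(k-1) > 2/k > Ex(S)$, and a direct calculation shows $Ex(S') < Ex(S)$ exactly when $Ex(S) < 2/s$, which holds. Symmetrically, adding an internal out-block produces $S''$ with $|S''| = |S|+s$ and $|E(S'',(S'')^c)| = |E(S,S^c)| - 2$; if $|S''| \leq |V(H)|/2$, the inequality $Ex(S'') < Ex(S)$ is immediate. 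Otherwise I pass to the complement $T := V(H) \setminus S''$, which has $|T| < |V(H)|/2$ and the same edge boundary, and use $|V(H) \setminus S| \geq |S|$ together with $Ex(S) < 2/s$ to conclude $Ex(T) < Ex(S)$. Either way, $S$ is not of minimum expansion, a contradiction.

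Finally, I reduce each case of the claim to finding such a block whenever the claim fails. In Case 1 ($g(u), g(v) \notin S$), any internal vertex in $S$ forces the existence of an internal in-block bracketed by ``out'' on both sides. Case 3 is symmetric. For Case 2 ($g(u) \in S$, $g(v) \notin S$, or vice versa), if the membership pattern along the path is not the prefix form described in the claim, then the in/out sequence must contain either a subpattern ``in-out-in'' (giving an internal out-block to add) or a subpattern ``out-in-out'' (giving an internal in-block to remove). In every violating configuration, the flipping lemma produces a set of strictly smaller expansion than $S$, the desired contradiction.

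The main obstacle I anticipate is the boundary case in Case 2 (and Case 3) where adding an out-block pushes $|S''|$ above $|V(H)|/2$; handling this cleanly requires the complement argument and careful use of both $|S| \leq |V(H)|/2$ and $Ex(S) < 2/k$. Otherwise, the proof is a chain of straightforward algebraic inequalities combined with the block-finding case analysis above.
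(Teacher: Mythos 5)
Your proof is correct and relies on exactly the same core idea as the paper's: if the membership pattern on a path violates the claim, a local modification of $S$ strictly decreases its expansion, using $Ex(S) < 2/k$ to control the trade-off between the change in the boundary and the change in the set size. The only real difference is granularity of the modification: the paper does one whole-path operation per case (removing all of $P(u,v)\cap S$, rearranging $P(u,v)\cap S$ into a prefix of the same cardinality, or adding all of $P(u,v)\setminus S$), whereas you flip a single maximal interior block of size $s\le k-1$. Your ``flipping lemma'' unifies the three cases and makes the $|S''|>|V(H)|/2$ complement step explicit, which the paper's third case handles more tersely (and with an apparent typo in the inequality $|S|-k\ge\min\{|S'|,|V(H)\setminus S'|\}$ that you avoid); these are stylistic improvements rather than a genuinely different route.
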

\begin{proof}
Consider the case where  $g(u), g(v) \not\in S$.  We claim that no vertex of $P(u, v) \setminus g(V(G))$ is in $S$.  If some were, then the set $S' = S \setminus P(u, v)$ would have less expansion then $S$ violating its minimality property.  There is at least two edges which go from $S \cup P(u, v)$ to $P(u, v) \setminus S$.  But $S'$ has no edges which goes from $S' \cup P(u, v)$ to $P(u, v) \setminus S$. Because $S$ and $S'$ differ only on $P(u, v) \setminus g(V(G))$, we know that $S'$ has at least two fewer edges leaving it than $S$.  But $S'$ also has at least $|S| - k$ vertices.   Thus $Ex(S') = \frac{|E(S', S'^c)|}{|S'|} < \frac{|E(S, S^c)| - 2}{|S| - k} \leq Ex(S)$.  The penultimate inequality follows because  $Ex(S) < \frac{2}{k}$.

Say that  $g(u) \in S$ and $g(v) \not\in S$, but there is no $i^*$ such that $p(u, v)_i \in S$ for $1 \leq i \leq i^*$ and $p(u, v)_i \not\in S$ for $i^* < i \leq \ell$.  Then $|E(S \cap P(u, v), P(u, v) \setminus S)| > 2$.  Let $j^*$ be the number of vertices in $P(u, v) \cap S$.  The set $S' = (S \setminus P(u, v)) \cup  \{p(u, v)_i\}_{0 \leq i < j^*}$ will have less expansion then $S$ violating its minimality property.  Note that $|E(S' \cap P(u, v), P(u, v) \setminus S')| =1$.  Because $S$ and $S'$ differ only on $P(u, v) \setminus g(V(G))$, we know that $S'$ has at least one fewer edges leaving it than $S$.  However, because $|S| = |S'|$ we see that $Ex(S') < Ex(S)$.

The case where  $g(v) \in S$ and $g(u) \not\in S$ is symmetric.

Finally, consider first the case that $g(u), g(v) \in S$, but some vertices of $P(u,v)$ are not in $S$.  Then the set $S' = S \cup P(u, v)$ would have less expansion than $S$ violating its minimality property.  Note that $|E(S \cap P(u, v), P(u, v) \setminus S)| \geq 2$ while $|E(S' \cap P(u, v), P(u, v) \setminus S')| =0$.  Because $S$ and $S'$ differ only on $P(u, v) \setminus g(V(G))$, we know that $S'$ has at least two fewer edges leaving it than $S$.

However, $|S| - k \geq \min\{|S'|, |V(H) \setminus S'|$.  Thus $Ex(S) = \frac{E(S, V(H)- S)}{|S|} \geq \frac{E(S, V(H)- S)-2}{|S|-k} > \frac{E(S', V(H)- S')}{ \min\{|S'|, |V(H) \setminus S'|} = Ex(S')$.  The penultimate inequality follows because  $Ex(S) < \frac{2}{k}$.
%
%
%
%
%
%
%
%

\end{proof}

Given the above claim, we can provide a subset $T \subseteq V(G)$ such that $Ex(T) \leq (t + 1) Ex(S)$, which will prove the Lemma.  Let $T = g^{-1}(S)$.  By Claim \ref{claim:subclaim}, $|E(T, T^c)| = |E(S, S^c)|$.  Furthermore $|T| \geq \frac{1}{t+1}|S|$, and $|V(G) \setminus |T| \geq \frac{1}{t}|S|$.  The former is because each vertex of $G$ corresponds to at most $t+1$ vertices in $H$.  For the same reason, $|V(G) \setminus T| \geq \frac{1}{t+1} |V(H) \setminus S|$.  However, because $S$ has at most half the nodes we know $|V(H) \setminus S| \geq |S|$.  Putting these together we get  $|V(G) \setminus T| \geq  \frac{1}{t+1} |S|$.
\end{proof}



\bibliographystyle{abbrv}
\bibliography{grant}

\begin{thebibliography}{10}

\bibitem{AroraG11}
S.~Arora and R.~Ge.
\newblock New tools for graph coloring.
\newblock In {\em 14th International Workshop, APPROX 2011, and 15th
  International Workshop, RANDOM 2011}, volume 6845 of {\em Lecture Notes in
  Computer Science}, pages 1--12. Springer, 2011.

\bibitem{AroraKKSTV08}
S.~Arora, S.~Khot, A.~Kolla, D.~Steurer, M.~Tulsiani, and N.~K. Vishnoi.
\newblock Unique games on expanding constraint graphs are easy: extended
  abstract.
\newblock In {\em Proceedings of the 40th ACM Symposium on Theory of
  Computing}, pages 21--28, 2008.

\bibitem{AtseriasM12}
A.~Atserias and E.~Maneva.
\newblock Sherali-adams relaxations and indistinguishability in counting
  logics.
\newblock In {\em Proceedings of the 3rd Innovations in Theoretical Computer
  Science Conference}, ITCS '12, pages 367--379, New York, NY, USA, 2012. ACM.

\bibitem{Babai79}
L.~Babai.
\newblock Monte {C}arlo algorithms in graph isomorphism testing.
\newblock In {\em Universit\'e de Montr\'eal Technical Report}, volume~70 of
  {\em DMS}, page~42, 1979.
\newblock {\tt http://people.cs.uchicago.edu/$\sim$laci/lasvegas79.pdf}.

\bibitem{BabaiCSSW13}
L.~Babai, X.~Chen, X.~Sun, S.-H. Teng, and J.~Wilmes.
\newblock Faster canonical forms for strongly regular graphs.
\newblock In {\em Proceedings of the 54th Annual Symposium on Foundations of
  Computer Science (Berkeley, CA, USA), FOCS}, volume~13, 2013.

\bibitem{BabaiL83}
L.~Babai and E.~M. Luks.
\newblock Canonical labeling of graphs.
\newblock In {\em Proc. 15th STOC}, pages 171--183. ACM Press, 1983.

\bibitem{BabaiM88}
L.~Babai and S.~Moran.
\newblock Arthur-{M}erlin games: a randomized proof system, and a hierarchy of
  complexity classes.
\newblock {\em J. Computer and Sys. Sci.}, 36:254--276, 1988.

\bibitem{BarakRS11}
B.~Barak, P.~Raghavendra, and D.~Steurer.
\newblock Rounding semidefinite programming hierarchies via global correlation.
\newblock In {\em Proceedings of the 52th IEEE Symposium on Foundations of
  Computer Science}, 2011.

\bibitem{BoppanaHZ87}
R.~B. Boppana, J.~Hastad, and S.~Zachos.
\newblock Does co-{N}{P} have short interactive proofs?
\newblock {\em Information Processing Letters}, 25:127--132, May 1987.

\bibitem{CharikarMM-07}
M.~Charikar, K.~Makarychev, and Y.~Makarychev.
\newblock Integrality gaps for {S}herali-{A}dams relaxations.
\newblock In {\em Proceedings of the 41st ACM Symposium on Theory of
  Computing}, 2009.

\bibitem{Ellis}
D.~Ellis.
\newblock The expansion of random regular graphs.
\newblock https://www.dpmms.cam.ac.uk/\~dce27/randomreggraphs3.pdf.

\bibitem{GolreichMW91}
O.~Goldreich, S.~Micali, and A.~Wigderson.
\newblock Proofs that yield nothing but their validity, or all languages in
  {N}{P} have zero-knowledge proof systems.
\newblock {\em Journal ACM}, 38:690--728, 1991.

\bibitem{GuruswamiS11}
V.~Guruswami and A.~K. Sinop.
\newblock Lasserre hierarchy, higher eigenvalues, and approximation schemes for
  graph partitioning and quadratic integer programming with psd objectives.
\newblock In {\em 52nd Annual Symposium on Foundations of Computer Science},
  pages 482--491. IEEE, 2011.

\bibitem{GuruswamiS13}
V.~Guruswami and A.~K. Sinop.
\newblock Approximating non-uniform sparsest cut via generalized spectra.
\newblock In {\em Proceedings of the Twenty-Fourth Annual ACM-SIAM Symposium on
  Discrete Algorithms}, pages 295--305. SIAM, 2013.

\bibitem{bliss}
T.~Junttila and P.~Kaski.
\newblock Conflict propagation and component recursion for canonical labeling.
\newblock In {\em Proc. First ICST}, TAPAS, pages 151--162, 2011.

\bibitem{saucy}
H.~Katebi, K.~A. Sakallah, and I.~L. Markov.
\newblock Conflict {A}nticipation in the {S}earch for {G}raph {A}utomorphisms.
\newblock In {\em Int'l Conf. on Logic for Programming, Artificial Intelligence
  and Reasoning (LPAR)}, Merida, Venezuela, 2012.

\bibitem{Luks82}
E.~M. Luks.
\newblock Isomorphism of graphs of bounded valence can be tested in polynomial
  time.
\newblock {\em J. Comp. Sys. Sci.}, 25:42--65, 1982.

\bibitem{nauty}
B.~D. McKay.
\newblock Practical {G}raph {I}somorphism.
\newblock In {\em Congressus Numerantium, 30}, pages 45--87, 1981.

\bibitem{MontanariS09}
A.~Montanari and A.~Saberi.
\newblock Convergence to equilibrium in local interaction games.
\newblock In {\em Proceedings of the 50th IEEE Symposium on Foundations of
  Computer Science}, 2009.

\bibitem{ODonnell2013}
R.~O'Donnell.
\newblock Analysis of boolean functions, 2013.
\newblock http://www.contrib.andrew.cmu.edu/~ryanod/.

\bibitem{ODonnellWWZ2014}
R.~O'Donnell, J.~Wright, C.~Wu, and Y.~Zhou.
\newblock Hardness of robust graph isomorphism, lasserre gaps, and asymmetry of
  random graphs.
\newblock In {\em Proceedings of the Twenty-Fourth Annual ACM-SIAM Symposium on
  Discrete Algorithms}, 2014.
\newblock To Appear.

\bibitem{S08}
G.~Schoenebeck.
\newblock Linear level {L}asserre lower bounds for certain k-{CSP}s.
\newblock In {\em Proceedings of the 49th IEEE Symposium on Foundations of
  Computer Science}, pages 593--692, 2008.

\bibitem{STT-07b}
G.~Schoenebeck, L.~Trevisan, and M.~Tulsiani.
\newblock A linear round lower bound for {L}ovasz-{S}chrijver {SDP} relaxations
  of {V}ertex {C}over.
\newblock In {\em Proceedings of the 39th ACM Symposium on Theory of Computing
  (STOC07)}, 2007.
\newblock Earlier version appeared as Technical Report TR06-098 on Electronic
  Colloquium on Computational Complexity.

\bibitem{STT-2007a}
G.~Schoenebeck, L.~Trevisan, and M.~Tulsiani.
\newblock Tight integrality gaps for {L}ovasz-{S}chrijver {LP} relaxations of
  {V}ertex {C}over and {M}ax {C}ut.
\newblock In {\em Proceedings of the 39th ACM Symposium on Theory of
  Computing}, 2007.
\newblock Earlier version appeared as Technical Report TR06-132 on Electronic
  Colloquium on Computational Complexity.

\bibitem{Spielman96}
D.~A. Spielman.
\newblock Faster isomorphism testing of strongly regular graphs.
\newblock In {\em Proceedings of the 28th ACM Symposium on Theory of
  Computing}, pages 576--584, New York, NY, USA, 1996. ACM.

\bibitem{WeisfeilerL68}
B.~Weisfeiler and A.~A. Lehman.
\newblock A reduction of a graph to a canonical form and an algebra arising
  during this reduction, (in russian).
\newblock {\em Nauchno-Technicheskaya Informatsia}, Seriya 2, 9:12--16, 1968.

\bibitem{CaiFI89}
J.~yi~Cai and N.~Martin~F\"urer, Neil~Immerman.
\newblock An optimal lower bound on the number of variables for graph
  identification.
\newblock {\em Proceedings of the 30th IEEE Symposium on Foundations of
  Computer Science}, 12(4):389--410, 1989.

\end{thebibliography}

\renewcommand{\thesection}{\Alph{section}}
\setcounter{section}{0}

\section{Additional Background on \GI} \label{appendix:background}

\paragraph{Partition and Refinement}\label{sec:Partition-Refinement}
One of the first approaches to the (colored) \GI problem is \emph{partition and refinement}. The initial coloring of colored graphs induces a partition of the set of vertices into sets with the same color. The procedure iteratively refines this partition, by assigning new colors to the vertices. The new color of a vertex is the set of colors of its neighbors. The refinement step can be repeated until the coloring converges. If after any step the two graphs have a different number of vertices of a certain color, then the graphs are not isomorphic.  The procedure can be applied to uncolored graphs as well: initially assign the same color to all vertices, the first step of refinement will label vertices by their degree.  Let $n$ be the number of vertices of the graphs. At every step, there are at most $n$ distinct labels, and there are at most $n$ refinement steps before convergence, hence this procedure runs in polynomial-time. Every isomorphism that preserves the initial coloring must preserve all refinements obtained in this procedure. However, there is no guarantee that when the algorithm converges the two graphs are indeed isomorphic. For example, for uncolored regular graphs of the same degree, the procedure does not even get started. This procedure was first introduced by Weisfeiler and Lehman~\cite{WeisfeilerL68}

\paragraph{$k$-WL}
A natural generalization of this algorithm looks at more than neighbors for refinement. The $k$-dimensional Weisfeler-Lehman process ($k$-WL from now on) generalizes the refinement step described above. The new color of a vertex $v$ is given by the set of (colored) isomorphism types of the subgraphs of size $k$ that include $v$. We refer the reader to~\cite{CaiFI89} for a more complete description of this process, as it is not required to understand this paper. Since it must look at all subsets of vertices of size $k$, the new algorithm runs in time $n^{O(k)}$. There was hope that $k$-WL for small values of $k$ might solve \GI. This was shown not to be the case by Cai, F\"urer, and Immerman~\cite{CaiFI89}. They constructed families of non-isomorphic pairs of graphs that fail to be distinguished by the $k$-WL algorithm for any $k = \Omega(n)$, thus ruling out a subexponential running time.    Recently, Atserias and Maneva showed that $k$-WL is equivalent to $k\pm 1$ rounds of the Sherali-Adams hierarchy~\cite{AtseriasM12}. We extend their result to the Lasserre hierarchy, while relaxing the constants.

\paragraph{Group Theoretic Methods for \GI}
The approach that yields the fastest (worst-case) algorithms for \GI is group theoretic. These algorithms exploit the group-theoretic structure of the space of permutations and the set of isomorphisms\footnote{the first is a group, the second a coset of the subgroup of automorphisms, isomorphisms of the graph to itself}. This approach was first introduced by Babai~\cite{Babai79}. Luks used group theory in greater depth to obtain a polynomial-time algorithm for graphs of bounded degree~\cite{Luks82}. Combined with a combinatorial trick due to Zemlyachenko, Luks's algorithm yielded an algorithm for \GI that runs in time $\exp(\sqrt{n\log(n)})$ (see \cite{BabaiL83}), which remains the best known run-time for the general problem.

It is noteworthy that group theory was first used in the \GI problem to obtain a polynomial-time algorithm for colored graphs with bounded color-class size~\cite{Babai79}. The graphs constructed by Cai, F\"urer, and Immermann~\cite{CaiFI89} fall into this class, and thus are decidable in polynomial time even by this first application of group theory (though they were constructed a decade later).

\paragraph{\GI in practice}
\GI is of practical interest, and several software packages (see e.\,g.\, Nauty~\cite{nauty}, Saucy~\cite{saucy} and Bliss \cite{bliss}) use a combination of partition and refinement, backtrack search, heuristics, and some basic group theory to solve the problem.  Unlike $k$-WL, which may or many not get a correct answer, but has a guaranteed running time; these algorithms guarantee a correct answer, but have no guarantees on running time.  Experimental evaluation shows these software packages work well on a large variety of instances, including the particular Cai, F\"urer, and Immerman instance that shows the failure of $k$-WL.  Miazaki generalized this family of instances to be hard in practice.  With  repeated refinements, the software now can solve many these, at least when the instances are moderately sized (note that even a cubic running time would be prohibitive for graphs with on the order of $10^4$ vertices).

\section{Picture of $X_f(G)$} \label{appendix:pictures}
Suppose that G is the following graph:

\vspace{10mm}

\begin{tikzpicture}[yscale=0.6]

  \node[circle,draw=red!75,fill=red!20,minimum size=10mm] (v2) at (0, 3.5) {v2};
  \node[circle,draw=yellow!75,fill=yellow!20,minimum size=10mm] (v1) at (-2.0, 3.5) {v1};
  \node[circle,draw=blue!75,fill=blue!20,minimum size=10mm] (v3) at (1.4, 4.8) {v3};
  \node[circle,draw=green!75,fill=green!20,minimum size=10mm] (v4) at (1.4, 2.2) {v4};
  \node[circle,draw=gray!75,fill=gray!20,minimum size=10mm] (v5) at (2.8, 3.5) {v5};

  \draw (v1) -- (v2);
  \draw (v2) -- (v3);
  \draw (v2) -- (v4);
  \draw (v3) -- (v5);
  \draw (v4) -- (v5);

\end{tikzpicture}

Then $X_{\bf 0}(G)$ is the following graph:

\vspace{0.5cm}

\begin{tikzpicture}[xscale=0.75, yscale=0.6]

   \tikzstyle{mid}=[circle,minimum size=10mm]
   \tikzstyle{edge}=[circle,minimum size=7mm]

  \node [mid,draw=yellow!75,fill=yellow!40] (mm) at (0, 3.5) {$\emptyset$};
  \node [edge,draw=yellow!75,fill=yellow!20] (tt1) at (1.7,4.25)  {T1};
  \node [edge,draw=yellow!75,fill=yellow!20] (ff1) at (1.7,2.75) {F1};
  \node [mid,draw=red!75,fill=red!40] (m12) at (6,2.5)  {1, 2};
  \node [mid,draw=red!75,fill=red!40] (m13) at (8,4.5)  {1, 3};
  \node [mid,draw=red!75,fill=red!40] (m23) at (6,4.5) {2, 3};
  \node [mid,draw=red!75,fill=red!40] (m) at (8,2.5)  {$\emptyset$};
  \node [edge,draw=red!75,fill=red!30] (t1) at (3.5,4.25)  {T1};
  \node [edge,draw=red!75,fill=red!30] (f1) at (3.5,2.75) {F1};
  \node [edge,draw=red!75,fill=red!19] (t3) at (10.5,0.75)  {T3};
  \node [edge,draw=red!75,fill=red!19] (f3) at (10.5,2.25) {F3};
  \node [edge,draw=red!75,fill=red!8] (t2) at (10.5,4.75)  {T2};
  \node [edge,draw=red!75,fill=red!8] (f2) at (10.5,6.25) {F2};
  \node [mid,draw=blue!75,fill=blue!40] (mmm) at (14.8,8)  {$\emptyset$};
  \node [mid,draw=blue!75,fill=blue!40] (mmm24) at (13,8)  {2,4};
  \node [edge,draw=blue!75,fill=blue!8] (tt2) at (12.2,4.75)  {T2};
  \node [edge,draw=blue!75,fill=blue!8] (ff2) at (12.2,6.25) {F2};
  \node [edge,draw=blue!75,fill=blue!19] (t4) at (15.2,4.75)  {T4};
  \node [edge,draw=blue!75,fill=blue!19] (f4) at (15.2,6.25) {F4};
  \node [mid,draw=green!75,fill=green!40] (mmmm) at (14.8,-1)  {$\emptyset$};
  \node [mid,draw=green!75,fill=green!40] (mmmm35) at (13,-1)  {3,5};
  \node [edge,draw=green!75,fill=green!19] (tt3) at (12.2,0.75)  {T3};
  \node [edge,draw=green!75,fill=green!19] (ff3) at (12.2,2.25) {F3};
  \node [edge,draw=green!75,fill=green!8] (t5) at (15.2,0.75)  {T5};
  \node [edge,draw=green!75,fill=green!8] (f5) at (15.2,2.25) {F5};
  \node [mid,draw=gray!75,fill=gray!40] (mmmmm) at (18.5,2.5)  {$\emptyset$};
  \node [mid,draw=gray!75,fill=gray!40] (mmmmm45) at (18.5,4.5)  {4,5};
  \node [edge,draw=gray!75,fill=gray!8] (tt5) at (16.8,0.75)  {T5};
  \node [edge,draw=gray!75,fill=gray!8] (ff5) at (16.8,2.25) {F5};
  \node [edge,draw=gray!75,fill=gray!19] (tt4) at (16.8,4.75)  {T4};
  \node [edge,draw=gray!75,fill=gray!19] (ff4) at (16.8,6.25) {F4};

  \foreach \from/\to in {m/f1,m/f2,m/f3,m12/t1,m12/t2,m12/f3,m13/t1,m13/f2,m13/t3,m23/f1,m23/t2,m23/t3,t1/tt1,f1/ff1,mm/ff1,
                                     mmmmm/ff5,mmmmm/ff4,f4/ff4,t4/tt4,f5/ff5,t5/tt5,t3/tt3,f3/ff3,f2/ff2,t2/tt2,mmm/ff2,mmm/f4,mmm24/tt2,mmm24/t4,
                                     mmmm/ff3,mmmm/f5,mmmm35/tt3,mmmm35/t5,mmmmm45/tt5,mmmmm45/tt4}
    \draw (\from) -- (\to);

\end{tikzpicture}

And if $f$ is specified by $f((v_1, v_2))= 1$, $f((v_3, v_5)) = 1$, and $f((u, v)) = 0$ otherwise, $X_f(G)$ is the following:

\vspace{0.5cm}

\begin{tikzpicture}[xscale=0.75, yscale=0.6]

   \tikzstyle{mid}=[circle,minimum size=10mm]
   \tikzstyle{edge}=[circle,minimum size=7mm]

  \node [mid,draw=yellow!75,fill=yellow!40] (mm) at (0, 3.5) {$\emptyset$};
  \node [edge,draw=yellow!75,fill=yellow!20] (tt1) at (1.7,4.25)  {T1};
  \node [edge,draw=yellow!75,fill=yellow!20] (ff1) at (1.7,2.75) {F1};
  \node [mid,draw=red!75,fill=red!40] (m12) at (6,2.5)  {1, 2};
  \node [mid,draw=red!75,fill=red!40] (m13) at (8,4.5)  {1, 3};
  \node [mid,draw=red!75,fill=red!40] (m23) at (6,4.5) {2, 3};
  \node [mid,draw=red!75,fill=red!40] (m) at (8,2.5)  {$\emptyset$};
  \node [edge,draw=red!75,fill=red!30] (t1) at (3.5,4.25)  {T1};
  \node [edge,draw=red!75,fill=red!30] (f1) at (3.5,2.75) {F1};
  \node [edge,draw=red!75,fill=red!19] (t3) at (10.5,0.75)  {T3};
  \node [edge,draw=red!75,fill=red!19] (f3) at (10.5,2.25) {F3};
  \node [edge,draw=red!75,fill=red!8] (t2) at (10.5,4.75)  {T2};
  \node [edge,draw=red!75,fill=red!8] (f2) at (10.5,6.25) {F2};
  \node [mid,draw=blue!75,fill=blue!40] (mmm) at (14.8,8)  {$\emptyset$};
  \node [mid,draw=blue!75,fill=blue!40] (mmm24) at (13,8)  {2,4};
  \node [edge,draw=blue!75,fill=blue!8] (tt2) at (12.2,4.75)  {T2};
  \node [edge,draw=blue!75,fill=blue!8] (ff2) at (12.2,6.25) {F2};
  \node [edge,draw=blue!75,fill=blue!19] (t4) at (15.2,4.75)  {T4};
  \node [edge,draw=blue!75,fill=blue!19] (f4) at (15.2,6.25) {F4};
  \node [mid,draw=green!75,fill=green!40] (mmmm) at (14.8,-1)  {$\emptyset$};
  \node [mid,draw=green!75,fill=green!40] (mmmm35) at (13,-1)  {3,5};
  \node [edge,draw=green!75,fill=green!19] (tt3) at (12.2,0.75)  {T3};
  \node [edge,draw=green!75,fill=green!19] (ff3) at (12.2,2.25) {F3};
  \node [edge,draw=green!75,fill=green!8] (t5) at (15.2,0.75)  {T5};
  \node [edge,draw=green!75,fill=green!8] (f5) at (15.2,2.25) {F5};
  \node [mid,draw=gray!75,fill=gray!40] (mmmmm) at (18.5,2.5)  {$\emptyset$};
  \node [mid,draw=gray!75,fill=gray!40] (mmmmm45) at (18.5,4.5)  {4,5};
  \node [edge,draw=gray!75,fill=gray!8] (tt5) at (16.8,0.75)  {T5};
  \node [edge,draw=gray!75,fill=gray!8] (ff5) at (16.8,2.25) {F5};
  \node [edge,draw=gray!75,fill=gray!19] (tt4) at (16.8,4.75)  {T4};
  \node [edge,draw=gray!75,fill=gray!19] (ff4) at (16.8,6.25) {F4};

  \foreach \from/\to in {m/f1,m/f2,m/f3,m12/t1,m12/t2,m12/f3,m13/t1,m13/f2,m13/t3,m23/f1,m23/t2,m23/t3,mm/ff1,
                                     mmmmm/ff5,mmmmm/ff4,f5/ff5,t5/tt5,t3/tt3,f3/ff3,f2/ff2,t2/tt2,mmm/ff2,mmm/f4,mmm24/tt2,mmm24/t4,
                                     mmmm/ff3,mmmm/f5,mmmm35/tt3,mmmm35/t5,mmmmm45/tt5,mmmmm45/tt4}
    \draw (\from) -- (\to);
  \draw [color=red, ultra thick] (t1) -- (ff1);
  \draw [color=red, ultra thick] (f1) -- (tt1);
  \draw [color=red, ultra thick] (t4) -- (ff4);
  \draw [color=red, ultra thick] (f4) -- (tt4);
\end{tikzpicture}

\section{Computation of the vectors for individual vertices} \label{appendix:vectorcomputation}

\begin{itemize}
\item For $(u, v) \in E(G)$,  \[v_{(u, v)_b \rightarrow (u, v)_c'} = \hat{h}_{(u, v)_b \rightarrow (u, v)_c'}(\emptyset)\gamma(\emptyset)e_{[\emptyset]} +
\hat{h}_{(u, v)_b \rightarrow (u, v)_c'}(\{(u, v)\})\gamma((u, v))e_{[(u, v)]}\] \[=
\frac{1}{2}e_{[\emptyset]} +
 \frac{1}{2}(-1)^{b \oplus c}\gamma((u, v))e_{[(u, v)]}\]

\item For $u \in V(G)$ with $\Gamma(u) = \{u_1, u_2, u_3\}$, if we denote $u_{b_{u_1}, b_{u_2} b_{u_3}} \rightarrow u'_{\bar{b}_{u_1}, \bar{b}_{u_2}, \bar{b}_{u_3}}$ by $\sigma$, and assuming that $\gamma({u, u_i}) = 1$ for all i, we have \[v_{u_{b_{u_1}, b_{u_2} b_{u_3}} \rightarrow u'_{\bar{b}_{u_1}, \bar{b}_{u_2}, \bar{b}_{u_3}}} =
\]\[
\hat{h}_{\sigma}(\emptyset)\gamma(\emptyset)e_{[\emptyset]}
+ \hat{h}_{\sigma}(\{(u, u_1)\})\gamma({(u, u_1)})e_{[\{(u, u_1)\}]}
+ \hat{h}_{\sigma}(\{(u, u_2)\})\gamma({(u, u_2)})e_{[\{(u, u_2)\}]}\]\[
+ \hat{h}_{\sigma}(\{(u, u_3)\})\gamma({(u, u_3)})e_{[\{(u, u_3)\}]}
+ \hat{h}_{\sigma}(\{(u, u_1), (u, u_2)\}) \gamma(\{(u, u_1), (u, u_2)\})e_{[\{(u, u_1), (u, u_2)\}]}\]\[
+ \hat{h}_{\sigma}(\{(u, u_1), (u, u_3)\}) \gamma({(u, u_1), (u, u_3)})e_{[\{(u, u_1), (u, u_3)\}]}
+ \hat{h}_{\sigma}(\{(u, u_2), (u, u_3)\}) \gamma({(u, u_2), (u, u_3)})e_{[\{(u, u_2), (u, u_3)\}]}\]\[
+ \hat{h}_{\sigma}(\{(u, u_1), (u, u_2), (u, u_3)\}) \gamma(\{(u, u_1), (u, u_2), (u, u_3)\}])e_{[\{(u, u_1), (u, u_2), (u, u_3)\}]}\]\[ =
\frac{1}{8}e_{[\emptyset]}
+ \frac{1}{8}(-1)^{b_{u_1} \oplus \bar{b}_{u_1}}e_{[\{(u, u_1)\}]}
+ \frac{1}{8}(-1)^{b_{u_2} \oplus \bar{b}_{u_2}}e_{[\{(u, u_2)\}]}\]\[
+ \frac{1}{8}(-1)^{b_{u_3} \oplus \bar{b}_{u_3}}e_{[\{(u, u_3)\}]}
+ \frac{1}{8}(-1)^{b_{u_1} \oplus \bar{b}_{u_1} \oplus b_{u_2} \oplus \bar{b}_{u_2}}e_{[\{(u, u_1), (u, u_2)\}]}
+ \frac{1}{8}(-1)^{b_{u_1} \oplus \bar{b}_{u_1} \oplus b_{u_3} \oplus \bar{b}_{u_3}}e_{[\{(u, u_1), (u, u_3)\}]}\]\[
+ \frac{1}{8}(-1)^{b_{u_2} \oplus \bar{b}_{u_2} \oplus b_{u_3} \oplus \bar{b}_{u_3}}e_{[\{(u, u_2), (u, u_3)\}]}
+ \frac{1}{8}(-1)^{b_{u_1} \oplus \bar{b}_{u_1} \oplus b_{u_2} \oplus \bar{b}_{u_2} \oplus b_{u_3} \oplus \bar{b}_{u_3}}e_{[\{(u, u_1), (u, u_2), (u, u_3)\}]}\]\[ =
\frac{1}{4} e_{[\emptyset]} + (-1)^{b_{u_1} \oplus \bar{b}_{u_1}} \frac{1}{4} e_{[(u, u_1)]} + (-1)^{b_{u_2} \oplus \bar{b}_{u_2}}
\frac{1}{4} e_{[(u, u_2)]} + (-1)^{b_{u_3 \oplus \bar{b}_{u_3}}}\frac{1}{4} e_{[(u, u_3)]}\]
\end{itemize}
The last inequality comes from the fact that $b_{u_1} \oplus \bar{b}_{u_1} \oplus b_{u_2} \oplus \bar{b}_{u_2} \oplus b_{u_3} \oplus \bar{b}_{u_3} = 0$,
and that ${[\{(u, u_3)\}]} = {[\{(u, u_1), (u, u_2)\}]}$, ${[\{(u, u_1)\}]} = {[\{(u, u_2), (u, u_3)\}]}$, ${[\{(u, u_2)\}]} = {[\{(u, u_1), (u, u_3)\}]}$,\\
and $[\emptyset] = [\{(u, u_1), (u, u_2), (u, u_3)\}]$. 

\end{document}